\numberwithin{equation}{section}
	\theoremstyle{plain}
	\newtheorem{theorem}{Theorem}
	\numberwithin{theorem}{section}
	\newtheorem{lemma}[theorem]{Lemma}       	
	\newtheorem{proposition}[theorem]{Proposition}
	\newtheorem{corollary}[theorem]{Corollary}
	\theoremstyle{definition}
	\newtheorem{example}[theorem]{Example}
	\newtheorem{remark}[theorem]{Remark}
	\newtheorem{assumption}[theorem]{Assumption}
\newcommand{\<}{\langle}
\renewcommand{\>}{\rangle}
\renewcommand{\(}{\left(}
\renewcommand{\)}{\right)}
\renewcommand{\[}{\left[}
\renewcommand{\]}{\right]}
\newcommand\Cb{\mathds{C}}
\newcommand\Eb{\mathds{E}}
\newcommand\Fb{\mathds{F}}
\newcommand\Gb{\mathds{G}}
\newcommand\Hb{\mathds{H}}
\newcommand\Pb{\mathds{P}}
\newcommand\Rb{\mathds{R}}
\newcommand\Nb{\mathds{N}}
\newcommand\Fc{\mathscr{F}}
\newcommand\Gc{\mathscr{G}}
\newcommand\Hc{\mathscr{H}}
\newcommand\Nc{\mathscr{N}}
\newcommand\Oc{\mathscr{O}}
\newcommand\eps{\varepsilon}
\newcommand\om{\omega}
\newcommand\Om{\Omega}
\newcommand\sig{\sigma}
\newcommand\Lam{\Lambda}
\newcommand\Gam{\Gamma}
\newcommand\lam{\lambda}
\newcommand\del{\delta}
\newcommand\Del{\Delta}
\newcommand\varphih{\widehat{\varphi}}
\newcommand\Nt{\widetilde{N}}
\newcommand{\co}{\eta}
\newcommand{\indicator}{\mathds{1}}
\renewcommand\d{\partial}
\newcommand{\ind}{\perp \! \! \! \perp}
\newcommand\ii{\mathtt{i}}
\newcommand\dd{\mathrm{d}}
\newcommand\ee{\mathrm{e}}
\renewcommand\Re{\operatorname{Re}}
\renewcommand\Im{\operatorname{Im}}
\begin{document}

\title{Robust Replication of Volatility and Hybrid Derivatives on Jump Diffusions}

\author{
Peter Carr
\thanks{Department of Finance and Risk Engineering, NYU Tandon. \textbf{e-mail}: \url{petercarr@nyu.edu}
}
\and
Roger Lee
\thanks{
Department of Mathematics, University of Chicago.  \textbf{e-mail}: \url{rogerlee@math.uchicago.edu }
}
\and
Matthew Lorig
\thanks{Department of Applied Mathematics, University of Washington.  \textbf{e-mail}: \url{mlorig@uw.edu}}
}

\date{This version: \today}

\maketitle

\begin{abstract}
We price and replicate a variety of claims written on the log price $X$ and quadratic variation $[X]$ of a risky asset, modeled as a positive semimartingale, subject to stochastic volatility and jumps.  The pricing and hedging formulas do not depend on the dynamics of volatility process,
aside from integrability and independence assumptions; in particular, the volatility process may be non-Markovian and exhibit jumps of unknown distribution.  The jump risk may be driven by any finite activity Poisson random measure with bounded jump sizes.  As hedging instruments, we use the underlying risky asset, a zero-coupon bond, and European calls and puts with the same maturity as the claim to be hedged.  Examples of contracts that we price include variance swaps, volatility swaps, a claim that pays the realized Sharpe ratio, and a call on a leveraged exchange traded fund.
\end{abstract}

\noindent
\textbf{Key words}: path-dependent claims, quadratic variation, jumps, volatility swap, variance swap, realized Sharpe ratio, LETF.

%
%

\section{Introduction}
\label{sec:intro}
Consider an underlying risky asset, which exhibits both stochastic volatility and independent jumps.  In this setting, we show how to value claims on the $\log$ price of the asset and its quadratic variation relative to vanilla European puts and calls.  
{Under an additional assumption that jump sizes of the $\log$ price of the risky asset are restricted to a discrete finite set}, we show how to replicate claims on the  $\log$ price of the asset and its quadratic variation by dynamically trading zero-coupon bonds, shares of the underlying and a portfolio of European puts and calls.
\par
The class of models we consider in this paper is semi-parametric in a sense we now describe.  The distribution and arrival rate of jumps of the risky asset must be specified parametrically.  However, we do not specify a particular volatility process.  Rather, we simply require that the volatility process be an adapted righ-continuous process that evolves independently of the Brownian motion and Poisson random measure that drive the price process of the risky asset.  In particular, the volatility process may be non-Markovian and it may experience jumps.  Because we need not specify a particular volatility process, our pricing formula and replication strategies are robust to misspecification of the volatility process.
\par
{This paper is the updated version of the working paper \cite{rrvd}, which showed how to price and replicate claims on the quadratic variation of the $\log$ price of a risky asset \emph{without} jumps.}  That work was  extended in \cite{lorig-carr-lee-2} where the authors show how to value and replicate a variety of barrier-style claims on the $\log$ price and quadratic variation of a risky asset \emph{without} jumps.  In both papers, the underlying is assumed to have continuous sample paths and an independent volatility-driving process.  These assumptions imply a symmetric model-induced implied volatility smile.  Symmetric smiles are observed in certain markets (e.g., FX), but generally are not observed for options on equity, where smiles typically exhibit downward sloping at-the-money skews.
\par
Matching the skew of implied volatility is important both for pricing and hedging, and there are a number of ways this can be achieved.
One method of matching skew is to use Dupire's formula \cite{dupire1994pricing} to find the local volatility model that is consistent the market's quoted call and put prices.  Another means of matching skew 
{is to consider a stochastic volatility model such Heston \cite{heston1993} or SABR \cite{sabr}.  In these models, the correlation between the $\log$ price and volatility processes can be adjusted in order to match the observed implied volatility skew. A third means of matching the skew is the approach taken in this paper:}
to consider models that allow the underlying risky asset to experience jumps; asymmetric jumps induce asymmetric smiles.
While both local volatility and jump models can match quoted option prices, the corresponding delta hedges differ significantly.   
The delta computed from the local volatility model typically falls below the Black-Scholes delta (as computed using a given option's implied volatility), whereas the delta computed from the model with jumps typically falls above the Black-Scholes delta.  As, empirically, delta is above the Black-Scholes delta (for options on SPX),
this is motivation for matching the skew of implied volatility with jumps rather than local volatility.
{Another reason for considering models with jumps is that, consistent with empirical observations, these models induce an explosion of the at-the-money skew as time to maturity approaches zero.  By contrast, the implied volatility skews induced by stochastic volatility models such as Heston and SABR remain bounded as time to maturity approaches zero.}
\par
The rest of this paper proceeds as follows.  In Section \ref{sec:model} we describe a market for a risky asset and state our modeling assumptions.  In Section \ref{sec:pricing} we show how to price power-exponential-style claims on $\log$ price and its quadratic variation and in Section \ref{sec:replication} we show these claims can be replicated.  Lastly, in Section \ref{sec:example}, we price a variety of claims that do not fall into the power-exponential category.

%
%

\section{Model and assumptions}
\label{sec:model}
We fix a finite time horizon $T < \infty$ and consider a frictionless market, defined on a filtered probability space $(\Om, \Fc, \Fb, \Pb)$ satisfying the usual conditions, such that the prices of all assets are martingales with respect to $(\Fb,\Pb)$.  The probability measure $\Pb$ represents the market's chosen pricing measure and the filtration $\Fb = (\Fc_t)_{0 \leq t \leq T}$ represents this history of the market.

Assume $\Fb = \Gb \vee \Hb$ where $\Gb=(\Gc_t)_{0 \leq t \leq T}$ and $\Hb=(\Hc_t)_{0 \leq t \leq T}$ are independent filtrations,
let $W$ be a $\Gb$-Brownian motion, let $\sig$ be a $\Gb$-adapted right-continuous process independent of $W$, and let $N$ be a Poisson random measure
with respect to $\Hb$, with intensity measure $\nu(\dd z)\dd u$ for some L\'evy measure $\nu$.
  
Let $B_t$ be the price of a zero-coupon bond paying one unit of currency at time $T$.  Assuming zero interest rates, or that all prices are expressed as $T$-forward prices, we have $B_t = 1$ for all $t \in [0,T]$.
Let $S_t$ be the price of a risky asset, which pays no dividends.  Suppose $S$ is strictly positive and has dynamics of the form
\begin{align}
\dd S_t
	&=	\sig_t S_t \dd W_t + \int_\Rb (\ee^z - 1) S_{t-} \Nt(\dd t, \dd z) , &
\Nt(\dd t, \dd z)
	&=	N(\dd t, \dd z) - \nu(\dd z) \dd t ,
\end{align}
where $W$ is a Brownian motion and $\Nt$ is the compensated Poisson random measure with respect to $(\Fb,\Pb)$.  
{We refer the reader to \cite[Ch. 1]{oksendal2} for an overview of L\'evy-It\^o processes.}
We will not specify dynamics for the \emph{volatility process} $\sig$.  Note that $\sig$ may be non-Markovian and may experience jumps.  However, we have required that $\sig$ evolve independently of $W$ and $N$.
For simplicity, we further assume there exist constants $b, c < \infty$ such that
\begin{align}
\int_0^T \sig_t^2 \dd t
	&< b , &
\nu(\Rb)
	&< \infty , &
\nu( |z| > c )
	&=	0 . \label{eq:assumption}
\end{align}
For certain claims, conditions \eqref{eq:assumption} can be relaxed, {as described in \cite[Section 8]{rrvd}}.  However, our aim is not to provide here the most general conditions under which our pricing and hedging methodology can be applied.  Rather, we aim to provide simple conditions, which allow us to clearly illustrate our pricing and hedging methods without complicating the presentation with numerous technicalities.
\par
The $\log$ price of the risky asset $X_t := \log S_t$ therefore has dynamics
\begin{align}
\dd X_t
	&=	- \tfrac{1}{2} \sig_t^2 \dd t + \sig_t \dd W_t - \int_\Rb (\ee^z - 1 - z) \nu(\dd z) \dd t + \int_\Rb z \Nt(\dd t, \dd z) . \label{eq:dX}
\end{align}
\par
Let $P_t(K)$ and $C_t(K)$ be the time-$t$ prices of, respectively, a European put and European call written on $S$, 
maturing at time $T$ with strike $K$.  Under the assumptions above, 
\begin{align}
P_t(K)
	&=	\Eb_t (K - S_T)^+ , &
C_t(K)
	&=	\Eb_t (S_T - K)^+ , &
t
	&\in [0,T] , &
K
	&\geq 0 ,
\end{align}
where the notation $\Eb_t \, \, \cdot \, := \Eb[ \, \cdot \, | \Fc_t ]$ denotes conditional expectation.
As $S = \ee^X$, we may refer to claims written on $S$ or $X$ interchangeably, with the understanding that these are the same thing.
Our payoff decompositions will assume a European put or call trades at every strike $K >0$.  As \cite{breeden} show, this assumption is equivalent to knowing the distribution of $S_T$ under $\Pb$.  Additionally,  \cite{carrmadan1998} show that this assumption allows general $T$-expiry European claims on $S_T$ to be perfectly replicated with a static portfolio of bonds, puts, and calls; for general function $f$ that can be expressed as the difference of convex functions, the resulting pricing formula, under integrability conditions, is
\begin{align}
\Eb_t f(S_T)
	&=		f(S_t) B_t + \int_0^{S_t}  f''(K) P_t(K) \dd K + \int_{S_t}^\infty f''(K) C_t(K) \dd K, \label{eq:f.call.put}
\end{align}
where $f'$ is the left-derivative of $f$ and $f''$ is the second derivative, which exists as a generalized function.
While in reality calls and puts trade at only finitely many strikes, this can be addressed following techniques described in \cite{leung-lorig}, who show how to optimally adjust static hedges when calls and puts are traded at only discrete strikes in a finite interval.

\begin{remark}[Limitations of our modeling framework]
\label{rmk:limitations}
{Our modeling framework has certain limitations, which we describe here.  First, because we have assumed that $\sig$ evolves independently of $W$ and $N$, the class of models we consider cannot capture {correlation between instantaneous volatility and price}.  Nevertheless, the errors that would result from using the pricing and replication  strategies developed in this paper in a setting in which $\sig$ and $W$ are correlated can, to an extent, be minimized using a \textit{correlation immunization strategy}, which is described in \cite[Section 4]{rrvd}.  Extensive Monte Carlo testing of the correlation immunization strategy have been carried out \cite{lorig-lin}.
Second, the assumptions in \eqref{eq:assumption} exclude most traditional stochastic volatility models and exponential L\'evy models because the former do not typically have bounded integrated variance and the latter do not typically have bounded jump sizes.  However, {this assumption can be relaxed, as discussed after \eqref{eq:assumption}. Moreover,} our aim is not to consider a class of models that includes all other models.  Rather, our aim is to consider a class of models that captures the dynamics of the market, and we are not aware of any empirical evidence that 
{the market is better described by a traditional SV model than by SV dynamics in which integrated variance is capped at, for instance, $10^{100}$}.
}
\end{remark}

\begin{remark}[Relation to other work]
\label{rmk:previous-work}
{The present paper {\emph{initiated}} a line of work in the general area of robust pricing and replication of claims on realized variance.
{An earlier version of this paper, the unpublished working paper \cite{rrvd}, developed pricing and replication strategies} for claims on $X$ and $[X]$ under an assumption that $X$ experiences no jumps.
In \cite{lorig-carr-lee-2}, the results of \cite{rrvd} are extended to knock-in, knock-out and rebate claims written on $X$ and $[X]$.
And in \cite{carr2011variance} and \cite{lorig-carr-lee-2}, variance swaps are robustly priced when $X$ is a time-changed L\'evy process and time-changed Markov process, respectively.}
\end{remark}

%
%

\section{Pricing power-exponential claims}
\label{sec:pricing}
Let $[X]$ denote the quadratic variation of the $X$ process.  By \eqref{eq:dX}, we have
\begin{align}
\dd [X]_t
	&=	\sig_t^2 \dd t + \int_\Rb z^2 N(\dd t, \dd z) .
\end{align}
This section will price and replicate the real and imaginary parts of a \emph{power-exponential claim}, which we define as any claim whose payoff has the form
\begin{align}
	&\text{Power-exponential Claim Payoff}:&
	&X_T^n [X]_T^m \ee^{\ii \om X_T + \ii \co [X]_T} ,
	&n,m \in \{0\} \cup \Nb , &
	&\om,\co  \in \Cb .
\end{align}
These power-exponential claims will be used as building blocks to construct more general claims.

\begin{remark}
The various processes and random variables discussed in this section and Section \ref{sec:replication} are $\Cb$-valued.  The pricing and hedging results given below should be understood to hold for the real and imaginary components.  For example, when we say ``the price of $Z$'' we mean ``the price of the real and imaginary parts of $Z$,'' and when we say ``to replicate $Z$'' we mean ``to replicate the real and imaginary parts of $Z$.''
\end{remark}

We have the decomposition
\begin{align}
X_t
	&=	X_t^c + X_t^j ,
\end{align}
where the dynamics of the continuous component $X^c$ and the jump component $X^j$ are given by
\begin{align}
\dd X_t^c
	&=	-\tfrac{1}{2} \sig_t^2 \dd t + \sig_t \dd W_t , &
\dd X_t^j
	&=	- \int_\Rb (\ee^z - 1 - z) \nu(\dd z) \dd t + \int_\Rb z \Nt(\dd t, \dd z) .
\end{align}
Likewise, the quadratic variation process $[X]$ also separates into a continuous component $[X^c]$ and an independent jump component $[X^j]$:
\begin{align}
[X]_t
	&=	[X^c]_t + [X^j]_t , &
\dd [X^c]_t
	&=	\sig_t^2 \dd t , &
\dd [X^j]_t
	&=	\int_\Rb z^2 N(\dd t, \dd z) ,
\end{align}
Proposition \ref{prp:char1} will relate the joint $\Fc_t$-conditional characteristic function of $(X_T,[X]_T)$ to the $\Fc_t$-conditional characteristic function of $X_T$.  Its proof will use the following lemma.

\begin{lemma}
\label{lem:E1=E2}
Define $u:\Cb^2 \to \Cb$ by either of the following:
\begin{align}
u(\om,\co )
	&:=	\ii \( - \tfrac{1}{2} \pm \sqrt{\tfrac{1}{4} - \om^2 - \ii \om + 2 \ii \co} \)
	=:	u_\pm(\om,\co ) . \label{eq:u}
\end{align}
Then for all $\om,\co \in \Cb$,  
\begin{align}
\Eb_t \ee^{ \ii \om (X_T^c - X_t^c) + \ii \co ([X^c]_T - [X^c]_t)}
	&=	\Eb_t \ee^{\ii u(\om,\co ) (X_T^c - X_t^c) } .  \label{eq:E1=E2}
\end{align}
\end{lemma}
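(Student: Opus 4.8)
The plan is to use the independence of $\sig$ and $W$ to condition on the realized volatility path, whereupon the continuous log-price increment becomes conditionally Gaussian and the whole statement collapses to an algebraic identity. Set $V := [X^c]_T - [X^c]_t = \int_t^T \sig_s^2 \, \dd s$, which is a functional of the volatility path alone and satisfies $0 \le V \le b$ by \eqref{eq:assumption}. First I would condition on $\Fc_t$ together with the entire path $(\sig_s)_{0 \le s \le T}$ and use the tower property to write $\Eb_t[\,\cdot\,]$ as the $\Fc_t$-expectation of this finer conditional expectation.

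Under that finer conditioning the integrand in $X_T^c - X_t^c = -\tfrac12 V + \int_t^T \sig_s \, \dd W_s$ is effectively deterministic, and since $\sig \ind W$ the increment $\int_t^T \sig_s \, \dd W_s$ is a centered Gaussian with variance $V$; thus $X_T^c - X_t^c$ is conditionally $\mathcal{N}(-\tfrac12 V, V)$. The key step is then to evaluate, for complex argument $\alpha \in \Cb$, the conditional transform $\exp(-\tfrac12 \ii\alpha V - \tfrac12 \alpha^2 V)$, obtained from the real Gaussian characteristic function by analytic continuation (the Gaussian transform is entire); here the bound $V \le b$ guarantees finiteness of all expectations and licenses Fubini in the outer $\Fc_t$-average. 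Taking $\alpha = \om$ and carrying along the factor $\ee^{\ii \co V}$, the left side of \eqref{eq:E1=E2} becomes $\Eb_t \exp(V(\ii\co - \tfrac12\ii\om - \tfrac12\om^2))$, while $\alpha = u$ turns the right side into $\Eb_t \exp(V(-\tfrac12 \ii u - \tfrac12 u^2))$.

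Matching the two exponent coefficients for every admissible $V$ reduces the lemma to checking that $u = u_\pm(\om,\co)$ solves
\begin{align}
u^2 + \ii u = \om^2 + \ii\om - 2\ii\co . \label{eq:uquad}
\end{align}
This I would verify by direct substitution of \eqref{eq:u}: writing $R := \sqrt{\tfrac14 - \om^2 - \ii\om + 2\ii\co}$ so that $u = \ii(-\tfrac12 \pm R)$, one finds $u^2 + \ii u = \tfrac14 - R^2$, and the $\pm R$ cross terms cancel, which both confirms \eqref{eq:uquad} and explains why either sign in \eqref{eq:u} is admissible. I expect the only genuinely delicate part to be the conditional-Gaussianity step and the extension of the Gaussian transform to complex $\om, \co, u$; everything downstream is the routine algebra in \eqref{eq:uquad}, and the hypothesis $V \le b$ is exactly what is needed to keep all the integrals finite.
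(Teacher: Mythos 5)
Your proof is correct and takes essentially the same route as the paper's: condition on the volatility path (the paper conditions on $\Fc_T^\sig$, the sigma-algebra it generates), use the conditional Gaussianity $X_T^c - X_t^c \,|\, \Fc_T^\sig \sim \Nc\bigl(-\tfrac12 V, V\bigr)$ with $V = [X^c]_T - [X^c]_t$ to turn both sides into $\Fc_t$-expectations of exponentials in $V$, and reduce the claim to the algebraic identity $u^2 + \ii u = \om^2 + \ii \om - 2\ii\co$, which is exactly how the definition \eqref{eq:u} of $u_\pm$ is used in the paper. Your explicit verification of that quadratic identity, and your remarks on the entire extension of the Gaussian transform to complex arguments and on the role of $V \le b$ for integrability, are details the paper leaves implicit, but the argument is the same.
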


\begin{proof}
See Appendix \ref{sec:E1=E2}
\end{proof}

\begin{proposition}
\label{prp:char1}
Define $\psi: \Cb^2 \to \Cb$ by
\begin{align}
\psi(\om,\co )
	&:=	\int_\Rb \Big( \ee^{\ii \om z + \ii \co z^2} - 1 - \ii \om ( \ee^z - 1 ) \Big) \nu(\dd z). \label{eq:psi}
\end{align}
Then $(X_T,[X]_T)$ has $\Fc_t$-conditional joint characteristic function
\begin{align}
\Eb_t \ee^{\ii \om X_T  + \ii \co[X]_T }
	&=	\frac{ \ee^{(T-t)\psi(\om,\co ) + \ii (\om - u(\om,\co ))X_t + \ii \co [X]_t}}{\ee^{(T-t)\psi(u(\om,\co ),0)}} \Eb_t \ee^{\ii u(\om,\co ) X_T}, \label{eq:char1}
\end{align}
where $u: \Cb^2 \to \Cb$ is defined in \eqref{eq:u}.
\end{proposition}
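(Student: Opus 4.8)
The plan is to exploit the conditional independence of the continuous and jump components of $X$. First I would strip off the $\Fc_t$-measurable parts, writing $X_T = X_t + (X_T - X_t)$ and likewise for $[X]$, so that
\begin{align*}
\Eb_t \ee^{\ii \om X_T + \ii \co [X]_T}
	&= \ee^{\ii \om X_t + \ii \co [X]_t}\, \Eb_t \ee^{\ii \om (X_T - X_t) + \ii \co ([X]_T - [X]_t)} .
\end{align*}
Using the splittings $X_T - X_t = (X_T^c - X_t^c) + (X_T^j - X_t^j)$ and $[X]_T - [X]_t = ([X^c]_T - [X^c]_t) + ([X^j]_T - [X^j]_t)$, I would then factor the remaining conditional expectation into a continuous piece and a jump piece. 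The justification rests on the standing assumption $\Fb = \Gb \vee \Hb$ with $\Gb, \Hb$ independent: the continuous increments are $\Gb$-measurable and the jump increments are $\Hb$-measurable, hence conditionally independent given $\Fc_t = \Gc_t \vee \Hc_t$; moreover, since the Poisson random measure has independent increments, the jump increments are independent of $\Fc_t$ altogether, so their $\Fc_t$-conditional expectation reduces to an unconditional one. This gives
\begin{align*}
\Eb_t \ee^{\ii \om (X_T - X_t) + \ii \co ([X]_T - [X]_t)}
	&= \Eb_t \ee^{\ii \om (X_T^c - X_t^c) + \ii \co ([X^c]_T - [X^c]_t)} \cdot \Eb \, \ee^{\ii \om (X_T^j - X_t^j) + \ii \co ([X^j]_T - [X^j]_t)} .
\end{align*}

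Next I would evaluate the two factors. For the jump factor, $(X^j, [X^j])$ is a finite-activity two-dimensional L\'evy process whose jumps are $(z, z^2)$ with intensity $\nu(\dd z)$ and whose drift, after rewriting $\int_\Rb z \Nt$ in uncompensated form, is $\bigl(-\int_\Rb (\ee^z - 1) \nu(\dd z),\, 0\bigr)$. The L\'evy--Khintchine formula then yields exactly $\Eb\, \ee^{\ii \om (X_T^j - X_t^j) + \ii \co ([X^j]_T - [X^j]_t)} = \ee^{(T-t)\psi(\om,\co)}$, with $\psi$ as in \eqref{eq:psi}, the assumptions \eqref{eq:assumption} ensuring finiteness. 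For the continuous factor, Lemma \ref{lem:E1=E2} collapses the two-parameter exponent to a single parameter, giving $\Eb_t \ee^{\ii u(\om,\co) (X_T^c - X_t^c)}$.

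The last step is to rewrite this continuous factor in terms of $\Eb_t \ee^{\ii u(\om,\co) X_T}$, which is the quantity appearing on the right-hand side of \eqref{eq:char1}. Applying the same strip-and-factor argument to $\Eb_t \ee^{\ii u X_T}$ (now with the single parameter $u := u(\om,\co)$ and with the quadratic-variation slot set to $0$), I would obtain $\Eb_t \ee^{\ii u X_T} = \ee^{\ii u X_t}\, \Eb_t \ee^{\ii u (X_T^c - X_t^c)} \cdot \ee^{(T-t)\psi(u,0)}$, where the jump factor is evaluated by the same L\'evy computation with $\co = 0$. Solving for the continuous factor and substituting it, together with the jump factor $\ee^{(T-t)\psi(\om,\co)}$, back into the product, then collecting all $\Fc_t$-measurable exponentials, produces \eqref{eq:char1}.

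I expect the main obstacle to be the rigorous justification of the factorization into independent continuous and jump pieces — specifically, verifying that conditioning on $\Fc_t = \Gc_t \vee \Hc_t$ separates the $\Gb$-driven continuous increments from the $\Hb$-driven jump increments, and that the latter shed their conditioning entirely by independence of increments. The remaining ingredients — the L\'evy--Khintchine evaluation of $\psi$ and the algebraic bookkeeping of the $\Fc_t$-measurable factors — are routine given Lemma \ref{lem:E1=E2}.
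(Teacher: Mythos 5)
Your proposal is correct and follows essentially the same route as the paper's own proof: split $X$ and $[X]$ into continuous and jump parts, factor the conditional expectation using the independence of $\Gb$ and $\Hb$, evaluate the jump factors by L\'evy--Khintchine applied to the two-dimensional L\'evy process $(X^j,[X^j])$, and collapse the continuous factor via Lemma \ref{lem:E1=E2}. The only difference is bookkeeping — you substitute forward and collect the $\Fc_t$-measurable exponentials, while the paper writes the target as $\Eb_t \ee^{\ii u(X_T-X_t)}$ times a ratio of jump exponentials and rearranges — which is the same algebra.
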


\begin{proof}
See Appendix \ref{sec:char1}.
\end{proof}

\begin{corollary}
\label{cor:pow-exp}
Fix $\om,\co  \in \Cb$ and $n,m \in \{0\} \cup \Nb$.
Assume $\frac{1}{4} - \ii \om + 2 \ii \co - \om^2 \neq 0$.
Then
\begin{align}
&\Eb_t X_T^n [X]_T^m \ee^{\ii \om X_T  + \ii \co [X]_T } \\
	&=	\Eb_t \sum_{j=0}^n \sum_{k=0}^m \Bigg(
				\binom{n}{j} \binom{m}{k}
				(-\ii \d_\om)^j (-\ii \d_\co)^k \frac{ \ee^{(T-t)\psi(\om,\co ) + \ii (\om - u(\om,\co ))X_t + \ii \co [X]_t}}{\ee^{(T-t)\psi(u(\om,\co ),0)}}
			\cdot (-\ii \d_\om)^{n-j} (-\ii \d_\co)^{m-k} \ee^{\ii u(\om,\co ) X_T} \Bigg). \label{eq:name}
\end{align}
where $u$ and $\psi$ are defined in \eqref{eq:u} and \eqref{eq:psi}, respectively.
\end{corollary}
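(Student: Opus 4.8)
The plan is to derive the corollary directly from Proposition \ref{prp:char1} by differentiating the joint characteristic function with respect to the transform variables $\om$ and $\co$. The key observation is that the payoff factors $X_T^n$ and $[X]_T^m$ can be produced by applying the differential operators $(-\ii\d_\om)^n$ and $(-\ii\d_\co)^m$ to the exponential $\ee^{\ii\om X_T+\ii\co[X]_T}$, since $(-\ii\d_\om)\ee^{\ii\om X_T}=X_T\ee^{\ii\om X_T}$ and similarly for $\co$ and $[X]_T$. Thus I would write
\begin{align}
\Eb_t X_T^n [X]_T^m \ee^{\ii \om X_T + \ii \co [X]_T}
	&= (-\ii \d_\om)^n (-\ii \d_\co)^m \, \Eb_t \ee^{\ii \om X_T + \ii \co [X]_T},
\end{align}
assuming the derivatives may be interchanged with the conditional expectation. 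Substituting the right-hand side of \eqref{eq:char1} then reduces the problem to computing $(-\ii\d_\om)^n(-\ii\d_\co)^m$ of a product of the prefactor $G(\om,\co):=\ee^{(T-t)\psi(\om,\co)+\ii(\om-u(\om,\co))X_t+\ii\co[X]_t}/\ee^{(T-t)\psi(u(\om,\co),0)}$ and the term $H(\om,\co):=\Eb_t\ee^{\ii u(\om,\co)X_T}$, after pulling the $\Eb_t$ back outside.

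The second step is purely combinatorial: apply the Leibniz product rule to the product $G\cdot H$. Since the operators $\d_\om$ and $\d_\co$ commute and each acts as an ordinary derivative, the bivariate Leibniz rule gives the double binomial sum
\begin{align}
(-\ii\d_\om)^n(-\ii\d_\co)^m (GH)
	&= \sum_{j=0}^n \sum_{k=0}^m \binom{n}{j}\binom{m}{k}
		\Big((-\ii\d_\om)^j(-\ii\d_\co)^k G\Big)\Big((-\ii\d_\om)^{n-j}(-\ii\d_\co)^{m-k} H\Big),
\end{align}
which matches \eqref{eq:name} term by term once $G$ and $H$ are read off from \eqref{eq:char1}. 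The factors of $-\ii$ distribute correctly because $(-\ii\d_\om)^n=(-\ii)^n\d_\om^n$ and $(-\ii)^j(-\ii)^{n-j}=(-\ii)^n$, so no spurious constants appear. The nonvanishing hypothesis $\tfrac14-\ii\om+2\ii\co-\om^2\neq0$ guarantees that the square root defining $u$ in \eqref{eq:u} has nonzero argument, so $u$ is holomorphic in a neighborhood of $(\om,\co)$ and all the indicated partial derivatives of $G$ and $H$ exist.

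The main obstacle is justifying the interchange of differentiation and conditional expectation, i.e. passing $(-\ii\d_\om)^n(-\ii\d_\co)^m$ through $\Eb_t$ in both the initial step and when reassembling the final expression. This requires a dominated-convergence or differentiation-under-the-integral argument, and here the boundedness assumptions in \eqref{eq:assumption} are essential: the bound $\int_0^T\sig_t^2\,\dd t<b$ controls the continuous contribution, while $\nu(\Rb)<\infty$ together with $\nu(|z|>c)=0$ ensures $\psi(\om,\co)$ and its $\om,\co$-derivatives are finite and locally bounded, so the integrand and its derivatives admit integrable dominating functions on a complex neighborhood of the fixed $(\om,\co)$. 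Once this analytic regularity is in hand, the remaining computation is the routine Leibniz expansion described above, and the result follows by reading off $G$ and $H$ from \eqref{eq:char1}.
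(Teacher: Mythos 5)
Your proposal is correct and follows essentially the same route as the paper's own proof: write $\Eb_t X_T^n[X]_T^m\ee^{\ii\om X_T+\ii\co[X]_T}=(-\ii\d_\om)^n(-\ii\d_\co)^m\,\Eb_t\ee^{\ii\om X_T+\ii\co[X]_T}$, substitute \eqref{eq:char1}, expand the product by the bivariate Leibniz rule, and justify the two interchanges of differentiation and expectation using \eqref{eq:assumption}. The paper's domination argument is just made slightly more explicit than yours, namely $|\d_\om^n\d_\co^m\ee^{\ii\om x+\ii\co v}|<c_1\ee^{c_1(|x|+|v|)}$ together with $\Eb_0\,c_1\ee^{c_1(|X_T|+[X]_T)}<\infty$, the latter following from the bounded integrated variance and bounded finite-activity jumps.
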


\begin{proof}
See Appendix \ref{sec:pow-exp}.
\end{proof}

Corollary \ref{cor:pow-exp} relates the price of a (path-dependent) power-exponential claim to the price of a (path-independent) European claim written on $X_T$.  Specifically, 
\begin{align}
\Eb_t X_T^n [X]_T^m \ee^{\ii \om X_T + \ii \co [X]_T}
	&=	\Eb_t g(X_T;X_t,[X]_t) ,
\end{align}
where the function $g(\,\cdot\,;X_t,[X]_t)$ is given by the right-hand side of \eqref{eq:name} (keep in mind that $X_t,[X]_t \in \Fc_t$).
In turn, the price $\Eb_t g(X_T;X_t,[X]_t)$ of the European claim can be related to value of a portfolio consisting of vanilla European puts and calls and zero-coupon bonds, which are market observables, by setting $f(S_T) = g(\log S_T;X_t,[X]_t)$ in \eqref{eq:f.call.put}.

\begin{remark}
We describe equations of the form $\Eb \varphi(X_T,[X]_T) = \Eb g(X_T)$ by saying that the function $g$ \emph{prices} the claim with payoff $\varphi(X_T,[X]_T)$.  For any given $\varphi$, the function $g$ will not be unique.  For example, the right-hand-side of \eqref{eq:name} will depend on whether we choose $u=u_+$ or $u=u_-$.
\end{remark}

In order to apply Corollary \ref{cor:pow-exp} to price a power-exponential claim, we require an expression for the L\'evy exponent $\psi$, which can be
computed explicitly for a variety of L\'evy measures $\nu$, such as the following:
\begin{align}
\text{Dirac sum}:&&
\nu
	&=	\sum_j \lam_j \del_{m_j}, \label{eq:nu.dirac} \\
\text{Uniform}:&&
\nu(\dd z)
	&=	\lam \indicator_{\{ m_1 < z < m_2 \}} \dd z ,
			\label{eq:nu.uniform} \\
\text{Trunc. Exp.}:&&
\nu(\dd z)
	&=	\lam \indicator_{\{|z| < m\}} \ee^{-\alpha|z|} \dd z ,
\end{align}
where $\lam, \lam_j, \alpha,m >0$ and $m_1<m_2$.  From \eqref{eq:psi}, we compute
\begin{align}
\text{Dirac sum}:&&
\psi(\om,\co )
	&=	 \sum_j \lam_j \Big( \ee^{\ii \om m_j + \ii \co m_j^2} - 1 - \ii \om ( \ee^{m_j} - 1 ) \Big) , \label{eq:psi.dirac} \\
\text{Uniform}:&&
\psi(\om,\co )
	&=	\lam \sqrt{\frac{\ii\pi}{4\co} } \ee^{-\frac{\ii \om ^2}{4 \co}}
			\( \text{erf}\(\frac{\co (2 \co m_1+\om )}{2 (-\ii \co)^{3/2}}\)-\text{erf}\(\frac{\co (2 \co m_2 +\om )}{2 (-\ii \co)^{3/2}}\) \) \\ &&&\quad
			+ \lam \Big( (\ii \om - 1 )(m_2 - m_1) - (m_2 - m_1) \Big) ,			\label{eq:psi.uniform} \\
\text{Trunc. Exp.}:&&
\psi(\om,\co )
	&=	\lam \sqrt{ \frac{\ii\pi}{4\co} } \ee^{\frac{\ii (\alpha +\ii \om )^2}{4 \co}} \(
			\text{erf}\(\frac{\alpha + \ii \om+2 \ii \co m  }{2 \sqrt{-\ii \co}}\)-\text{erf}\(\frac{\alpha + \ii \om }{2 \sqrt{-\ii \co}}\)
			\) \\ &&& \quad
			+ \lam \sqrt{ \frac{\ii\pi}{4 \co} } \ee^{\frac{\ii (\alpha -\ii \om )^2}{4 \co}} \(
			\text{erf}\(\frac{\alpha - \ii \om+2 \ii \co m  }{2 \sqrt{-\ii \co}}\)-\text{erf}\(\frac{\alpha - \ii \om }{2 \sqrt{-\ii \co}}\)
			\) \\ &&& \quad
			+ \frac{2 \lam \(\ee^{-\alpha m}-1\)}{\alpha }
			- \frac{2 \lam \ii \omega  \ee^{-\alpha m}}{\alpha  \(\alpha^2-1\)}
			\Big( \alpha^2 - \alpha^2 \cosh m+\ee^{\alpha m}-\alpha  \sinh m-1\Big),
\end{align}
where $\text{erf}$ denotes the \emph{error function} defined by $\text{erf}(x):= (2/\sqrt{\pi})\int_0^x  \ee^{-z^2/2} \dd z$.

\begin{example}[Variance Swap]
\label{eq:VarSwap}
Consider the floating leg of a (continuously monitored) variance swap, which pays $[X]_T$ to the long side at time $T$.
For simplicity, let $X_0 = 0$.  Then setting $(n,m,\om,\co ) = (0,1,0,0)$ in \eqref{eq:name} we obtain $\Eb [X]_T = \Eb g(X_T;0,0)$
where
\begin{align}
g(x;0,0)
	&= -2 x + T \Big( - 2 \< \ee^{\Del X} \> + \< \Del X^2 \> + 2 \< \Del X \>  + 2 \< 1 \> \Big)	, &
	&\text{(for $u=u_+$)}  \label{eq:g0.vs.1} \\
g(x;0,0)
	&=	2 x \ee^x + T \ee^x \Big( - 2 \< \Del X \ee^{\Del X} \> + 2 \< \ee^{\Del X} \> + \< \Del X^2 \> - 2 \< 1 \> \Big), &
	&\text{(for $u=u_-$)}   \label{eq:g0.vs.2}
\end{align}
where $\< f(\Del X) \> := \int_\Rb f(z)\nu(\dd z)$.
In Figure \ref{fig:VarSwap} we plot $g(\log S_T;0,0)$ as a function of $S_T$ for both $u_+$ and $u_-$ and for various jump distributions and intensities.
\end{example}

\begin{remark}
The function $g$ in \eqref{eq:g0.vs.1} and \eqref{eq:g0.vs.2} depends on the time to maturity $T$.  This is in contrast to the results of \cite{carr2011variance} where, in a time-changed L\'evy setting, the authors find that the variance swap has the same value as a European-style $\log$ contract, whose payoff function has no dependence on time-to-maturity.  As empirical evidence from \cite{carr2011variance} indicates the European-style payoff function that prices the variance swap \emph{does} depend on time to maturity, this is motivation to consider the models presented in present paper rather than those considered in \cite{carr2011variance}.
\end{remark}

%
%

\section{Replicating exponential claims}
\label{sec:replication}
Define a complex-valued \emph{self-financing portfolio}
with respect to a $\Cb^J$-valued semimartingale $\Upsilon$
to be a $\Cb^J$-valued locally bounded predictable process $\Xi$ such that
\begin{align}
\dd \Pi_t
	&=	\sum_j \Xi_{t}^{(j)} \dd \Upsilon_t^{(j)} 
&
\text{where }\Pi_t
	&:=	\sum_j \Xi_t^{(j)} \Upsilon_t^{(j)}.
\label{eq:complex}
\end{align}
In particular, if $\Xi^{(j)}$ and $\Upsilon^{(j)}$ are real-valued for all $j$, then expression \eqref{eq:complex} corresponds to the usual notion of a self-financing portfolio.  The dynamics of the real and imaginary parts of $\Pi$ are given by
\begin{align}
\dd (\Re \Pi_t)
	&=	\sum_j  (\Re \Xi_{t-}^{(j)}) \dd (\Re \Upsilon_t^{(j)}) - \sum_j (\Im \Xi_{t-}^{(j)}) \dd (\Im \Upsilon_t^{(j)}) , \label{eq:real}\\
\dd (\Im \Pi_t)
	&=	\sum_j (\Re \Xi_{t-}^{(j)}) \dd (\Im \Upsilon_t^{(j)}) + \sum_j (\Im \Xi_{t-}^{(j)}) \dd (\Re \Upsilon_t^{(j)}) . \label{eq:imag}
\end{align}
respectively.  Thus, expression \eqref{eq:complex} should be seen as a concise way to state both \eqref{eq:real} and \eqref{eq:imag}.

\begin{assumption}
Throughout Section \ref{sec:replication}, the constants $\om,\co  \in \Cb$ are \emph{fixed} and $u \equiv u(\om,\co )$ is given by \eqref{eq:u}.
\end{assumption}

At any time $t \leq T$, by \eqref{eq:char1}, the claim on the exponential payoff $\ee^{\ii \om X_T + \ii \co [X]_T}$ has value
\begin{align}
\Eb_t \ee^{\ii \om X_T + \ii \co [X]_T}
	&=	A_t Q_t^{(u)} ,
\end{align}
where we have defined
\begin{align}
A_t
	&:=	\ee^{\ii (\om-u) X_t + \ii \co [X]_t} \frac{ \ee^{(T-t)\psi(\om,\co )}}{\ee^{(T-t)\psi(u,0)}} , &
Q_t^{(q)}
	&:=	\Eb_t \ee^{\ii q X_T} , &
q
	&\in \Cb . \label{eq:AQ.def}
\end{align}
Theorem \ref{thm:hedge} will show that $A Q^{(u)}$ is the value process of a self-financing portfolio, which
 gives a trading strategy to replicate the exponential claim because 
 $A_T Q_T^{(u)} = \ee^{\ii \om X_T + \ii \co [X]_T}$.
\par
Theorem \ref{thm:hedge} uses Lemmas \ref{lem:Delta.Q} and \ref{lem:sym}, presented below,
and the standard notation 
\begin{align}
\Del H_t
	&:=	H_t - H_{t-} = H_t - \lim_{s \nearrow t} H_s ,
\end{align}
for the jump in $H$ at time $t$, where $H$ is any process with left limits.

\begin{lemma}
\label{lem:Delta.Q}
For any $q \in \Cb$, let $Y^{(q)}$ and $Z^{(q)}$ be (the c\`adl\`ag versions of) the martingales
\begin{align}
Y_t^{(q)}
	&:=	\Eb_t \ee^{\ii q X_T^c} , &
Z_t^{(q)}
	&:=	\Eb_t \ee^{\ii q X_T^j} , & 0\leq t\leq T\label{eq:YZ}
\end{align}
Then, under the assumptions of Section \ref{sec:model}, we have
\begin{align}
\Del A_t \Del Q_t^{(q)}
	&=	\Del A_t ( Y_{t-}^{(q)}  \Del Z_t^{(q)} ) , \label{eq:DeltaQ}\\
Y_{t-}^{(q)}  \Del Z_t^{(q)}
	&=	Q_{t-}^{(q)} \int_\Rb \Big( \ee^{\ii q z} - 1 \Big) N(\dd t , \dd z) . \label{eq:DeltaZ}
\end{align}
\end{lemma}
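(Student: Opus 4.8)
The plan is to establish the two displayed identities \eqref{eq:DeltaQ} and \eqref{eq:DeltaZ} separately, relying on the independence structure of Section \ref{sec:model} and the fact that $X^c$ has continuous paths while $X^j$ accounts for all jumps of $X$. The key structural observation is that, because $W$ and $\sig$ are independent of $N$ (equivalently, $\Gb \ind \Hb$ with $X^c$ being $\Gb$-measurable at $T$ and $X^j$ being $\Hb$-measurable at $T$), the joint conditional characteristic function of $(X_T^c, X_T^j)$ factors, so that $Q_t^{(q)} = \Eb_t \ee^{\ii q X_T} = \Eb_t \ee^{\ii q X_T^c}\, \Eb_t \ee^{\ii q X_T^j} = Y_t^{(q)} Z_t^{(q)}$ for every $t$. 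This factorization is the engine of the whole lemma.

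First I would prove \eqref{eq:DeltaZ}. Since $X^c$ has continuous sample paths, $Y^{(q)}$ is a continuous martingale, hence $\Del Y_t^{(q)} = 0$. Taking jumps in the product $Q^{(q)} = Y^{(q)} Z^{(q)}$ therefore gives $\Del Q_t^{(q)} = Y_{t-}^{(q)} \Del Z_t^{(q)}$, which already isolates the right object. To evaluate $Y_{t-}^{(q)} \Del Z_t^{(q)}$ explicitly, I would use that $X^j$ is a pure-jump L\'evy-type process driven by $N$, so at a jump time $t$ of size $z$ we have $\Del X_t^j = z$ and the martingale $Z^{(q)}$ jumps by an amount governed by the compensated characteristic exponent of $X^j$; concretely, writing $Z_t^{(q)} = \ee^{\ii q X_t^j} \, \Eb_t[\ee^{\ii q (X_T^j - X_t^j)}]$ and noting the second factor depends only on $T-t$ and is continuous in $t$ between jumps, one finds $\Del Z_t^{(q)}/Z_{t-}^{(q)} = \ee^{\ii q \Del X_t^j} - 1$. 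Multiplying through and recognizing $Q_{t-}^{(q)} = Y_{t-}^{(q)} Z_{t-}^{(q)}$, and then expressing the single jump at time $t$ through the random measure $N(\dd t, \dd z)$, yields exactly \eqref{eq:DeltaZ}.

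For \eqref{eq:DeltaQ}, I would again invoke the continuity of $X^c$: the process $A_t = \ee^{\ii(\om-u)X_t + \ii\co[X]_t}\, \ee^{(T-t)(\psi(\om,\co)-\psi(u,0))}$ jumps only when $X$ jumps, i.e. only when $X^j$ jumps, since the $X^c$ and $[X^c]$ contributions and the deterministic exponential factor are all continuous in $t$. Consequently $\Del A_t \neq 0$ only at jump times of $N$, where simultaneously $\Del Q_t^{(q)} = Y_{t-}^{(q)}\Del Z_t^{(q)}$ by the argument above. At any time where $A$ is continuous the left side vanishes and the identity is trivial; at a jump time the factor $\Del Q_t^{(q)}$ may be replaced by $Y_{t-}^{(q)}\Del Z_t^{(q)}$, giving \eqref{eq:DeltaQ}.

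The main obstacle I anticipate is the careful justification that $\Del Z_t^{(q)} = Z_{t-}^{(q)}(\ee^{\ii q \Del X_t^j}-1)$, i.e. that the conditional expectation factor $\Eb_t[\ee^{\ii q(X_T^j - X_t^j)}]$ contributes no jump. This requires that the ``remaining randomness'' $X_T^j - X_t^j$ have a conditional law depending on $t$ only through $T-t$ in a path-continuous way, which follows from the independence and stationary-increment (L\'evy) structure of $N$ together with the finite-activity, bounded-jump assumptions \eqref{eq:assumption}; the bounded jumps and finite $\nu(\Rb)$ guarantee the relevant exponential moments are finite so that $Y^{(q)}, Z^{(q)}, Q^{(q)}$ are genuine (square-integrable) martingales and the product rule for jumps applies without integrability caveats. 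I would treat this integrability verification as a short preliminary step before the main computation.
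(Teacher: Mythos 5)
Your factorization $Q^{(q)}=Y^{(q)}Z^{(q)}$, your computation of $\Del Z_t^{(q)}$, and your observation that all jumps of $A$ occur at jumps of $N$ all agree with the paper. But there is a genuine gap at the step where you assert that, because $X^c$ has continuous sample paths, $Y^{(q)}$ is a continuous martingale. That inference fails in this model. Conditioning on the path of $\sig$ and using normality gives
\begin{align}
Y_t^{(q)}
	&= \Eb_t\, \ee^{\ii q X_T^c}
	= \ee^{\ii q X_t^c}\; \Eb_t \exp\Big( -\tfrac{1}{2}\big(q^2 + \ii q\big) \int_t^T \sig_s^2 \,\dd s \Big) .
\end{align}
The first factor is continuous in $t$, but the second is a conditional expectation under $\Fb=\Gb\vee\Hb$, which carries information about $\sig$, and the paper only assumes $\sig$ is right-continuous and explicitly allows it to jump. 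If $\sig$ jumps at $t$ (say $\sig$ is a pure-jump process independent of $W$ and $N$), the conditional law of $\int_t^T\sig_s^2\,\dd s$ changes discontinuously and the second factor jumps; continuity of the path $X^c$ does not imply continuity of the conditional-expectation martingale $Y^{(q)}$. As a result, your route actually proves the stronger identity $\Del Q_t^{(q)} = Y_{t-}^{(q)}\Del Z_t^{(q)}$ for \emph{all} $t$, which is false: at a jump time of $Y^{(q)}$ that is not a jump time of $N$ (almost surely every jump time of $Y^{(q)}$ is of this kind), one has $\Del Q_t^{(q)} = Z_{t-}^{(q)}\Del Y_t^{(q)} \neq 0$ while $Y_{t-}^{(q)}\Del Z_t^{(q)} = 0$, since $Z_{t-}^{(q)}$ is a nonvanishing exponential. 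This is precisely why the lemma carries the factor $\Del A_t$ on both sides of \eqref{eq:DeltaQ}; if $Y^{(q)}$ were continuous, that factor would be superfluous.

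The repair needs not continuity of $Y^{(q)}$ but only that $Y^{(q)}$ and $A$ almost surely share no jump times, and this is how the paper argues. Since $\ee^{\ii q X_T^c}$ is $\Gc_T$-measurable and $\Gb\ind\Hb$, the martingale $Y^{(q)}$ is $\Gb$-adapted, whereas every jump of $A$ is a jump of $N$, which is $\Hb$-adapted. The jump times of the c\`adl\`ag process $Y^{(q)}$ form a countable (random, $\Gb$-measurable) set, and conditionally on $Y^{(q)}$ the measure $N$ is still Poisson by independence, so it assigns zero probability to jumping at any of those countably many times; by iterated expectations, $\Pb(\text{$N$ and $Y^{(q)}$ have a common jump time})=0$. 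Hence at every jump time of $A$ one has $\Del Y_t^{(q)}=0$, so $\Del Q_t^{(q)} = Y_{t-}^{(q)}\Del Z_t^{(q)}$ there, which yields \eqref{eq:DeltaQ}. The remainder of your argument is sound and matches the paper: $Z_t^{(q)} = \ee^{\ii q X_t^j}\,\ee^{(T-t)\psi(q,0)}$ with the second factor deterministic and continuous in $t$, so $\Del Z_t^{(q)} = Z_{t-}^{(q)}\int_\Rb (\ee^{\ii q z}-1)\,N(\dd t,\dd z)$, and multiplying by $Y_{t-}^{(q)}$ and using $Y_{t-}^{(q)}Z_{t-}^{(q)} = Q_{t-}^{(q)}$ gives \eqref{eq:DeltaZ}; your integrability remarks are also unproblematic under \eqref{eq:assumption}.
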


\begin{proof}
See Appendix \ref{sec:Delta.Q}.
\end{proof}

\begin{lemma}
\label{lem:sym}
For any $q \in \Cb$ and $t\in[0,T]$, define 
\begin{align}
R_t^{(q)}
	&=	\ee^{-\ii q X_t + (T-t)\psi(-\ii - q,0)} , \label{eq:R}
\end{align}
with $\psi$ given in \eqref{eq:psi}.
Then
\begin{align}
R_t^{(q)} Q_t^{(q)}
	&=	R_t^{(-\ii-q)} Q_t^{(-\ii-q)} , \label{eq:RQ=RQ}
\end{align}
with $Q^{(q)}$ given in \eqref{eq:AQ.def}.
\end{lemma}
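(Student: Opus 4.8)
The plan is to verify the identity \eqref{eq:RQ=RQ} by direct computation, exploiting the explicit form of the characteristic function of the jump component $X^j_T$ together with the independence structure of the model. First I would recall that $X_T = X_T^c + X_T^j$ with the two components independent (since $X^c$ is $\Gb$-measurable and $X^j$ is $\Hb$-measurable), so that by the tower property and conditioning,
\begin{align}
Q_t^{(q)}
	&=	\Eb_t \ee^{\ii q X_T}
	=	\Eb_t \ee^{\ii q X_T^c} \cdot \Eb_t \ee^{\ii q X_T^j}
	=	Y_t^{(q)} Z_t^{(q)},
\end{align}
using the notation from \eqref{eq:YZ}. The key point is that the proposed factor $R_t^{(q)} = \ee^{-\ii q X_t + (T-t)\psi(-\ii-q,0)}$ is designed so that the substitution $q \mapsto -\ii - q$ leaves the jump contribution unchanged while compensating the continuous contribution.

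Next I would compute $Z_t^{(q)}$ explicitly. Since $X^j$ is a compensated jump process with L\'evy measure $\nu$ and drift $-\int_\Rb(\ee^z - 1 - z)\nu(\dd z)\,\dd t$, its conditional characteristic function is the exponential of a L\'evy-It\^o exponent, and by \eqref{eq:psi} with $\co = 0$ we have precisely
\begin{align}
Z_t^{(q)}
	&=	\ee^{\ii q X_t^j + (T-t)\psi(q,0)}.
\end{align}
The heart of the argument is then the symmetry relation $\psi(q,0) - \ii q = \psi(-\ii - q, 0) - \ii(-\ii - q)$ for the L\'evy exponent, which I would establish by substituting $q \mapsto -\ii - q$ into the defining integral \eqref{eq:psi} at $\co = 0$ and checking that the integrand transforms correctly; the term $\ee^{\ii q z} \mapsto \ee^{z}\ee^{-\ii q z}$ and the compensator term $-\ii q(\ee^z - 1)$ conspire so that the two sides match. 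This algebraic cancellation inside the integral is the main obstacle, and it is where the specific structure of the drift compensation in \eqref{eq:dX} (the $\ee^z - 1$ rather than $z$ in the compensator) becomes essential.

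Finally I would assemble the pieces. Writing $Q_t^{(q)} = Y_t^{(q)} Z_t^{(q)}$ and using the explicit form of $Z_t^{(q)}$ above, the product $R_t^{(q)} Q_t^{(q)}$ simplifies to $Y_t^{(q)}\, \ee^{(T-t)[\psi(q,0) + \psi(-\ii-q,0)]}$ after the $\ee^{\pm\ii q X_t}$ and $\ee^{\ii q X_t^j}$ terms are combined with the continuous part $X_t^c = X_t - X_t^j$; here the symmetry of the exponent derived above shows the resulting expression is invariant under $q \mapsto -\ii - q$, provided the remaining continuous factor $Y_t^{(q)}$ also respects this symmetry. For that last step I would invoke the fact, established in the proof of Lemma \ref{lem:E1=E2}, that the continuous characteristic function satisfies an analogous put-call-type symmetry under $q \mapsto -\ii - q$ (this is the $\co = 0$ specialization of the $u_\pm$ relation, reflecting that $-\tfrac12 \pm \sqrt{\tfrac14 - q^2 - \ii q}$ maps $q$ and $-\ii - q$ to the same value). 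Combining these two symmetries yields \eqref{eq:RQ=RQ}.
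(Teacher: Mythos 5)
Your factorization $Q_t^{(q)} = Y_t^{(q)}Z_t^{(q)}$ and the explicit formula $Z_t^{(q)} = \ee^{\ii q X_t^j + (T-t)\psi(q,0)}$ are correct, and your overall architecture (independence, explicit jump exponent, symmetry of the continuous part) is the same as the paper's. However, the identity you call the heart of the argument, $\psi(q,0) - \ii q = \psi(-\ii-q,0) - \ii(-\ii-q)$, is false. Take $q=0$ and any $\nu \neq 0$: the left side is $\psi(0,0) = 0$, while the right side is $\psi(-\ii,0) - 1 = -1$, because $\psi(-\ii,0) = \int_\Rb \big( \ee^{z} - 1 - (\ee^z-1) \big)\nu(\dd z) = 0$ (this vanishing just expresses that $\ee^{X^j}$ is a martingale). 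For a general, in particular asymmetric, L\'evy measure the jump exponent admits \emph{no} such symmetry under $q \mapsto -\ii - q$; that is exactly why the lemma must introduce the corrective factor $R^{(q)}$, whose exponent deliberately contains $\psi(-\ii-q,0)$ rather than $\psi(q,0)$.

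Fortunately the false identity is also unnecessary, and your final paragraph contains the correct proof once two repairs are made. Carrying out the assembly carefully gives
\begin{align}
R_t^{(q)}Q_t^{(q)}
	&= \ee^{-\ii q X_t^c}\, Y_t^{(q)}\, \ee^{(T-t)[\psi(q,0)+\psi(-\ii-q,0)]}
	= \Big( \Eb_t \ee^{\ii q (X_T^c - X_t^c)} \Big)\, \ee^{(T-t)[\psi(q,0)+\psi(-\ii-q,0)]} .
\end{align}
Note the factor $\ee^{-\ii q X_t^c}$, which your expression drops; it matters, because $Y_t^{(q)}$ itself is \emph{not} invariant under $q \mapsto -\ii-q$ (the prefactor $\ee^{\ii q X_t^c}$ breaks the symmetry) --- only the forward characteristic function $\Eb_t \ee^{\ii q(X_T^c - X_t^c)}$ is. With this written down, the jump factor $\ee^{(T-t)[\psi(q,0)+\psi(-\ii-q,0)]}$ is invariant under $q \mapsto -\ii - q$ for a trivial reason: the map is an involution that merely swaps the two summands in the exponent, so no identity for $\psi$ alone is needed. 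The only substantive input is the invariance of the continuous factor, which you correctly obtain from Lemma \ref{lem:E1=E2} at $\co = 0$, or directly by conditioning on the volatility path, $\Eb_t \ee^{\ii q(X_T^c - X_t^c)} = \Eb_t \ee^{-\tfrac{1}{2} (q^2+\ii q)([X^c]_T - [X^c]_t)}$, together with the observation that $q^2 + \ii q$ is unchanged by $q \mapsto -\ii - q$. After these corrections your argument coincides with the paper's proof, which works with the ratio $\Eb_t \ee^{\ii q(X_T - X_t)} / \Eb_t \ee^{\ii q (X_T^j - X_t^j)}$ and rests on exactly this invariance of $q^2 + \ii q$.
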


\begin{proof}
See Appendix \ref{sec:sym}.
%
\end{proof}

\begin{theorem}
\label{thm:hedge}
Let $q \in \Cb$.  Define processes $\Del \Gam^{(u)} = (\Del \Gam_t^{(u)})_{0 \leq t \leq T}$ and $\Del \Om^{(q)} = (\Del \Om_t^{(q)})_{0 \leq t \leq T}$ by
\begin{align}
\Del \Gamma_t^{(u)}		
	&:=	Q_{t-}^{(u)} \Del A_t - \ii (\om-u) \frac{A_{t-}Q_{t-}^{(u)}}{S_{t-}} \Del S_t
			 + \Del A_t ( Y_{t-}^{(u)} \Del Z_t^{(u)} )  \\
	&=	A_{t-} Q_{t-}^{(u)} \int_\Rb \Big(
			\ee^{\ii \om z + \ii \co z^2} - \ee^{\ii u z} - \ii (\om - u)(\ee^z - 1)
			\Big) N(\dd t, \dd z) , \label{eq:DeltaGamma} \\
\Del \Om_t^{(q)}		
	&:=	Q_{t-}^{(q)} \Del R_t^{(q)} + \ii q \frac{R_{t-}^{(q)} Q_{t-}^{(q)}}{S_{t-}} \Del S_t
			+ \Del R_t^{(q)} ( Y_{t-}^{(q)} \Del Z_t^{(q)} ) \\
	&=	R_{t-}^{(q)} Q_{t-}^{(q)} \int_\Rb \Big(
			- \ee^{\ii q z}  + 1 + \ii q (\ee^z - 1)
			\Big) N(\dd t, \dd z) . \label{eq:DeltaOmega}
\end{align}
Let $(q_1, q_2, \ldots, q_m) \in \Cb^m$.
Suppose there exists an $m$-dimensional predictable process $H = (H_t)_{0 \leq t \leq T}$ with components $H^{(j)} = (H_t^{(j)})_{0 \leq t \leq T}$ satisfying
\begin{align}
0
	&=	\Del \Gam_t^{(u)} + \sum_{j=1}^m H_t^{(j)} \Big( \Del \Om_t^{(q_j)} - \Del \Om_t^{(-\ii-q_j)} \Big) , \label{eq:H}
\end{align}
Then 
\begin{align}
\dd (A_t Q_t^{(u)})
	&=	A_{t-} \dd Q_t^{(u)} + \ii (\om-u) \frac{A_{t-}Q_{t-}^{(u)}}{S_{t-}} \dd S_t \\ &\quad
			+ \sum_{j=1}^m H_{t-}^{(j)} \Big(
			R_{t-}^{(q_j)} \dd Q_t^{(q_j)} - R_{t-}^{(-\ii-q_j)} \dd Q_t^{(-\ii-q_j)}
			+ (1 - 2 \ii q_j) \frac{R_{t-}^{(q_j)} Q_{t-}^{(q_j)}}{S_{t-}} \dd S_t
			\Big) , \label{eq:dAQ}
\end{align}
where the processes $A$, $Q^{(q)}$ and $R^{(q)}$ are as given in \eqref{eq:AQ.def} and \eqref{eq:R}.
\end{theorem}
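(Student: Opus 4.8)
The plan is to prove the differential identity \eqref{eq:dAQ} by checking that its two sides, viewed as stochastic differentials, agree. Both sides are \emph{(local) martingale} differentials: the left-hand side because $A_t Q_t^{(u)} = \Eb_t \ee^{\ii \om X_T + \ii \co [X]_T}$ is a martingale by Proposition \ref{prp:char1}, and the right-hand side because it is a sum of stochastic integrals of locally bounded predictable integrands against the martingales $Q^{(u)}$, $S$ and $Q^{(q_j)}$. Consequently both have vanishing predictable finite-variation part, and it suffices to verify that (i) their continuous martingale parts (the $\dd W_t$-coefficients) coincide and (ii) their jumps coincide: a local martingale whose continuous martingale part and jumps both vanish is constant, so matching these forces equality of the differentials.

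First I would apply the semimartingale product rule,
\begin{align}
\dd (A_t Q_t^{(u)})
	&=	A_{t-}\dd Q_t^{(u)} + Q_{t-}^{(u)} \dd A_t + \dd [A,Q^{(u)}]_t ,
\end{align}
so that, after cancelling the common term $A_{t-}\dd Q_t^{(u)}$, the task reduces to identifying $Q_{t-}^{(u)}\dd A_t + \dd[A,Q^{(u)}]_t$ with the remaining terms in \eqref{eq:dAQ}. Since $A_t = \ee^{\ii(\om-u)X_t + \ii\co[X]_t}\,\ee^{(T-t)(\psi(\om,\co)-\psi(u,0))}$ has $\dd W_t$-coefficient $\ii(\om-u)\sig_t A_{t-}$, the term $Q_{t-}^{(u)}\dd A_t$ contributes $\ii(\om-u)\sig_t A_{t-}Q_{t-}^{(u)}\dd W_t = \ii(\om-u)\tfrac{A_{t-}Q_{t-}^{(u)}}{S_{t-}}\dd S_t^c$ to the continuous martingale part, matching the stock term of \eqref{eq:dAQ}; the continuous part of $\dd[A,Q^{(u)}]_t$ is predictable finite variation and hence contributes nothing. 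It therefore remains only to show that the $H$-weighted combinations in \eqref{eq:dAQ} carry \emph{no} continuous martingale part.

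This last cancellation is the main obstacle, and it is exactly where Lemma \ref{lem:sym} enters. The continuous martingale parts of $Q^{(q)}$ and $Q^{(-\ii-q)}$ individually involve the $\dd W_t$-coefficient of $\Eb_t\ee^{\ii q X_T^c}$, which I cannot evaluate directly because $\sig$ is left unspecified. However, differentiating the identity $R_t^{(q)}Q_t^{(q)} = R_t^{(-\ii-q)}Q_t^{(-\ii-q)}$ of Lemma \ref{lem:sym} and matching continuous martingale parts (using that $R^{(q)}$ has $\dd W_t$-coefficient $-\ii q\,\sig_t R_{t-}^{(q)}$) expresses the \emph{difference} $R_{t-}^{(q)}(Q^{(q)})^c - R_{t-}^{(-\ii-q)}(Q^{(-\ii-q)})^c$ of these two unknown coefficients purely in terms of $R^{(q)}Q^{(q)}$. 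Substituting this, and using $\ii(-\ii-q)=1-\ii q$ together with $R_{t-}^{(-\ii-q)}Q_{t-}^{(-\ii-q)}=R_{t-}^{(q)}Q_{t-}^{(q)}$, the $\dd W_t$-coefficient of $R_{t-}^{(q)}\dd Q_t^{(q)} - R_{t-}^{(-\ii-q)}\dd Q_t^{(-\ii-q)} + (1-2\ii q)\tfrac{R_{t-}^{(q)}Q_{t-}^{(q)}}{S_{t-}}\dd S_t$ collapses to $\big((-1+2\ii q)+(1-2\ii q)\big)\sig_t R_{t-}^{(q)}Q_{t-}^{(q)}\dd W_t = 0$. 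The factor $1-2\ii q$ is precisely the value that achieves this; it is the key point that the symmetry lets me bypass the unknown $\sig$-dependent coefficients entirely.

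Finally, for the jumps I would use Lemma \ref{lem:Delta.Q} and the definitions \eqref{eq:DeltaGamma}--\eqref{eq:DeltaOmega}. Because $\Gb$ and $\Hb$ are independent, $A$ and $R^{(q)}$ jump only at jumps of $N$, at which $\Del Q_t^{(q)} = Y_{t-}^{(q)}\Del Z_t^{(q)}$; combining this with $\Del(A_tQ_t^{(u)}) = A_{t-}\Del Q_t^{(u)} + Q_{t-}^{(u)}\Del A_t + \Del A_t\Del Q_t^{(u)}$ gives $\Del\Gam_t^{(u)} = \Del(A_tQ_t^{(u)}) - A_{t-}\Del Q_t^{(u)} - \ii(\om-u)\tfrac{A_{t-}Q_{t-}^{(u)}}{S_{t-}}\Del S_t$, and the analogous manipulation together with Lemma \ref{lem:sym} shows the jump of the $j$-th combination equals $-(\Del\Om_t^{(q_j)} - \Del\Om_t^{(-\ii-q_j)})$. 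Since $H$ is predictable and the jumps of $N$ are totally inaccessible, $H_{t-}^{(j)}=H_t^{(j)}$ at those times, so hypothesis \eqref{eq:H} yields $\sum_j H_{t-}^{(j)}(\Del\Om_t^{(q_j)} - \Del\Om_t^{(-\ii-q_j)}) = -\Del\Gam_t^{(u)}$. Hence the jumps of the right-hand side of \eqref{eq:dAQ} sum to $A_{t-}\Del Q_t^{(u)} + \ii(\om-u)\tfrac{A_{t-}Q_{t-}^{(u)}}{S_{t-}}\Del S_t + \Del\Gam_t^{(u)} = \Del(A_tQ_t^{(u)})$, matching the left-hand side. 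With continuous martingale parts and jumps matched and both sides driftless, the two differentials coincide, proving \eqref{eq:dAQ}.
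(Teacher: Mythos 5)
Your proposal is correct and follows essentially the same route as the paper's own proof: It\^o's product rule for $A Q^{(u)}$ and $R^{(q)}Q^{(q)}$, Lemma \ref{lem:Delta.Q} (and its $R$-analogue) for the jump covariations, the differentiated form of the symmetry identity of Lemma \ref{lem:sym} — which is exactly what produces the $(1-2\ii q)$ coefficient and makes the collar combination vanish — and the martingale property of $A Q^{(u)}$, $S$ and $Q^{(q)}$ to dispose of the drift. Your packaging (matching continuous martingale parts and jumps of two local martingales, then invoking that a local martingale with vanishing continuous part and no jumps is constant) is a mildly more formal reorganization of the paper's bookkeeping, in which the unspecified $(\ldots)\,\dd t$ terms are carried along and argued to cancel at the end for the same reason.
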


\begin{proof}
See Appendix \ref{sec:hedge}.
\end{proof}

\begin{remark}\label{rem:hedge}
By \eqref{eq:dAQ}, the following self-financing portfolio replicates the exponential claim $\ee^{\ii \om X_T + \ii \co [X]_T}$: at all times $t < T$ one should
\begin{align}
&\bullet \, \text{hold $A_{t-}$ European claims with payoff $\ee^{\ii u X_T}$,} \\
&\bullet \, \text{hold $\Big( \ii (\om-u) \frac{A_{t-}Q_{t-}^{(u)}}{S_{t-}}+\sum_{j=1}^m  H_{t-}^{(j)} (1 - 2 \ii q_j) \frac{R_{t-}^{(q_j)} Q_{t-}^{(q_j)}}{S_{t-}} \Big)$ shares of $S$,} \\
&\bullet \, \text{for $j=1,2,\ldots,m$, hold $H_{t-}^{(j)} R_{t-}^{(q_j)}$ European claims with payoff $\ee^{\ii  q_j X_T}$,}\\
&\bullet \, \text{for $j=1,2,\ldots,m$, hold $-H_{t-}^{(j)} R_{t-}^{(-\ii-q_j)}$ European claims with payoff $\ee^{\ii (-\ii-q_j) X_T}$,}\\
&\bullet \, \text{lend and borrow zero coupon bonds $B$ from the bank as needed.}
\end{align}
This portfolio's net position in European claims, which has value
\begin{align}
A_{t-} \Eb_t \ee^{\ii u X_T} + \sum_{j=1}^m H_{t-}^{(j)}
\Big( R_{t-}^{(q_j)} \Eb_t \ee^{\ii  q_j X_T} - R_{t-}^{(-\ii-q_j)} \Eb_t \ee^{\ii (-\ii-q_j) X_T}  \Big) ,
\end{align}
can be constructed from a portfolio of European calls $C_t(K)$ and puts $P_t(K)$ for $K \geq 0$, using \eqref{eq:f.call.put}.
\end{remark}

\begin{remark}\label{rem:spanning}
{
The intuition of condition \eqref{eq:H} is that the \emph{pricing} relation 
$\Eb_t \ee^{\ii \om X_T + \ii \co [X]_T}
	=	A_t \Eb_t \ee^{\ii q X_T}$
is valid in the presence of jump risk; however, the naive candidate for a \emph{hedging} portfolio,
namely $A_{t-}$ contracts on $\ee^{\ii q X_T}$, is \emph{not} a valid replication of $\ee^{\ii \om X_T + \ii \co [X]_T}$, 
because this naive portfolio fails to self-finance at jump times.  So we augment the naive portfolio with ``zero-cost collars'', 
specifically $H^{(j)}$ units of the ``collar'' that combines the claims on payouts $\ee^{\ii  q_j X_T}$ and $\ee^{\ii (-\ii-q_j) X_T}$.
At jump times these collars have a combined profit/loss which provides the needed financing to offset the ``tracking error'' $\Del \Gam_t^{(u)}$
of the naive hedge, if \eqref{eq:H} holds.   This leads us to ask, whether there exist $H^{(j)}$ satisfying \eqref{eq:H} -- in other words, 
do the collars span the tracking error?   The answer will involve (naturally, in this \emph{spanning} context) a \emph{full rank} 
condition \eqref{eq:L} on the collars. }

{To be specific:} in order to hedge
an exponential claim with payoff $\ee^{\ii \om X_T+ \ii \co [X]_T}$,
what remains is to find a predictable process $H = (H_t)_{0 \leq t \leq T}$ with components $H^{(j)} = (H_t^{(j)})_{0 \leq t \leq T}$ satisfying \eqref{eq:H}.  This is the subject of the next proposition.
\end{remark}

\begin{proposition}
\label{prp:matrix}
Suppose the L\'evy measure $\nu$ has the form
\begin{align}
\nu
	&=	\sum_{i=1}^n \lam_i \del_{z_i}, \label{eq:nu}
\end{align}
for some $(\lam_1, \lam_2 , \ldots, \lam_n) \in \Rb_+^n$ and some $(z_1, z_2, \ldots, z_n) \in \Rb^n$.
Define an $n \times 1$ stochastic column matrix $K_t = (K_t^{(i)})$ with entries
\begin{align}
K_t^{(i)}	
	&=	A_{t-} Q_{t-}^{(u)} F(z_i) , &
F(z)
	&:=	\ee^{\ii \om z + \ii \co z^2} - \ee^{\ii u z} - \ii (\om - u)(\ee^z - 1) . \label{eq:K}
\end{align}
Suppose there exists $(q_1, q_2, \ldots , q_m) \in \Cb^m$ with $m \geq n$ such that the $n \times m$ stochastic matrix $L_t = (L_t^{(i,j)})$, with entries
\begin{align}
L_t^{(i,j)}
	&=	R_{t-}^{(q_j)} Q_{t-}^{(q_j)}  G(z_i;q_j) , &
G(z;q)
	&:=	-  \ee^{\ii q z} + \ee^{(1-\ii q)z} - (1 - 2 \ii q) (\ee^z - 1) , \label{eq:L}
\end{align}
has rank $n$ for all $t \in [0,T)$.  Then there exists an $m$-dimensional predictable process
$H$ with components $H^{(j)}$
that satisfies \eqref{eq:H}; it solves
\begin{align}
K_t
	&= L_t H_t .	\label{eq:K=LH}
\end{align}
In particular, if $m=n$ then $H_t = L_t^{-1} K_t$.
\end{proposition}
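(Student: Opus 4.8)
The plan is to reduce condition \eqref{eq:H} to the linear-algebraic system \eqref{eq:K=LH} by unpacking the explicit jump representations in Theorem \ref{thm:hedge}, and then to invoke the full-rank hypothesis on $L_t$. First I would substitute the Dirac-sum form \eqref{eq:nu} of $\nu$ into the integral expressions \eqref{eq:DeltaGamma} and \eqref{eq:DeltaOmega}. Since $\nu = \sum_{i=1}^n \lam_i \del_{z_i}$, the Poisson random measure $N(\dd t,\dd z)$ has all of its jump mass concentrated at the sites $z_1,\dots,z_n$, so a jump of $X$ at time $t$ takes one of finitely many sizes $z_i$. This lets me rewrite the single integral-against-$N$ condition \eqref{eq:H} as a system of $n$ separate scalar conditions, one for each possible jump size $z_i$.

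Concretely, using \eqref{eq:DeltaGamma} the tracking error at a jump of size $z_i$ is $A_{t-}Q_{t-}^{(u)} F(z_i) = K_t^{(i)}$, with $F$ as in \eqref{eq:K}. Next I would compute the combined collar increment $\Del \Om_t^{(q_j)} - \Del \Om_t^{(-\ii-q_j)}$ at a jump of size $z_i$ using \eqref{eq:DeltaOmega}. For the $q_j$-leg this contributes $R_{t-}^{(q_j)} Q_{t-}^{(q_j)}\big(-\ee^{\ii q_j z_i} + 1 + \ii q_j(\ee^{z_i}-1)\big)$, and for the $(-\ii-q_j)$-leg the analogous term with $q_j$ replaced by $-\ii - q_j$. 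The crucial simplification is that Lemma \ref{lem:sym}, equation \eqref{eq:RQ=RQ}, gives $R_{t-}^{(-\ii-q_j)}Q_{t-}^{(-\ii-q_j)} = R_{t-}^{(q_j)}Q_{t-}^{(q_j)}$, so both legs share a common prefactor $R_{t-}^{(q_j)}Q_{t-}^{(q_j)}$. Factoring it out, the bracketed terms combine into precisely $G(z_i;q_j)$ as defined in \eqref{eq:L} — here one checks that $\ee^{\ii(-\ii-q_j)z_i} = \ee^{(1-\ii q_j)z_i}$ and that the linear-in-$q$ pieces collapse to the $(1-2\ii q_j)$ coefficient. Thus the $j$-th collar increment at jump size $z_i$ equals exactly $L_t^{(i,j)}$.

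With these identifications, condition \eqref{eq:H} — which must hold for \emph{every} realizable jump — becomes the requirement that for each $i \in \{1,\dots,n\}$,
\begin{align}
0 &= K_t^{(i)} + \sum_{j=1}^m H_t^{(j)} \, L_t^{(i,j)} \cdot(-1),
\end{align}
after matching signs; equivalently $K_t = L_t H_t$, which is \eqref{eq:K=LH}. (I would double-check the sign convention so that the collar P\&L offsets rather than reinforces the tracking error, consistent with the financing interpretation in Remark \ref{rem:spanning}.) Since $L_t$ has rank $n$ for all $t \in [0,T)$ by hypothesis and $m \geq n$, the linear system $L_t H_t = K_t$ is solvable pointwise in $(t,\om)$: $K_t$ lies in the column space of $L_t$ because $L_t$ has full row rank $n$. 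Choosing, e.g., $H_t = L_t^\dagger K_t$ via a (measurably selected) right inverse — or $H_t = L_t^{-1} K_t$ in the square case $m=n$ — yields the desired process.

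The main obstacle I anticipate is \textbf{predictability and measurability of $H$}, not the algebra. Solving $L_t H_t = K_t$ pointwise is immediate from the rank condition, but one must ensure the solution can be chosen to depend predictably on $t$. Because $K_t$ and $L_t$ are built from the left-limit processes $A_{t-}$, $Q_{t-}^{(\cdot)}$, $R_{t-}^{(\cdot)}$ — which are predictable — and the deterministic functions $F, G$ evaluated at the fixed sites $z_i$, a right-inverse formula such as $H_t = (L_t^* L_t)^{-1} L_t^* K_t$ (or $L_t^{-1}K_t$ when $m=n$) expresses $H_t$ as a continuous, hence Borel, function of these predictable inputs wherever the relevant Gram matrix is invertible; the full-rank assumption guarantees invertibility on $[0,T)$. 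This furnishes a predictable version of $H$, completing the argument.
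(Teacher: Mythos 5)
Your proposal is correct and follows essentially the same route as the paper's own proof (Appendix \ref{sec:matrix}): substitute the Dirac-sum measure so that jumps can only take the values $z_1,\dots,z_n$, use Lemma \ref{lem:sym} to collapse each collar increment to the common-prefactor form $R_{t-}^{(q_j)}Q_{t-}^{(q_j)}G(z_i;q_j)$, and thereby reduce the single condition \eqref{eq:H} to the $n$-row linear system \eqref{eq:K=LH}, which the rank-$n$ hypothesis makes solvable. Two small notes: the sign you flagged is a real but harmless inconsistency in the paper itself (reading \eqref{eq:H} literally gives $K_t = -L_tH_t$, so the $H$ solving \eqref{eq:K=LH} is the negative of the one satisfying \eqref{eq:H}; existence is unaffected); and in your supplementary predictability argument — which goes beyond what the paper bothers to justify — the pseudo-inverse formula is wrong for $m>n$: under full \emph{row} rank it is $L_t L_t^*$ that is invertible, so you should write $H_t = L_t^*(L_tL_t^*)^{-1}K_t$ rather than $(L_t^*L_t)^{-1}L_t^*K_t$, after which your measurable-selection point stands.
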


\begin{proof}
See Appendix \ref{sec:matrix}.
\end{proof}

\begin{corollary}\label{cor:twojumps}
Suppose $\nu = \lam_1 \del_{z_1} + \lam_2 \del_{z_2}$, {where $z_1 z_2(z_1-z_2)\neq 0$.}
{Then the exponential claim paying $\ee^{\ii \om X_T + \ii \co [X]_T}$ is replicated by the hedging strategy of Remark \ref{rem:hedge}}, with 
\begin{align}\label{eq:twojumps}
\[ \begin{array}{c}
H_t^{(1)} \\ H_t^{(2)}
\end{array} \]
	&=	\[ \begin{array}{cc}
			R_{t-}^{(q_1)} Q_{t-}^{(q_1)}  G(z_1;q_1)	&	R_{t-}^{(q_2)} Q_{t-}^{(q_2)}  G(z_1;q_2) \\
			R_{t-}^{(q_1)} Q_{t-}^{(q_1)}  G(z_2;q_1)	&	R_{t-}^{(q_2)} Q_{t-}^{(q_2)}  G(z_2;q_2)
			\end{array} \]^{-1}
			\[ \begin{array}{c}
			A_{t-} Q_{t-}^{(u)} F(z_1) \\
			A_{t-} Q_{t-}^{(u)} F(z_2)
			\end{array} \] .
\end{align}
{
where, given $(z_1,z_2)$, the $(q_1,q_2)$ are chosen such that for all $t$ the inverse exists.  The existence of such $(q_1,q_2)$ is a \emph{conclusion} of this Corollary, not an assumption.  }
\end{corollary}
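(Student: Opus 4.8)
The plan is to read this corollary as the $n=m=2$ instance of Proposition~\ref{prp:matrix}: with $\nu = \lam_1 \del_{z_1} + \lam_2 \del_{z_2}$ the objects $K_t$ and $L_t$ of \eqref{eq:K}--\eqref{eq:L} become a $2\times 1$ column and a $2\times 2$ matrix, and once I exhibit a pair $(q_1,q_2)\in\Cb^2$ for which $L_t$ is invertible at every $t\in[0,T)$, Proposition~\ref{prp:matrix} delivers the predictable process $H_t = L_t^{-1}K_t$ solving \eqref{eq:H} via \eqref{eq:K=LH}, and Remark~\ref{rem:hedge} converts the resulting portfolio into a replication of $\ee^{\ii\om X_T + \ii\co[X]_T}$. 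Thus the entire content beyond Proposition~\ref{prp:matrix} is the \emph{existence} assertion that such $(q_1,q_2)$ can be chosen; proving this existence is what justifies the explicit inverse in \eqref{eq:twojumps}.

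The structural observation that drives everything is that the randomness and $t$-dependence of $L_t$ decouple from the geometry governing its rank. Each column $j$ carries the common scalar $R_{t-}^{(q_j)}Q_{t-}^{(q_j)}$, so
\begin{equation*}
\det L_t = \det\begin{pmatrix} G(z_1;q_1) & G(z_1;q_2) \\ G(z_2;q_1) & G(z_2;q_2)\end{pmatrix} \cdot R_{t-}^{(q_1)}Q_{t-}^{(q_1)} \, R_{t-}^{(q_2)}Q_{t-}^{(q_2)} ,
\end{equation*}
where the first factor is a \emph{deterministic} constant depending only on $(z_1,z_2,q_1,q_2)$ and not on $t$, $\om$, or $\co$. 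Hence $L_t$ is invertible for all $t$ as soon as (i) the scalar factors $R_{t-}^{(q_j)}Q_{t-}^{(q_j)}$ never vanish and (ii) the deterministic determinant is nonzero, and I can arrange these two requirements separately.

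For (i) I would take both $q_j$ purely imaginary, $q_j = \ii\beta_j$ with $\beta_j\in\Rb$. The factor $R_{t-}^{(q_j)}$ is a nonvanishing exponential by \eqref{eq:R}, while splitting $X=X^c+X^j$ into conditionally independent increments and conditioning on the volatility path gives
\begin{equation*}
Q_t^{(\ii\beta)} = \ee^{-\beta X_t} \, \ee^{(T-t)\psi(\ii\beta,0)} \, \Eb_t \exp\!\Big( \tfrac{1}{2}\beta(\beta+1) \int_t^T \sig_s^2 \, \dd s \Big) ,
\end{equation*}
in which the jump factor is nonzero and the continuous factor is the expectation of a strictly positive random variable, finite by the bound on $\int_0^T\sig_t^2\,\dd t$ in \eqref{eq:assumption}; hence $Q_t^{(\ii\beta_j)}\neq 0$ for every $t$. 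This choice also keeps the collar partner $-\ii-q_j = -\ii(1+\beta_j)$ imaginary, so $Q^{(-\ii-q_j)}$ inherits the same nonvanishing.

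The main obstacle is (ii): making $G(z_1;q_1)G(z_2;q_2)-G(z_1;q_2)G(z_2;q_1)$ nonzero. As a (real-)analytic function of $(\beta_1,\beta_2)$ it is nonzero for generic arguments precisely when $\beta\mapsto G(z_1;\ii\beta)$ and $\beta\mapsto G(z_2;\ii\beta)$ are linearly independent; expanding $G(z;\ii\beta) = -\ee^{-\beta z} + \ee^{(1+\beta)z} - (1+2\beta)(\ee^z-1)$ in the functions $\{\ee^{\pm\beta z_1},\ee^{\pm\beta z_2},\beta,1\}$ of $\beta$ shows this holds whenever the exponents $\pm z_1,\pm z_2$ are distinct and nonzero. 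The delicate point I would scrutinize most is that this needs $z_1\neq -z_2$ \emph{in addition to} $z_1 z_2(z_1-z_2)\neq 0$: in the resonant case $z_1=-z_2$ one has the identity $G(z;q)\equiv -\ee^{z}G(-z;q)$, the two columns of the $G$-matrix become proportional, the determinant vanishes identically, and since $(F(z_1),F(z_2))$ generically escapes the rank-one column span of $L_t$, the system $K_t=L_tH_t$ has no solution. So the honest argument requires strengthening the hypothesis to $z_1+z_2\neq 0$ (or, failing that, enlarging $m$ beyond $2$ in that degenerate case); verifying the nondegeneracy of this constant Wronskian-type determinant and correctly delimiting the admissible $(z_1,z_2)$ is where the real work lies, with everything else reducing to Proposition~\ref{prp:matrix} and the bookkeeping of Remark~\ref{rem:hedge}.
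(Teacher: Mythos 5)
Your proposal follows essentially the same route as the paper's own proof: reduce to Proposition \ref{prp:matrix}; observe that $\det L_t$ factors into the stochastic scalars $R_{t-}^{(q_1)}Q_{t-}^{(q_1)}R_{t-}^{(q_2)}Q_{t-}^{(q_2)}$ times the deterministic determinant $D=G(z_1;q_1)G(z_2;q_2)-G(z_1;q_2)G(z_2;q_1)$; take $q_1,q_2$ purely imaginary so that the $RQ$ factors never vanish (your explicit computation of $Q_t^{(\ii\beta)}$ as the expectation of a strictly positive variable makes rigorous what the paper asserts in one line); and then make $D\neq 0$ by choice of $(q_1,q_2)$. For this last step the paper fixes $q_1\in\ii\Rb\setminus\{0,-\ii/2,-\ii\}$ and sends $q_2\to\pm\ii\infty$, claiming $|D(q_2)|\to\infty$, whereas you argue by linear independence of $\beta\mapsto G(z_1;\ii\beta)$ and $\beta\mapsto G(z_2;\ii\beta)$ plus analyticity; these are equivalent ways of getting generic nonvanishing, and both are valid exactly on the same set of $(z_1,z_2)$.

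Your ``resonant case'' objection is not a defect of your proof but a genuine error in the paper, and you should state it as such. The identity $G(z;q)=-\ee^{z}G(-z;q)$ is correct: from \eqref{eq:L}, $-\ee^{z}G(-z;q)=\ee^{(1-\ii q)z}-\ee^{\ii q z}-(1-2\ii q)(\ee^{z}-1)=G(z;q)$. Hence if $z_2=-z_1$ the two rows of the $G$-matrix are proportional with the $q$-independent ratio $-\ee^{-z_1}$, so $D\equiv 0$ for \emph{every} choice of $(q_1,q_2)$, and the paper's assertion that $|D(q)|\to\infty$ as $q\to\pm\ii\infty$ is false there; that growth argument needs the two products in $D$ to grow at different exponential rates $\ee^{\beta|z_1|}$ and $\ee^{\beta|z_2|}$, i.e.\ $|z_1|\neq|z_2|$, which given $z_1\neq z_2$ is precisely $z_1+z_2\neq 0$. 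Moreover, since $F(z)+\ee^{z}F(-z)=\ee^{\ii\co z^2}\big(\ee^{\ii\om z}+\ee^{(1-\ii\om)z}\big)-\ee^{\ii u z}-\ee^{(1-\ii u)z}$ is generically nonzero, the column $K_t$ generically lies outside the rank-one range of $L_t$, so \eqref{eq:K=LH} has no solution at all: the failure is real, not an artifact of the method. One correction to your hedge at the end: enlarging $m$ beyond $2$ cannot repair the resonant case, because the row proportionality is independent of $q$, so every $2\times m$ matrix $L_t$ built from these collars still has rank at most one; one would need genuinely different hedging instruments. The Corollary should therefore carry the strengthened hypothesis $z_1 z_2(z_1-z_2)(z_1+z_2)\neq 0$, under which your argument (and the paper's) is complete.
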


\begin{proof}
See Appendix \ref{sec:twojumps}.
\end{proof}

\section{Pricing other payoffs}
\label{sec:example}
This section applies the results of Section \ref{sec:pricing} to price some contracts with payoffs $\varphi(X_T,[X]_T)$ that are not of the power-exponential form.  Generally speaking, our results shall take the form
\begin{align}
\Eb \varphi(X_T,[X]_T)
	&= \Eb g(X_T) , &
X_0
	&=	0 , \label{eq:g0}
\end{align}	
where $\Eb g(X_T)$ can be computed relative to traded European calls/puts via \eqref{eq:f.call.put}.  Note, by the spatial homogenity of the $X$ process, there is no loss in generality in assuming $X_0 = 0$.

\subsection{Fractional powers and ratios}

\begin{proposition}[Fractional powers of quadratic variation]
\label{prp:frac}
Consider a fractional power claim, whose payoff function is of the form
\begin{align}
\varphi(x,v)
	&=	v^r , &
0
	&< r < 1 . \label{eq:phi.frac}
\end{align}
Then
\begin{align}
g(x)
	&:=	 \frac{r}{\Gam(1-r)} \int_0^\infty \frac{1}{z^{r+1}}
			\Big( \ee^{\ii u(0,0) x} - \frac{ \ee^{T\psi(0,\ii z)}}{\ee^{T \psi(u(0,\ii z),0)}} \ee^{\ii u(0,\ii z) x} \Big) \dd z . \label{eq:g0.frac}
\end{align}
satisfies \eqref{eq:g0} and hence prices the fractional power claim.
\end{proposition}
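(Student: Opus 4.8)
The plan is to reduce the fractional power $[X]_T^r$ to a continuous superposition of exponentials $\ee^{\ii \co [X]_T}$, to which Proposition \ref{prp:char1} applies verbatim, and then to integrate the resulting pricing identity against a suitable kernel. The starting point is the classical subordinator-type representation
\begin{align}
v^r
	&=	\frac{r}{\Gam(1-r)} \int_0^\infty \frac{1 - \ee^{-zv}}{z^{r+1}} \dd z , &
v
	&\geq 0 , \quad 0 < r < 1 , \label{eq:frac.rep}
\end{align}
which I would verify by the substitution $w = zv$ followed by one integration by parts: for $r \in (0,1)$ the boundary terms vanish (the integrand is $O(z^{-r})$ near $0$ and $O(z^{-r-1})$ near $\infty$) and the surviving integral equals $\Gam(1-r)/r$. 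The crucial observation is that $\ee^{-z[X]_T} = \ee^{\ii \co [X]_T}$ exactly when $\co = \ii z$, so the kernel in \eqref{eq:frac.rep} samples the characteristic function at the imaginary argument needed to invoke Proposition \ref{prp:char1}.

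Setting $v = [X]_T$ in \eqref{eq:frac.rep} and taking expectations, I would apply Tonelli's theorem — the integrand $(1 - \ee^{-z[X]_T})/z^{r+1}$ is nonnegative because $[X]_T \geq 0$ — to obtain
\begin{align}
\Eb [X]_T^r
	&=	\frac{r}{\Gam(1-r)} \int_0^\infty \frac{1 - \Eb \ee^{-z [X]_T}}{z^{r+1}} \dd z . \label{eq:frac.exp}
\end{align}
Finiteness of the left side is guaranteed by \eqref{eq:assumption}: since $\Eb [X]_T = \Eb\int_0^T \sig_t^2 \dd t + T\int_\Rb z^2 \nu(\dd z) < b + Tc^2 \nu(\Rb) < \infty$ and $v^r \leq 1 + v$, we have $\Eb [X]_T^r < \infty$. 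Next, I would evaluate $\Eb \ee^{-z[X]_T}$ by applying Proposition \ref{prp:char1} at $t=0$ with $(\om,\co) = (0,\ii z)$ and $X_0 = [X]_0 = 0$, giving $\Eb \ee^{-z[X]_T} = \ee^{T\psi(0,\ii z) - T\psi(u(0,\ii z),0)}\,\Eb \ee^{\ii u(0,\ii z) X_T}$. To produce the first term inside the parentheses of \eqref{eq:g0.frac}, I would note that $\Eb \ee^{\ii u(0,0) X_T} = 1$ for either branch of \eqref{eq:u}: for $u = u_+$ one has $u_+(0,0) = 0$ so the expectation is $\Eb 1 = 1$, while for $u = u_-$ one has $u_-(0,0) = -\ii$, whence $\ee^{\ii u_-(0,0)X_T} = \ee^{X_T} = S_T$ and $\Eb S_T = S_0 = 1$ by the martingale property. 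Substituting both facts into \eqref{eq:frac.exp} and moving the expectation inside the $\dd z$-integral identifies the right-hand side with $\Eb g(X_T)$ for $g$ as in \eqref{eq:g0.frac}, which is the assertion \eqref{eq:g0}.

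The main obstacle is the final interchange of $\Eb$ with the $\dd z$-integral: unlike the Tonelli step producing \eqref{eq:frac.exp}, the integrand is now the complex, non-sign-definite quantity $\ee^{\ii u(0,0)X_T} - \ee^{T\psi(0,\ii z)-T\psi(u(0,\ii z),0)}\,\ee^{\ii u(0,\ii z) X_T}$, so Fubini requires absolute integrability. I would control this by splitting at the discriminant $\tfrac14 - 2z$ appearing in \eqref{eq:u}. Writing $\ii u(0,\ii z) = \tfrac12 \mp \sqrt{\tfrac14 - 2z}$, one checks $\Re\big(\ii u(0,\ii z)\big) \in [0,1]$ for all $z \geq 0$, so $|\ee^{\ii u(0,\ii z) X_T}| = S_T^{\Re(\ii u)} \leq 1 + S_T$, uniformly in $z$ and integrable. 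The prefactor modulus stays bounded uniformly in $z$: $\psi(0,\ii z) = \int_\Rb(\ee^{-z z'^2}-1)\nu(\dd z') \leq 0$, while $\Re\,\psi(u(0,\ii z),0)$ is bounded because $\nu$ is supported on $\{|z'| < c\}$ by \eqref{eq:assumption}, so $\ee^{\Re(\ii u)z'} \leq \ee^{c}$ on the support. Finally, near $z = 0$ both exponents collapse to $u(0,0)$ and the prefactor tends to $1$, so the bracketed difference vanishes to first order in $z$ (note $\d_z u(0,\ii z)$ is finite at $z=0$, its square-root singularity sitting at $z=\tfrac18$), taming the $z^{-r-1}$ factor since $r<1$; at large $z$ the $z^{-r-1}$ factor alone gives decay. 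Combining these bounds with the finite moments of $S_T$ and $X_T$ afforded by \eqref{eq:assumption} justifies Fubini and completes the argument.
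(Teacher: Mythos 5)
Your proposal follows the same skeleton as the paper's proof: the representation \eqref{eq:frac} of $v^r$ as a superposition of $1-\ee^{-zv}$, Tonelli to pass the expectation inside, Proposition \ref{prp:char1} evaluated at $(\om,\co)=(0,\ii z)$, the observation that $\Eb\,\ee^{\ii u(0,0)X_T}=1$ for both branches of \eqref{eq:u}, and a final Fubini interchange; your large-$z$ bound (uniform $\Re(\ii u(0,\ii z))\in[0,1]$, nonpositive $\psi(0,\ii z)$, bounded $\Re\psi(u(0,\ii z),0)$ from the compact support and finiteness of $\nu$) is also the paper's. Where you genuinely diverge is the near-zero estimate that makes Fubini work, which is the technical heart of the paper's argument. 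The paper proves $\Eb\big|1-\ee^{Z(z)}\big|=\Oc(z)$ by Cauchy--Schwarz, computing $\Eb\,\ee^{Z(z)}$ and $\Eb\,\ee^{2Z(z)}$ exactly to first order; this requires the entire-function property of $M(t)=\Eb\,\ee^{t[X^c]_T}$ (via Sato) and an auxiliary root $w(z)$ solving $\ii u(0,\ii w)=2\ii u(0,\ii z)$. You instead argue that the bracket vanishes to first order in $z$ pointwise. That route is more elementary and avoids both ingredients, but as written it has a soft spot: pointwise $\Oc(z)$ decay, sample path by sample path, does not by itself tame $z^{-r-1}\Eb|\cdot|$; you need a $z$-uniform, integrable dominating variable. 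The fix is short and stays inside your toolkit: for $u=u_+$ and $z\le \tfrac18$ the exponent $Z(z)$ is real, $|Z(z)|\le C z(1+|X_T|)$ (since $\d_z\,\ii u_+(0,\ii z)$ is bounded there, as you note), and $\ee^{Z(z)}\le C(1+S_T)$ because $\ii u_+(0,\ii z)\in[0,\tfrac12]$; hence $|1-\ee^{Z(z)}|\le |Z(z)|\big(1+\ee^{Z(z)}\big)\le C' z\,(1+|X_T|)(1+S_T)$, and $\Eb\big[(1+|X_T|)(1+S_T)\big]<\infty$ by Cauchy--Schwarz, since \eqref{eq:assumption} gives $\Eb X_T^2<\infty$ and $\Eb S_T^2<\infty$. (For $u=u_-$, factor $S_T$ out of the bracket and run the same estimate on $S_T\big(1-\ee^{Z_-(z)-X_T}\big)$, using $|\ii u_-(0,\ii z)-1|\le Cz$.) With that domination line made explicit --- your closing appeal to ``finite moments of $S_T$ and $X_T$'' is exactly where it belongs --- your argument is complete, and arguably simpler than the paper's $L^2$ computation.
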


\begin{proof}
See Appendix \ref{sec:frac}.
\end{proof}

\begin{example}[Volatility Swap]
\label{eq:VolSwap}
Consider the floating leg of a (continuously monitored) volatility swap, which pays $\sqrt{[X]_T}$ to the long side at time $T$.
The payoff function $\varphi(x,v) = \sqrt{v}$ can be obtained as a special case of \eqref{eq:phi.frac} by setting $r=1/2$.
In Figure \ref{fig:VolSwap} we plot $g(\log S_T)$ as a function of $S_T$ for various jump distributions and intensities, where $g$ is given by \eqref{eq:g0.frac}.
\end{example}

\begin{proposition}[Ratio claims (I)]
\label{prp:ratio}
Consider a ratio claim, whose payoff function has the form
\begin{align}
\varphi(x,v)
	&=	\frac{x \ee^{\ii p x} }{ (v + \eps)^{r} } , & \text{where $p\in\Cb$, \ 
$r\in (0,1)$,\ and 
$\eps>0$. }\label{eq:phi.ratio}
\end{align}
Then 
\begin{align}
g(x)
	&:=	\frac{1}{r\Gam(r)}  \int_0^\infty 
			(-\ii \d_p) \frac{ \ee^{T\psi(p,\ii z^{1/r})}}{\ee^{T \psi(u(p,\ii z^{1/r}),0)}} \ee^{\ii u(p,\ii z^{1/r}) x - z^{1/r} \eps } 
			\ \dd z.
			\label{eq:g0.ratio}
\end{align}
satisfies \eqref{eq:g0} and hence prices the ratio claim.
\end{proposition}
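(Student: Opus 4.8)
The plan is to prove Proposition \ref{prp:ratio} by expressing the ratio $(v+\eps)^{-r}$ as an integral of exponentials via the Gamma-function identity, then applying Proposition \ref{prp:char1} term-by-term and finally differentiating in the transform variable $p$ to bring down the factor $x$. This mirrors the structure of the proof of Proposition \ref{prp:frac}, but with two added wrinkles: the multiplicative prefactor $x\ee^{\ii p x}$ (rather than a pure power or exponential) and the positive shift $\eps$ in the denominator, which actually improves integrability relative to the $\eps=0$ case.

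\textbf{Step 1 (Gamma representation of the negative power).} For $r>0$ and $w>0$ the identity $w^{-r} = \frac{1}{\Gam(r)}\int_0^\infty s^{r-1}\ee^{-sw}\dd s$ holds. Applying this with $w = v+\eps$ gives
\begin{align}
(v+\eps)^{-r}
	&=	\frac{1}{\Gam(r)} \int_0^\infty s^{r-1} \ee^{-s(v+\eps)} \dd s .
\end{align}
I would then substitute $s = z^{1/r}$, so that $s^{r-1}\dd s = \frac{1}{r} z^{(r-1)/r} z^{1/r - 1}\dd z = \frac{1}{r}\dd z$, converting the integral to $\frac{1}{r\Gam(r)}\int_0^\infty \ee^{-z^{1/r}(v+\eps)}\dd z$. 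This substitution is exactly what produces the $\frac{1}{r\Gam(r)}$ prefactor and the $z^{1/r}$ arguments appearing in \eqref{eq:g0.ratio}.

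\textbf{Step 2 (Insert into the payoff and take the expectation).} Multiplying by $x\ee^{\ii px}$ and setting $x=X_T$, $v=[X]_T$, I would write
\begin{align}
\Eb\, \frac{X_T\,\ee^{\ii p X_T}}{([X]_T+\eps)^r}
	&=	\frac{1}{r\Gam(r)} \int_0^\infty \ee^{-z^{1/r}\eps}\, \Eb\Big( X_T\,\ee^{\ii p X_T - z^{1/r}[X]_T} \Big) \dd z ,
\end{align}
interchanging expectation and the $z$-integral by Fubini (justified using the boundedness assumptions \eqref{eq:assumption} together with the convergence factor $\ee^{-z^{1/r}\eps}$, which for $\eps>0$ controls the large-$z$ tail). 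The inner expectation is a power-exponential expectation with imaginary variance-argument: writing $\co = \ii z^{1/r}$ so that $\ii\co[X]_T = -z^{1/r}[X]_T$, and noting the $X_T$ prefactor corresponds to the $(n,m)=(1,0)$ case, I would apply Corollary \ref{cor:pow-exp} (or directly differentiate the characteristic function of Proposition \ref{prp:char1} in $\om$, evaluated at $\om=p$). The operator $-\ii\d_p$ produces exactly the factor of $X_T$, giving the integrand $(-\ii\d_p)\frac{\ee^{T\psi(p,\ii z^{1/r})}}{\ee^{T\psi(u(p,\ii z^{1/r}),0)}}\ee^{\ii u(p,\ii z^{1/r})X_T}$ after using $X_0=[X]_0=0$ so that $A_0$ collapses to the stated exponential-of-$\psi$ ratio.

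\textbf{The main obstacle} will be justifying the interchange of the expectation, the $z$-integral, and the $p$-derivative simultaneously — in particular verifying that the differentiated integrand is dominated uniformly in $p$ near the point of evaluation so that differentiation under the integral sign is valid, and that the resulting European-claim payoff $g$ in \eqref{eq:g0.ratio} is itself integrable against the law of $X_T$. The real part of $u(p,\ii z^{1/r})$ controls the exponential growth of $\ee^{\ii u X_T}$ in $X_T$; I would check that as $z\to\infty$ the damping $\ee^{-z^{1/r}\eps}$ dominates any polynomial-in-$z$ growth coming from $\psi$ and the $\d_p$ differentiation, and that as $z\to 0^+$ the factor $\ee^{T\psi(p,\ii z^{1/r})}/\ee^{T\psi(u,0)}\to 1$ keeps the integrand bounded (so no endpoint singularity, in contrast to Proposition \ref{prp:frac} where the $1/z^{r+1}$ weight forces a subtraction to regularize at $z=0$). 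Once these estimates are in place, the identity $\Eb\varphi(X_T,[X]_T)=\Eb g(X_T)$ follows directly, establishing \eqref{eq:g0}.
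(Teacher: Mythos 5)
Your proposal follows essentially the same route as the paper's proof: the same Gamma/Laplace representation of $(v+\eps)^{-r}$ with the $z^{1/r}$ substitution producing the $\frac{1}{r\Gam(r)}$ prefactor, the same Fubini interchange, the same use of $-\ii\d_p$ (Leibniz) to generate the factor $X_T$, and the same application of Proposition \ref{prp:char1}, followed by a second Fubini to pass back inside the expectation. The one detail your domination sketch glosses over, and which the paper verifies explicitly, is that $\d_p u(p,\ii z^{1/r})$ can have a branch-point (square-root) singularity at a \emph{finite} value of $z$ for certain complex $p$ --- not merely growth as $z\to\infty$ or an endpoint issue at $z\to 0^+$ --- though this singularity is integrable against $\ee^{-\eps z^{1/r}}$, so your approach goes through once that check is added.
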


\begin{proof}
See Appendix \ref{sec:ratio}.
\end{proof}

\begin{example}[Realized Sharpe ratio]
\label{ex:sharpe}
The \emph{Sharpe ratio} was introduced in \cite{sharpe1966mutual} as a simple way to measure the performance of an investment while adjusting for its risk.
Define the \emph{realized Sharpe ratio}
\begin{align}
\Lam_T
	&:=		\frac{ X_T - X_0 }{ \sqrt{ [X]_T-[X]_0 } }.
\end{align}
Consider a claim that pays the realized Sharpe ratio.  With $X_0 = [X]_0 = 0$ we have $\varphi(X_T,[X]_T) =  X_T / \sqrt{[X]_T}$.  The payoff function $\varphi(x,v) = x/\sqrt{v}$ can be approximated with arbitrary accuracy by setting $r=1/2$ in Proposition \ref{prp:ratio} and choosing $\eps$ small enough.  Figure \ref{fig:SharpeRatio} plots $g(\log S_T)$ as a function of $S_T$, where $g$ is given by \eqref{eq:g0.ratio}.
\end{example}

\begin{proposition}[Ratio claims (II)]
\label{prp:negative}
Consider a ratio claim whose payoff function has the form
\begin{align}
\varphi(x,v)
	&=	\frac{ \ee^{\ii p x} }{ (v + \eps)^r } , &
r, \eps
	&> 0,\ p\in\Cb. \label{eq:phi.negative}
\end{align}
Then
\begin{align}
g(x)
	&:=	\frac{1}{r\Gam(r)}  \int_0^\infty 
			\frac{ \ee^{T\psi(p,\ii z^{1/r})}}{\ee^{T \psi(u(p,\ii z^{1/r}),0)}} \ee^{\ii u(p,\ii z^{1/r}) x - z^{1/r} \eps } \dd z
			\label{eq:g0.negative}
\end{align}
satisfies \eqref{eq:g0} and hence prices the ratio claim.
\end{proposition}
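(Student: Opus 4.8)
The plan is to linearize the fractional power $(v+\eps)^{-r}$ via the Gamma-function integral representation and then reduce the problem to the joint transform already computed in Proposition \ref{prp:char1}. Concretely, for $r,\eps>0$ and $v\geq 0$ one has the elementary identity
\begin{align}
\frac{1}{(v+\eps)^r} = \frac{1}{\Gam(r)} \int_0^\infty s^{r-1} \ee^{-s(v+\eps)} \dd s ,
\end{align}
which recasts the non-polynomial payoff as a superposition of exponentials $\ee^{-s[X]_T}$. Since $[X]_T\geq 0$, each such exponential is a Laplace transform in the quadratic-variation variable, i.e. a joint transform $\ee^{\ii p X_T+\ii\co[X]_T}$ evaluated at the imaginary frequency $\co=\ii s$.

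First I would substitute this representation into $\varphi$ and take expectations, interchanging $\Eb$ and $\int_0^\infty\dd s$ by Fubini, to obtain
\begin{align}
\Eb \varphi(X_T,[X]_T) = \frac{1}{\Gam(r)} \int_0^\infty s^{r-1}\, \ee^{-s\eps}\, \Eb\big[ \ee^{\ii p X_T}\ee^{-s[X]_T} \big] \dd s .
\end{align}
Next I would apply Proposition \ref{prp:char1} at $t=0$, with $X_0=[X]_0=0$ and $\co=\ii s$, so that $\ee^{-s[X]_T}=\ee^{\ii(\ii s)[X]_T}$, giving
\begin{align}
\Eb\big[ \ee^{\ii p X_T - s[X]_T} \big] = \frac{\ee^{T\psi(p,\ii s)}}{\ee^{T\psi(u(p,\ii s),0)}}\, \Eb \ee^{\ii u(p,\ii s) X_T} .
\end{align}
Finally, the change of variables $s=z^{1/r}$ (so $z=s^r$ and $\dd z=r s^{r-1}\dd s$) turns the resulting $s$-integral into the $z$-integral defining $g$ in \eqref{eq:g0.negative}, and the factors $r$ and $1/r$ cancel, yielding $\Eb\varphi(X_T,[X]_T)=\Eb g(X_T)$ as claimed. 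The identity in Proposition \ref{prp:char1} holds for either branch $u=u_\pm$, so no branch selection is forced here.

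The main obstacle is the rigorous justification of the Fubini interchange and of the convergence of the integral at both endpoints. Using $[X]_T\geq 0$ and $s>0$ gives $|\ee^{-s[X]_T}|\leq 1$, so the integrand is dominated by $s^{r-1}\ee^{-s\eps}\,\ee^{-\Im(p)X_T}$; here $\Eb\, \ee^{-\Im(p)X_T}<\infty$ because the model assumptions \eqref{eq:assumption} (bounded integrated variance and bounded finite-activity jumps) guarantee finite exponential moments of $X_T$. Integrability near $s=0$ then follows from $r>0$, while the decay $\ee^{-s\eps}$ with $\eps>0$ controls the tail as $s\to\infty$, so
\begin{align}
\int_0^\infty s^{r-1}\ee^{-s\eps}\, \Eb\, \ee^{-\Im(p)X_T}\dd s = \frac{\Gam(r)}{\eps^r}\, \Eb\, \ee^{-\Im(p)X_T} < \infty ,
\end{align}
which legitimizes the interchange. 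The argument is in fact simpler than that of Proposition \ref{prp:ratio}: the absence of the factor $x$ in the payoff means no $p$-derivative $(-\ii\d_p)$ needs to be carried through the transform.
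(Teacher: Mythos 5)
Your route is the same as the paper's: the paper's proof of Proposition \ref{prp:negative} defers to the proof of Proposition \ref{prp:ratio}, which runs exactly your steps (Gamma/Laplace representation of $(v+\eps)^{-r}$, Proposition \ref{prp:char1} at $\co=\ii s$, substitution $s=z^{1/r}$), minus the $\d_p$ that is absent here. The issue is not the outline but the justification, where your proof has a genuine gap: you verify only the \emph{first} of the two interchanges the argument needs. Your domination $|\ee^{\ii p X_T-s[X]_T}|\le \ee^{-\Im(p)X_T}$ legitimizes swapping $\Eb$ with $\int_0^\infty\dd s$ to reach
\begin{align}
\Eb\,\varphi(X_T,[X]_T)
	&= \frac{1}{\Gam(r)}\int_0^\infty s^{r-1}\ee^{-s\eps}\,\Eb\,\ee^{\ii p X_T-s[X]_T}\,\dd s ,
\end{align}
but the conclusion $\Eb\,\varphi(X_T,[X]_T)=\Eb\,g(X_T)$, with $g$ defined pathwise by the integral \eqref{eq:g0.negative}, requires pulling $\Eb$ back \emph{out} of the integral after Proposition \ref{prp:char1} has been applied under the integral sign. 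That second Fubini concerns the post-transform integrand
\begin{align}
\frac{\ee^{T\psi(p,\ii s)}}{\ee^{T\psi(u(p,\ii s),0)}}\,\ee^{\ii u(p,\ii s)X_T-s\eps} ,
\end{align}
and your dominating function says nothing about it: Proposition \ref{prp:char1} is an identity between \emph{expectations}, not between random variables, so the bound $|\ee^{-s[X]_T}|\le 1$ does not transfer to this expression.

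Concretely, the modulus of the new integrand is $\ee^{\Re(\ii u(p,\ii s))X_T}\,\ee^{T\Re(\psi(p,\ii s)-\psi(u(p,\ii s),0))}\,\ee^{-s\eps}$, where the exponent multiplying $X_T$ now depends on $s$, so a separate domination is required. This is precisely where the paper's proof invests its effort: using $\ii u_\pm(p,\ii\co)=\tfrac{1}{2}\mp\sqrt{\tfrac{1}{4}-p^2-\ii p-2\co}$ and the identity $\big|\Re\sqrt{a+\ii b}\big|=\sqrt{((a^2+b^2)^{1/2}+a)/2}$, one shows (as in \eqref{eq:bounds}) that $\Re(\ii u(p,\ii\co))$ and $\Re\big(\psi(p,\ii\co)-\psi(u(p,\ii\co),0)\big)$ are bounded uniformly in $\co\in\Rb_+$, the latter using \eqref{eq:assumption}; this yields $\Eb\big|\cdot\big|\le \ee^{Tc_1}\,\Eb\,\ee^{c_1|X_T|}\,\ee^{-\eps z^{1/r}}$ with $c_1$ independent of $z$, and since $\int_0^\infty\ee^{-\eps z^{1/r}}\dd z<\infty$ the second interchange is justified. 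Without these uniform bounds, your argument establishes only $\Eb\,\varphi(X_T,[X]_T)=\int_0^\infty\Eb[\,\cdots]\,\dd z$, not that this equals $\Eb\,g(X_T)$; supplying them would complete the proof, and that completion is exactly the content of the paper's appendix argument to which Proposition \ref{prp:negative} appeals.
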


\begin{proof}
See Appensix \ref{sec:negative}.
\end{proof}

\begin{remark}
\label{rmk:hedging}
{Throughout this section we have used the fact that a large class of payoffs of the form $\varphi(X_T,[X]_T)$ can be written as derivatives, sums and/or integrals of exponential basis functions $\ee^{\ii \om X_T + \ii \eta [X]_T}$.  By linearity, one can in principle {combine} the replication strategies developed in Section \ref{sec:replication} in order to replicate {a further expanded class of} payoffs of the form $\varphi(X_T,[X]_T)$.}
\end{remark}


\subsection{Options on Levered Exchange Traded Funds}
\label{sec:LETF}
A growing class of \emph{exchange-traded funds} (ETFs) are the \emph{leveraged exchanged traded funds} (LETFs).
In an ideal setting (i.e., no management fees), the relationship between an LETF $L = (L_t)_{0 \leq t \leq T}$ and the underlying ETF $S=(S_t)_{0 \leq t \leq T}$ is
\begin{align}
\frac{\dd L_t}{L_{t-}}
	&=	\beta \frac{\dd S_t}{S_{t-}} ,
\end{align}
where $\beta$ is a fixed constant known as the \emph{leverage ratio}.  Typical values of $\beta$ are $\{-3,-2,-1,2,3\}$.  As \cite{avellaneda1} point out, the value of $L_T$ depends on the entire path of $S$ over the interval $[0,T]$.  This is most readily seen by looking at $(Y_t)_{0 \leq t \leq T}$, the $\log$ LETF process: $Y_t = \log L_t$.  With the dynamics of $X = \log S$ given by \eqref{eq:dX}, a simple application of the It\^o formula yields
\begin{align}
\dd Y_t
	&=	\dd Y_t^c + \dd Y_t^j , \label{eq:dY} \\
\dd Y_t^c
	&=	\beta \dd X_t^c + \tfrac{1}{2}\beta(1-\beta) \dd [X^c]_t , \label{eq:dYc} \\ 
\dd Y_t^j
	&=	- \int_\Rb \Big( \beta( \ee^z - 1 ) - \log \big( \beta( \ee^z - 1 ) + 1 \big) \Big) \nu(\dd z) \dd t 
			+ \int_\Rb \log \big( \beta( \ee^z - 1 ) + 1 \big)  \Nt(\dd t, \dd z) ,  \label{eq:dYj}
\end{align}
where we assume that the constant $c$ appearing in \eqref{eq:assumption} satisfies
\begin{align}
\beta( \ee^z - 1 ) + 1
	&> 0 , &
	&\forall \, z \in [-c,c] , \label{eq:assumption2}
\end{align}
which guarantees that, when the ETF $S$ jumps, the LETF $L$ jumps to a strictly positive value.
Observe that $\dd Y_t$ depends not only on $\dd X_t^c$ but also on $\dd [X^c]_t$ and on a nontrivial integral with respect to the Poisson random measure $\Nt(\dd t, \dd z)$.  Because of the intricate path-dependent behavior, there has been significant interest in relating option prices/implied volatilities on $X$ to option prices/implied volatilities written on $Y$; see, for example,
\cite{haugh-letf,leungsircarLETF,lorig-leung-pascucci-1,lee2015leverage}.
Although $Y_T$ cannot be written as a function of $X_T$ and $[X]_T$ only, 
our framework allows us to value a claim written on $Y_T$ (which can be viewed as a claim on the path of $X$) relative to a European (i.e., path-independent) claim written on $X_T$.
\par
The following proposition relates the characteristic function of $(Y_T - Y_t)$, conditional on $\Fc_t$, to the characteristic function of $(X_T-X_t)$, also conditional on $\Fc_t$.

\begin{proposition}
\label{prp:LETF}
Let $X$ and $Y$ have dynamics given by \eqref{eq:dX} and \eqref{eq:dY}, respectively.  Define $\chi : \Cb \to \Cb$ by
\begin{align}
\chi(q)
	&:=	\int_\Rb \Big( \big( \beta(\ee^z -1) + 1 \big)^{\ii q} - 1 - \ii q \beta(\ee^z-1) \Big) \nu(\dd z).  \label{eq:chi}
\end{align}
Then the characteristic function of $(Y_T - Y_t)$, conditional on $\Fc_t$, is given by
\begin{align}
\Eb_t \ee^{\ii q (Y_T - Y_t)}
	&=	\frac{\ee^{(T-t)\chi(q)}}{\ee^{(T-t)\psi(u( q \beta , q \tfrac{1}{2} \beta(1-\beta) ),0)}}
			\Eb_t \ee^{ \ii u( q \beta , q \tfrac{1}{2} \beta(1-\beta) )(X_T - X_t) } , \label{eq:char.XY}
\end{align}
where $u$ and $\psi$ are given by \eqref{eq:u} and \eqref{eq:psi}, respectively.
\end{proposition}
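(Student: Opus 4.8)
The plan is to exploit the independence of the filtrations $\Gb$ and $\Hb$ to factor the conditional characteristic function of $Y_T - Y_t$ into a continuous piece and a jump piece, and then to dispatch each with tools already in hand. Writing $u^\ast := u(q\beta, q\tfrac{1}{2}\beta(1-\beta))$ for brevity and using the decomposition $Y_T - Y_t = (Y_T^c - Y_t^c) + (Y_T^j - Y_t^j)$ from \eqref{eq:dY}, I note that $Y^c$ is driven by $(W,\sig)$ (hence $\Gb$-measurable) while $Y^j$ is driven by $N$ (hence $\Hb$-measurable), so independence gives $\Eb_t \ee^{\ii q(Y_T - Y_t)} = \Eb_t \ee^{\ii q(Y_T^c - Y_t^c)} \cdot \Eb_t \ee^{\ii q(Y_T^j - Y_t^j)}$.

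For the continuous factor, I would substitute \eqref{eq:dYc} to get $Y_T^c - Y_t^c = \beta(X_T^c - X_t^c) + \tfrac{1}{2}\beta(1-\beta)([X^c]_T - [X^c]_t)$. This is exactly the exponent appearing in Lemma \ref{lem:E1=E2} with $\om = q\beta$ and $\co = q\tfrac{1}{2}\beta(1-\beta)$, so that lemma immediately yields $\Eb_t \ee^{\ii q(Y_T^c - Y_t^c)} = \Eb_t \ee^{\ii u^\ast(X_T^c - X_t^c)}$. To re-express the right-hand side in terms of $X_T - X_t$ rather than its continuous part, I would again use independence, now for $X$, writing $\Eb_t \ee^{\ii u^\ast (X_T^c - X_t^c)} = \Eb_t \ee^{\ii u^\ast(X_T - X_t)} \big/ \Eb_t \ee^{\ii u^\ast(X_T^j - X_t^j)}$. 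A direct computation from the dynamics of $X^j$ in Section \ref{sec:pricing}, via the exponential formula for Poisson integrals, shows $\Eb_t \ee^{\ii u^\ast(X_T^j - X_t^j)} = \ee^{(T-t)\psi(u^\ast,0)}$ with $\psi$ as in \eqref{eq:psi}; this produces the denominator in \eqref{eq:char.XY}.

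For the jump factor, I would read off the dynamics of $Y^j$ from \eqref{eq:dYj} and absorb the compensator into the drift, so that the increment becomes $Y_T^j - Y_t^j = -(T-t)\int_\Rb \beta(\ee^z - 1)\,\nu(\dd z) + \int_t^T \int_\Rb \log(\beta(\ee^z-1)+1)\, N(\dd s,\dd z)$. Applying the exponential formula to the Poisson integral, collecting the two deterministic exponents, and using $\ee^{\ii q \log(\beta(\ee^z-1)+1)} = (\beta(\ee^z-1)+1)^{\ii q}$, I obtain $\Eb_t \ee^{\ii q(Y_T^j - Y_t^j)} = \ee^{(T-t)\chi(q)}$ with $\chi$ exactly as in \eqref{eq:chi}. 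Multiplying the continuous and jump factors then assembles \eqref{eq:char.XY}.

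The calculations are routine modulo one recurring bookkeeping point, which is the main thing to get right: carefully tracking the interplay between the compensator term $-\int_\Rb(\cdots)\,\nu(\dd z)\,\dd t$ in \eqref{eq:dYj} (and in the analogous expression for $X^j$) and the compensated-versus-uncompensated form of the Poisson integral, so that the drift and the $-\ii q$ correction terms combine into precisely the \emph{Lévy exponents} $\chi$ and $\psi(\cdot,0)$. A secondary technical requirement is integrability: the assumptions \eqref{eq:assumption} (finite activity and bounded jumps) together with \eqref{eq:assumption2} ensure that $\log(\beta(\ee^z-1)+1)$ is bounded on the support of $\nu$, which justifies the exponential formula and guarantees that all the exponents appearing above are finite.
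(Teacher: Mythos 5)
Your proposal is correct and follows essentially the same route as the paper's proof: factor $\Eb_t \ee^{\ii q(Y_T-Y_t)}$ into continuous and jump parts by independence of $\Gb$ and $\Hb$, apply Lemma \ref{lem:E1=E2} with $(\om,\co) = (q\beta, q\tfrac{1}{2}\beta(1-\beta))$ to the continuous factor, convert $X^c$ back to $X$ by dividing out $\ee^{(T-t)\psi(u^\ast,0)}$, and identify $\Eb_t \ee^{\ii q(Y_T^j - Y_t^j)} = \ee^{(T-t)\chi(q)}$ since $Y^j$ is a L\'evy process with characteristic exponent $\chi$. The only cosmetic difference is that you compute the two L\'evy exponents directly via the exponential formula, whereas the paper cites its earlier L\'evy--Khintchine computation \eqref{eq:char5}.
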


\begin{proof}
See Appendix \ref{sec:LETF2}.
\end{proof}

Using Proposition \ref{prp:LETF}, we can relate the value of a claim written on $Y$ to the value of a European claim written on $X$.

\begin{theorem}
\label{thm:LETF}
Let $\varphih$ be the generalized (one-dimensional) Fourier transform of $\varphi : \Rb \to \Rb$.  We have
\begin{align}
\varphih(q)
	&=	\frac{1}{2\pi} \int_\Rb \ee^{-\ii q x} \varphi(x)\dd x , &
q
	&\in \Cb .
\end{align}
Define $q_r := \Re q$ and $q_i := \Im q$.  
Assume the inverse Fourier transform of $\varphih$ is $\varphi$
\begin{align}
\varphi(x)
	&=	\int_\Rb \ee^{\ii q x} \varphih(q) \dd q_r. \label{eq:iFT}
\end{align}
Assume further that $\varphih(\cdot + \ii q_i)$ has no singularities and satisfies
\begin{align}
|\varphih(q)|
	&= \Oc( |q_r|^{-1-\eps} ) &
	&\text{as $|q_r| \to \infty$} , \label{eq:phih.assumption}
\end{align}
for some $\eps > 0$.
Then, with $X$ and $Y$ given by \eqref{eq:dX} and \eqref{eq:dY}, respectively, we have
\begin{align}
\Eb_t \varphi(Y_T)
	&=	\Eb_t g(X_T;X_t,Y_t) , \\
g(x;X_t,Y_t)
	&=	\int_\Rb \varphih(q) 
			\frac{\ee^{\ii q Y_t + (T-t)\chi(q)}}{ \ee^{(T-t)\psi(u( q \beta , q \tfrac{1}{2} \beta(1-\beta) ),0)} } 
			\ee^{ \ii u( q \beta , q \tfrac{1}{2} \beta(1-\beta) )(x - X_t) }\dd q_r, \label{eq:g.LETF}
\end{align}
where $u$, $\psi$ and $\chi$ are given by \eqref{eq:u}, \eqref{eq:psi} and \eqref{eq:chi}, respectively.
\end{theorem}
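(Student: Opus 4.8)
The plan is to represent $\varphi$ by its (generalized) Fourier inversion \eqref{eq:iFT}, evaluate at $Y_T$, take the $\Fc_t$-conditional expectation, and reduce each Fourier mode to an expectation in $X$ via Proposition \ref{prp:LETF}. Writing $q = q_r + \ii q_i$ with $q_i$ fixed, \eqref{eq:iFT} gives $\varphi(Y_T) = \int_\Rb \ee^{\ii q Y_T}\varphih(q)\,\dd q_r$. Assuming for the moment that $\Eb_t$ may be taken inside the $q_r$-integral, and using that $Y_t$ is $\Fc_t$-measurable,
\begin{align}
\Eb_t\varphi(Y_T)
	&=	\int_\Rb \varphih(q)\,\Eb_t\ee^{\ii q Y_T}\,\dd q_r
	=	\int_\Rb \varphih(q)\,\ee^{\ii q Y_t}\,\Eb_t\ee^{\ii q(Y_T-Y_t)}\,\dd q_r .
\end{align}

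Next I would substitute the conditional characteristic function \eqref{eq:char.XY} for $\Eb_t\ee^{\ii q(Y_T-Y_t)}$. Because every factor produced by \eqref{eq:char.XY} other than $\ee^{\ii u(q\beta,\,q\tfrac12\beta(1-\beta))(X_T-X_t)}$ is deterministic given $\Fc_t$, I can fold those $\Fc_t$-measurable factors back inside the conditional expectation, so that the integrand becomes $\Eb_t$ of precisely the integrand defining $g$ in \eqref{eq:g.LETF}. A second application of Fubini, now interchanging $\Eb_t$ with the $q_r$-integral in the opposite direction, then yields $\Eb_t\varphi(Y_T) = \Eb_t\int_\Rb(\cdots)\,\dd q_r = \Eb_t g(X_T;X_t,Y_t)$, which is the assertion. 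All remaining manipulations are purely algebraic.

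The main obstacle is the rigorous justification of the two Fubini interchanges; this is where the hypotheses \eqref{eq:phih.assumption} and \eqref{eq:assumption} enter. For the first interchange it suffices to bound $\int_\Rb\Eb\big|\varphih(q)\ee^{\ii q Y_T}\big|\,\dd q_r = \big(\Eb\ee^{-q_i Y_T}\big)\int_\Rb|\varphih(q)|\,\dd q_r$, since $|\ee^{\ii q Y_T}| = \ee^{-q_i Y_T}$ does not depend on $q_r$; the decay hypothesis \eqref{eq:phih.assumption} makes the $q_r$-integral finite, while \eqref{eq:assumption} together with \eqref{eq:assumption2} guarantees that $Y_T$ has finite exponential moments of all orders, so $\Eb\ee^{-q_i Y_T}<\infty$, and the no-singularity hypothesis on $\varphih(\,\cdot\,+\ii q_i)$ certifies that the contour at height $q_i$ is admissible in \eqref{eq:iFT}. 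For the second interchange the crucial point is that $q\mapsto u(q\beta,\,q\tfrac12\beta(1-\beta))$ has \emph{bounded imaginary part} along the contour: substituting these arguments into \eqref{eq:u} collapses the discriminant to $\tfrac14 - \beta^2 q(q+\ii)$, which is asymptotically a large negative real as $|q_r|\to\infty$, so its square root is asymptotically pure imaginary and $\Im u\to -\tfrac12$. Consequently $\Eb_t\ee^{-\Im u\,(X_T-X_t)}$ is bounded uniformly in $q_r$, and the same computation shows that $\Re\chi(q)$ and $\Re\psi(u,0)$ — and hence the prefactor $\ee^{(T-t)\chi(q)}/\ee^{(T-t)\psi(u,0)}$ in modulus — stay bounded in $q_r$. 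Combining these bounds with \eqref{eq:phih.assumption} gives $\int_\Rb\Eb_t\big|(\text{integrand of }g)\big|\,\dd q_r<\infty$ a.s., so Fubini applies and the proof is complete. I expect this boundedness-of-$\Im u$ estimate to be the delicate step, since it is what prevents the $X$-side integrand from exhibiting the Gaussian-in-$q_r$ growth that a naive bound on $\Eb_t\ee^{-\Im u\,(X_T-X_t)}$ would otherwise suggest.
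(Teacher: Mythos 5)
Your proposal is correct and follows essentially the same route as the paper's own proof in Appendix \ref{sec:LETF3}: Fourier inversion, interchange of $\Eb_t$ with the $q_r$-integral (the paper invokes a Parseval identity from Titchmarsh where you use Fubini with exponential moments of $Y_T$), substitution of \eqref{eq:char.XY}, and a second Fubini interchange justified by the uniform-in-$q_r$ boundedness of $\Re(\ii u(q\beta, q\tfrac{1}{2}\beta(1-\beta)))$ (equivalently of $\Im u$) and of $\Re\chi(q)$ and $\Re\psi(u(\cdots),0)$, which are exactly the paper's estimates \eqref{eq:u.bound2} and \eqref{eq:chi.bound}. One minor imprecision: along the contour $\Im u$ does not tend to $-\tfrac{1}{2}$ but to $-\tfrac{1}{2} \pm \tfrac{1}{2}\beta|2q_i+1|$; since only boundedness (not the limiting value) is used, your argument is unaffected.
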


\begin{proof}
See Appendix \ref{sec:LETF3}.
\end{proof}

\begin{example}[LETF Call option]
Consider a call option written on the LETF.  The payoff function $\varphi(y) := (\ee^y - \ee^k)^+$ has a generalized Fourier transform
\begin{align}
\varphih(q)
	&=	\frac{-\ee^{k-\ii k q}}{2 \pi (q^2+ \ii q)} , &
q_i := \Im q
	&< - 1 .
\end{align}
Observe that $| \varphih(q) | = \Oc(|q_r|^{-2})$ as $|q_r| \to \infty$, where $q_r := \Re q$.  Moreover, with $q_i < - 1$ fixed, the function $\varphih(\cdot + \ii q_i):\Rb \to \Cb$ has no singularities.  Thus, $\varphih$ satisfies the conditions of Theorem \ref{thm:LETF}.  In Figure \ref{fig:LETF} we plot the function $g(\log S_T;X_t,Y_t)$ with $g$ given by \eqref{eq:g.LETF} as a function of $S_T$ for various leverage ratios $\beta$ and for both $u=u_+$ and $u=u_-$.
\end{example}

%
%

\section{Conclusion}
\label{sec:conclude}
In this paper we consider a variety of claims written on the $\log$ price $X$ and quadratic variation $[X]$ of a risky asset $S = \ee^X$.  The asset $S$ is modeled as a positive semimartingale with finite activity jumps and independent unspecified (possibly non-Markovian) volatility.
In this setting, we show how to price various path-dependent claims relative to path-independent calls and puts on $S$.  We also show how some of these path-dependent claims can be replicated by trading the underlying $S$, a bond $B$, and calls and puts on $S$.  A number of examples are provided in which we explicitly compute a payoff function $g$ of a European claim whose value equals the value of the path-dependent claim.

%
%

\appendix
 
\section{Appendix}

\subsection{Proof of Lemma \ref{lem:E1=E2}}
\label{sec:E1=E2}
Recall the characteristic function of a normal random variable
\begin{align}
\Eb \ee^{\ii \om Z}
	&=	\ee^{\ii m \om - \tfrac{1}{2} a^2 \om^2} , &
Z
	&\sim \Nc( m , a^2 ) . \label{eq:char.Z}
\end{align}
Let $\Fc_T^\sig$ denote the sigma-algebra generated by $(\sig_t)_{0 \leq t \leq T}$.  Then $([X^c]_T - [X^c]_t) \in \Fc_T^\sig$ and
\begin{align}
X_T^c - X_t^c | \Fc_T^\sig
	&\sim \Nc( m , a^2 ) , &
m
	&=	-\tfrac{1}{2} ( [X^c]_T - [X^c]_t ) , &
a^2
	&=	[X^c]_T - [X^c]_t . \label{eq:X.normal}
\end{align}
Using \eqref{eq:char.Z} and \eqref{eq:X.normal},
\begin{align}
\Eb_t \ee^{ \ii \om (X_T^c - X_t^c) + \ii \co([X^c]_T - [X^c]_t)}
	&=	\Eb_t \ee^{\ii \co([X^c]_T - [X^c]_t)} \Eb_t[ \ee^{ \ii \om (X_T^c - X_t^c)}| \Fc_T^\sig ]  \\
	&=	\Eb_t \Eb_t[ \ee^{ (\ii\co - ( \om^2 + \ii \om)/2 )([X^c]_T - [X^c]_t)} | \Fc_T^\sig ] &
	&		\text{(by \eqref{eq:char.Z} and \eqref{eq:X.normal})}\\
	&=	\Eb_t \Eb_t[ \ee^{ ( - ( u^2(\om,\co ) + \ii u(\om,\co ))/2 )([X^c]_T - [X^c]_t)} | \Fc_T^\sig ] &
	&		\text{(by \eqref{eq:u})}\\
	&=	\Eb_t \Eb_t[ \ee^{ \ii u(\om,\co ) (X_T^c - X_t^c)}| \Fc_T^\sig ]  &
	&		\text{(by \eqref{eq:char.Z} and \eqref{eq:X.normal})} \\
	&=	\Eb_t \ee^{ \ii u(\om,\co ) (X_T^c - X_t^c)}  ,
\end{align}
which establishes \eqref{eq:E1=E2}.

\subsection{Proof of Proposition \ref{prp:char1}}
\label{sec:char1}
Because $X^c$ is $\Gb$-adapted and $X^j$ is $\Hb$-adapted and $\Fb=\Gb\vee\Hb$ where $\Gb$ and $\Hb$ are independent,
\begin{align}
\Eb_t \ee^{\ii u(\om,\co ) (X_T - X_t)}
	&=	\Eb_t \ee^{\ii u(\om,\co ) (X_T^c - X_t^c)} \Eb_t \ee^{\ii u(\om,\co ) (X_T^j - X_t^j)} \\
	&=	\Eb_t \ee^{\ii \om (X_T^c - X_t^c) + \ii \co ([X^c]_T - [X^c]_t)} \Eb_t \ee^{\ii u(\om,\co ) (X_T^j - X_t^j)} , \label{eq:char2}
\end{align}
where the second equality uses \eqref{eq:E1=E2}.  Similarly,
\begin{align}
\Eb_t \ee^{\ii \om (X_T - X_t) + \ii \co ([X]_T - [X]_t)}
	&=	\Eb_t \ee^{\ii \om (X_T^c - X_t^c) + \ii \co ([X^c]_T - [X^c]_t)}
			\Eb_t \ee^{\ii \om (X_T^j - X_t^j) + \ii \co ([X^j]_T - [X^j]_t)} \label{eq:char3}\\
	&=	\Eb_t \ee^{\ii u(\om,\co ) (X_T - X_t)}
			\frac{\Eb_t \ee^{\ii \om (X_T^j - X_t^j) + \ii \co ([X^j]_T - [X^j]_t)}}{\Eb_t \ee^{\ii u(\om,\co ) (X_T^j - X_t^j)}} . \label{eq:char4}\\
	&=	\Eb_t \ee^{\ii u(\om,\co ) (X_T - X_t)}
			\frac{\ee^{(T-t)\psi(\om,\co )}}{\ee^{(T-t)\psi(u(\om,\co ),0)}}. \label{eq:char5}
\end{align}
by \eqref{eq:char2} and applying the L\'evy-Khintchine formula to the two-dimensional L\'evy process $(X^j,[X^j])$,
whose characteristic exponent $\psi$, given by \eqref{eq:psi}, is well-defined for all $(\om,\co )\in \Cb^2$ due to \eqref{eq:assumption}.  Rearranging \eqref{eq:char5} produces \eqref{eq:char1}.

\subsection{Proof of Corollary \ref{cor:pow-exp}}
\label{sec:pow-exp}
By Proposition \ref{prp:char1},
\begin{align}
\Eb_t X_T^n [X]_T^m \ee^{\ii \om X_T  + \ii \co [X]_T }
	&=	(-\ii \d_\om)^n (-\ii \d_\co)^m \Eb_t \ee^{\ii \om X_T  + \ii \co [X]_T } \\
	&=	(-\ii \d_\om)^n (-\ii \d_\co)^m \frac{ \ee^{(T-t)\psi(\om,\co ) + \ii (\om - u(\om,\co ))X_t + \ii \co [X]_t}}{\ee^{(T-t)\psi(u(\om,\co ),0)}} \Eb_t \ee^{\ii u(\om,\co ) X_T} \\
	&=	\text{R.H.S. of \eqref{eq:name}} ,
\end{align}
where the interchanges of differentiation and expectation in the first and last equalities are justified since, for any $n,m \in \{ 0 \} \cup \Nb $ and $\om,\co  \in \Cb$, there exists a constant $c_1 > 0$ such that
\begin{align}
| \d_\om^n \d_\co^m \ee^{\ii \om x + \ii \co v} |
	&< c_1 \ee^{ c_1 (|x| + |v|)}, &
\Eb_0 c_1 \ee^{ c_1 (|X_T| + [X]_T)}
	&< \infty ,
\end{align}
where the finiteness of the expectation follows from \eqref{eq:assumption}.

\subsection{Proof of Lemma \ref{lem:Delta.Q}}
\label{sec:Delta.Q}
By independence of $\Gb$ and $\Hb$, 
\begin{equation*}
Q_t^{(q)}=Y_t^{(q)}Z_t^{(q)}.
\end{equation*}
By iterated expectations and the countability of $\mathscr{J}:=\{t: \Delta Y^{(q)}_t\neq 0\}$,
\begin{equation}
\Pb(\text{$N$ and $Y^{(q)}$ have a common jump time})
	= \Eb \sum_{t\in\mathscr{J}}\Pb( \Delta N_t\neq 0 \ |\ Y^{(q)})
	=  0  ,
\end{equation}
where the last step is because $N$ is still Poisson given $Y^{(q)}$, by independence.
Moreover, all jump times of $A$ are jump times of $N$, hence jump times of $A$ are \emph{not} jump times of $Y^{(q)}$,
and \eqref{eq:DeltaQ} follows.
Next, we have
\begin{align}
\Del Z_t^{(q)}
	&=	  Z_t^{(q)} - Z_{t-}^{(q)}  \\
	&=	  \ee^{\ii q X_t^j} \Eb_t \ee^{\ii q (X_T^j - X_t^j)} - \ee^{\ii q X_{t-}^j} \Eb_{t-} \ee^{\ii q (X_T^j - X_{t-}^j)}     \\
	&=	  \ee^{\ii q (X_{t-}^j+\Del X_t^j)} \Eb_t \ee^{\ii q (X_T^j - X_t^j)} - \ee^{\ii q X_{t-}} \Eb_{t-} \ee^{\ii q (X_T^j - X_{t-}^j)}      \\
	&=	  \ee^{\ii q (X_{t-}^j+\Del X_t^j)} \Eb_{t-} \ee^{\ii q (X_T^j - X_{t-}^j)} - \ee^{\ii q X_{t-}^j} \Eb_{t-} \ee^{\ii q (X_T^j - X_{t-}^j)}     \\
	&=	  \ee^{\ii q \Del X_t^j} \Eb_{t-} \ee^{\ii q X_T^j } - \Eb_{t-} \ee^{\ii q X_T^j }     \\
	&=	  \Big( \ee^{\ii q \Del X_t^j} - 1 \Big)  \Eb_{t-} \ee^{\ii q X_T^j }     \\
	&=		Z_{t-}^{(q)} \int_\Rb \Big( \ee^{\ii q z} - 1 \Big) N(\dd t , \dd z).
\end{align}
Multiplying by $Y_{t-}^{(q)}$ produces
\eqref{eq:DeltaZ}.

\subsection{Proof of Lemma \ref{lem:sym}}
\label{sec:sym}
By independence of $\Gb$ and $\Hb$, we have
\begin{align}
\Eb_t \ee^{\ii q (X_T - X_t)}
	&=	\Eb_t \ee^{\ii q (X_T^c - X_t^c)} \Eb_t \ee^{\ii q (X_T^j - X_t^j)} \\
	&=	\Eb_t \Eb_t[ \ee^{\ii q (X_T^c - X_t^c)} | \Fc_T^\sig] \Eb_t \ee^{\ii q (X_T^j - X_t^j)} \\
	&=	\Eb_t \ee^{\tfrac{1}{2}(-q^2-\ii q)([X^c]_T - [X^c]_t)} \Eb_t \ee^{\ii q (X_T^j - X_t^j)} , \label{eq:1}
\end{align}
where the third equality uses \eqref{eq:char.Z} and \eqref{eq:X.normal}.
Next, noting that with $h(q):=q^2+\ii q$ we have $h(q)=h(-\ii - q)$, it follows from \eqref{eq:1} that
\begin{align}
\frac{\Eb_t \ee^{\ii q (X_T - X_t)}}{ \Eb_t \ee^{\ii q (X_T^j - X_t^j)}}
	&=	\frac{\Eb_t \ee^{\ii (-\ii - q) (X_T - X_t)}}{ \Eb_t \ee^{\ii (-\ii - q) (X_T^j - X_t^j)}}  \label{eq:2}
\end{align}
(unless either denominator is zero, but in that case, \eqref{eq:RQ=RQ} holds because $Q_t^{(q)}=Q_t^{(-\ii-q)} =0$).
Expression \eqref{eq:RQ=RQ} follows from \eqref{eq:char5} and \eqref{eq:2}.

\subsection{Proof of Theorem \ref{thm:hedge}}
\label{sec:hedge}
We compute
\begin{align}
\dd A_t
	&=	( \ldots ) \dd t + \ii (\om-u) \frac{A_{t-}}{S_{t-}} \dd S_t + \Del A_t - \ii (\om-u) \frac{A_{t-}}{S_{t-}} \Del S_t , \label{eq:dA} \\
\Del A_t
	&=	A_{t-} \int_\Rb \Big(  \ee^{\ii (\om-u) z + \ii \co z^2} - 1 \Big) N(\dd t, \dd z) , \label{eq:DeltaA} \\
\Del S_t
	&=	S_{t-} \int_\Rb  (\ee^z - 1) N(\dd t, \dd z) , \label{eq:DeltaS}
\end{align}
where, as we shall see, the $( \ldots ) \dd t$ terms will play no role.
Next,
we have from Lemma \ref{lem:Delta.Q} that
\begin{align}
\dd [ A, Q^{(u)} ]_t
	&=	( \ldots ) \dd t + \Del A_t \Del Q_t^{(u)}
	=		( \ldots ) \dd t + \Del A_t (Y_{t-}^{(u)} \Del Z_t^{(u)})  . \label{eq:AQ-covar}
\end{align}
Now, using \eqref{eq:DeltaQ}, \eqref{eq:dA}, \eqref{eq:DeltaA}, \eqref{eq:DeltaS} and \eqref{eq:AQ-covar}, we have 
\begin{align}
\dd (A_t Q_t^{(u)})
	&=	A_{t-} \dd Q_t^{(u)} + Q_{t-}^{(u)} \dd A_t + \dd [ A, Q^{(u)} ]_t \\
	&=	( \ldots ) \dd t + A_{t-} \dd Q_t^{(u)}
			+ \ii (\om-u) \frac{A_{t-}Q_{t-}^{(u)}}{S_{t-}} \dd S_t + \Del \Gamma_t^{(u)}, \label{eq:dAQ-proof}
\end{align}
where $\Del \Gam_t^{(u)}$ is defined in \eqref{eq:DeltaGamma}.
Likewise, 
\begin{align}
\dd R_t^{(q)}
	&=	( \ldots ) \dd t - \ii q \frac{R_{t-}^{(q)}}{S_{t-}} \dd S_t + \Del R_t^{(q)} + \ii q \frac{R_{t-}^{(q)}}{S_{t-}} \Del S_t , \label{eq:dR} \\
\Del R_t^{(q)}
	&=	R_{t-}^{(q)} \int_\Rb \Big(  \ee^{-\ii q z } - 1 \Big) N(\dd t, \dd z) ,
\end{align}
from which
\begin{align}
\dd (R_t^{(q)} Q_t^{(q)})
	&=	( \ldots ) \dd t + R_{t-}^{(q)} \dd Q_t^{(q)}
			- \ii q \frac{R_{t-}^{(q)} Q_{t-}^{(q)}}{S_{t-}} \dd S_t + \Del \Om_t^{(q)}, \label{eq:dRQ}
\end{align}
where $\Del \Om_t^{(q)}$ is defined in \eqref{eq:DeltaOmega}.
Note that we have used $\Del R_t^{(q)}\Del Q_t^{(q)} =	\Del R_t^{(q)} (Y_{t-}^{(q)} \Del Z_t^{(q)})$, which follows by replacing $A$ with $R^{(q)}$ in Lemma \ref{lem:Delta.Q} and its proof.
Next, from \eqref{eq:RQ=RQ} and \eqref{eq:dRQ} we have
\begin{align}
0
	&=	\dd (R_t^{(q)} Q_t^{(q)}) - \dd (R_t^{(-\ii-q)} Q_t^{(-\ii-q)}) \\
	&=	( \ldots ) \dd t
			+ R_{t-}^{(q)} \dd Q_t^{(q)} - R_{t-}^{(-\ii-q)} \dd Q_t^{(-\ii-q)}
			+ (1 - 2 \ii q) \frac{R_{t-}^{(q)} Q_{t-}^{(q)}}{S_{t-}} \dd S_t
			+ \Del \Om_t^{(q)} - \Del \Om_t^{(-\ii-q)} . \label{eq:dRQ-dRQ}
\end{align}
Finally, combining \eqref{eq:H}, \eqref{eq:dAQ-proof} and \eqref{eq:dRQ-dRQ},
\begin{align}
\dd (A_t Q_t^{(u)})
	&=	A_{t-} \dd Q_t^{(u)} + \ii (\om-u) \frac{A_{t-}Q_{t-}^{(u)}}{S_{t-}} \dd S_t \\ &\quad
			+ \sum_{j = 1}^{m} H_{t-}^{(j)} \Big(
			R_{t-}^{(q_j)} \dd Q_t^{(q_j)} - R_{t-}^{(-\ii-q_j)} \dd Q_t^{(-\ii-q_j)}
			+ (1 - 2 \ii q_j) \frac{R_{t-}^{(q_j)} Q_{t-}^{(q_j)}}{S_{t-}} \dd S_t
			\Big) ,
\end{align}
where the $( \ldots ) \dd t$ terms must vanish since the processes $A Q^{(u)}$, $S$ and $Q^{(q)}$ are martingales.

\subsection{Proof of Proposition \ref{prp:matrix}}
\label{sec:matrix}
From \eqref{eq:DeltaGamma} and \eqref{eq:DeltaOmega} we observe that
\begin{align}
\Del \Gam_t		
	&=	A_{t-} Q_{t-}^{(u)} \int_\Rb F(z) N(\dd t, \dd z) , \\
\Del \Om_t^{(q)} - \Del \Om_t^{(-\ii-q)}
	&=	R_{t-}^{(q)} Q_{t-}^{(q)} \int_\Rb G(z;q) N(\dd t, \dd z) .
\end{align}
From \eqref{eq:nu}, we see that $N(\dd t, \Rb) \in \{ 0 \} \cup \{ z_1, z_2, \ldots z_n \}$.  Thus, in order for \eqref{eq:H} to hold, we must have
\begin{align}
A_{t-} Q_{t-}^{(u)} F(z_i)
	&=	\sum_{j=1}^m  H_t^{(j)} R_{t-}^{(q_j)} Q_{t-}^{(q_j)} G(z_i;q_j) , &
i
	&=	1, 2, \ldots , n . \label{eq:system}
\end{align}
From \eqref{eq:K} and \eqref{eq:L}, we see that \eqref{eq:system} is given in matrix notation by \eqref{eq:K=LH}.

\subsection{Proof of Corollary \ref{cor:twojumps}}\label{sec:twojumps}

{
Let $\ii\mathbb{R}\subset\mathbb{C}$ denote the imaginary axis.}  
{By Proposition \ref{prp:matrix}, given $z_1, z_2$, 
we need only verify the existence of $q_1,q_2$.  It suffices to choose $q_1\in\ii\mathbb{R}\setminus\{0,-\ii/2,-\ii\}$ arbitrarily, 
and to choose $q_2\in\ii\mathbb{R}$ such that $D(q_2)\neq 0$ where $D(q):=G(z_1;q_1)G(z_2;q)-G(z_1;q)G(z_2;q_1)$; the existence of such $q_2$ 
is clear because $|D(q)|\to\infty$ as $q\to\pm\ii\infty$.  Moreover, for $q_1,q_2\in \ii\mathbb{R}$, the $RQ$ factors in \eqref{eq:twojumps} never vanish, 
hence the invertibility condition holds.  }

\subsection{Proof of Proposition \ref{prp:frac}}
\label{sec:frac}
We have from \cite[equation (1.2.3)]{schurger2002laplace} that
\begin{align}
v^r
	&=	\frac{r}{\Gam(1-r)} \int_0^\infty \frac{1}{z^{r+1}} \Big( 1-\ee^{-z v} \Big)\dd z , &
0
	&<r<1 . \label{eq:frac}
\end{align}
Thus
\begin{align}
\Eb [X]_T^r
	&=	\frac{r}{\Gam(1-r)} \int_0^\infty \frac{1}{z^{r+1}} \Eb \Big( 1-\ee^{-z [X]_T} \Big)\dd z  &
			&\text{(by \eqref{eq:frac} and Tonelli)} \\
	&=	\frac{r}{\Gam(1-r)} \int_0^\infty \frac{1}{z^{r+1}} \Eb
			\Big( \ee^{\ii u(0,0) X_T} - \frac{ \ee^{T\psi(0,\ii z)}}{\ee^{T \psi(u(0,\ii z),0)}} \ee^{\ii u(0,\ii z) X_T} \Big)\dd z  &
			&\text{(by \eqref{eq:char1})} \\
	&=	\frac{r}{\Gam(1-r)} \Eb \int_0^\infty \frac{1}{z^{r+1}}
			\Big( \ee^{\ii u(0,0) X_T} - \frac{ \ee^{T\psi(0,\ii z)}}{\ee^{T \psi(u(0,\ii z),0)}} \ee^{\ii u(0,\ii z) X_T} \Big) \dd z &
			&\text{(by Fubini)} \\
	&=	\Eb g(X_T) , &
			&\text{(by \eqref{eq:g0.frac})}
\end{align}
where the use of Fubini is justified as follows.  Define
\begin{align}
Z(\co)
	&:=	T \Big( \psi(0,\ii \co) - \psi(u(0,\ii \co),0) \Big) + \ii u(0,\ii \co) X_T . \label{eq:Z.def}
\end{align}
Consider the case $u = u_+$; the case $u=u_-$ is analogous.  Using $u_+(0,0) = 0$ and \eqref{eq:Z.def},
\begin{align}
\Eb \Big| \ee^{\ii u(0,0) X_T} - \frac{ \ee^{T\psi(0,\ii \co)}}{\ee^{T \psi(u(0,\ii \co),0)}} \ee^{\ii u(0,\ii \co) X_T} \Big|
	&=	\Eb \Big| 1 - \ee^{Z(\co)} \Big| . \label{eq:E.stuff}
\end{align}
Observe that
\begin{align}
\Big( \Eb \big| 1 - \ee^{Z(\co)} \big| \Big)^2
	&\leq \Eb \Big| 1 - \ee^{Z(\co)} \Big|^2
	=		\Eb \Big( 1 + \ee^{2 \Re {Z(\co)}} - \ee^{ \Re {Z(\co)}} 2 \cos \Im {Z(\co)} \Big) . \label{eq:square}
\end{align}
By \eqref{eq:u} and \eqref{eq:psi},
\begin{align}
\ii u(0,\ii \co)
	&=	\tfrac{1}{2} - \sqrt{\tfrac{1}{4} - 2  \co} , \\
\psi(0,\ii \co) - \psi(u(0,\ii \co),0)
	&=	\int_\Rb \Big( \ee^{- \co z^2} - \ee^{ \ii u(0,\ii \co) z} - \ii u(0,\ii \co)(\ee^z - 1) \Big) \nu(\dd z).
\end{align}
Noting that $0 \leq \Re( \ii u(0,\ii \co) ) \leq 1$ and recalling from \eqref{eq:assumption} that $\nu(\Rb) < \infty$ and $\nu(|z|>c)=0$, we have
\begin{align}
\sup_{\co \in \Rb_+} \Re \Big( \psi(0,\ii \co) - \psi(u(0,\ii \co),0) \Big)
	&\leq \int_\Rb \Big( 1 + \ee^{ c } + |\ee^c - 1| \Big)\nu(\dd z)
	=			\nu(\Rb) \Big( 1 + \ee^{ c } + |\ee^c - 1| \Big), \\
\inf_{\co \in \Rb_+} \Re \Big( \psi(0,\ii \co) - \psi(u(0,\ii \co),0) \Big)
	&\geq \int_\Rb \Big( - \ee^{ c } - |\ee^c - 1| \Big)\nu(\dd z)
	=			\nu(\Rb) \Big( - \ee^{ c } - |\ee^c - 1| \Big).
\end{align}
Thus, from \eqref{eq:Z.def}, we conclude that $\Re Z(\co)$ is bounded uniformly in $\co$.  Combining the uniform bound of $\Re Z(\co)$ with
\eqref{eq:E.stuff} and \eqref{eq:square}, it follows that
\begin{align}
\Eb \Big| \ee^{\ii u(0,0) X_T} - \frac{ \ee^{T\psi(0,\ii \co)}}{\ee^{T \psi(u(0,\ii \co),0)}} \ee^{\ii u(0,\ii \co) X_T} \Big|
	&= \Oc(1) , &
	&\text{as $\co \to \infty$} . \label{eq:s2infty}
\end{align}
On the other hand, for $\co$ small enough, we have $\ii u(0,\ii \co) \in \Rb$, hence
\begin{align}
\Big( \Eb |1-\ee^{Z(\co)}| \Big)^2
	&\leq \Eb \Big|1-\ee^{Z(\co)} \Big|^2
	=	\Eb \Big(1 + \ee^{2 Z(\co)} - 2 \ee^{ Z(\co)} \Big) . &
	&\text{(for $\co$ small enough)} \label{eq:e1}
\end{align}
Next, observe that
\begin{align}
\Eb \ee^{Z(\co)}
	&=	\ee^{T(\psi(0,\ii \co)-\psi(u(0,\ii \co),0))} \Eb \ee^{ \ii u(0,\ii \co) X_T} \\
	&=	\ee^{T(\psi(0,\ii \co)-\psi(u(0,\ii \co),0))} \Eb \ee^{ \ii u(0,\ii \co) X_T^j} \Eb \ee^{ \ii u(0,\ii \co) X_T^c} \\
	&=	\ee^{T(\psi(0,\ii \co)-\psi(u(0,\ii \co),0))} \ee^{ T \psi(u(0,\ii \co),0) } \Eb \ee^{ - \co [X^c]_T} \\
	&=	\ee^{T \psi(0,\ii \co) } \Eb \ee^{ - \co [X^c]_T} \\
	&=	1 - \Big( M'(0) + \< \Del X^2 \> \Big) \co + \Oc(\co^2) , &
	&\text{as $\co \to 0$} , \label{eq:e2}
\end{align}
where $M(t) := \Eb \ee^{ t [X^c]_T}$ and $\< f(\Del X) \> := \int_\Rb f(z) \nu(\dd z)$.
Here, we are using that $M$ is an entire function, which follows from \eqref{eq:assumption} and \cite[Lemma 25.6]{sato1999levy}.
We also have
\begin{align}
\Eb \ee^{2 Z(\co)}
	&=	\ee^{2T(\psi(0,\ii \co)-\psi(u(0,\ii \co),0))} \Eb \ee^{ 2 \ii u(0,\ii \co) X_T} \\
	&=	\ee^{2T(\psi(0,\ii \co)-\psi(u(0,\ii \co),0))} \Eb \ee^{ 2 \ii u(0,\ii \co) X_T^j} \Eb \ee^{ 2 \ii u(0,\ii \co) X_T^c} \\
	&=	\ee^{2T(\psi(0,\ii \co)-\psi(u(0,\ii \co),0))} \ee^{ T \psi(2u(0,\ii \co),0) } \Eb \ee^{ - w(\co) [X^c]_T} \\
	&=	1 - 2 \Big( M'(0)+\< \Del X^2 \> \Big) \co + \Oc(\co^2) &
	&\text{as $\co \to 0$}, \label{eq:e3}
\end{align}
where $w(\co)=\frac{1}{2} ( 8 \co +\sqrt{1-8 \co}-1 )$ (for $u=u_+$) solves $\ii u(0,\ii w(\co)) = 2 \ii u(0,\ii \co)$ so that
\begin{align}
\Eb \ee^{-w(\co) [X^c]_T }
	&=	\Eb \ee^{\ii u(0,\ii w(\co)) X_T^c}
	=		\Eb \ee^{2 \ii u(0,\ii w(\co)) X_T^c} .
\end{align}
Inserting \eqref{eq:e2} and \eqref{eq:e3} into \eqref{eq:e1}, we obtain $(\Eb | 1-\ee^{Z(\co)} | )^2
	=	\Oc(\co^2) $ hence
\begin{align}
\Eb \big| 1-\ee^{Z(\co)} \big|
	&=	\Oc(\co) , &
	&\text{as $\co \to 0$} . \label{eq:s2zero}
\end{align}
By \eqref{eq:E.stuff}, \eqref{eq:s2infty}, and \eqref{eq:s2zero},
\begin{align}
\int_0^\infty \, \frac{1}{\co^{r+1}} \Eb
			\Big| \ee^{\ii u(0,0) X_T} - \frac{ \ee^{T\psi(0,\ii \co)}}{\ee^{T \psi(u(0,\ii \co),0)}} \ee^{\ii u(0,\ii \co) X_T} \Big|\dd \co
			& < \infty ,
\end{align}
justifying the use of Fubini.

\subsection{Proof of Proposition \ref{prp:ratio}}
\label{sec:ratio}
We have from \cite[equation (1.0.1)]{schurger2002laplace} that
\begin{align}
\frac{ x \ee^{\ii p x} }{ ( v + \eps )^{r} }
	&=	\frac{1}{r\Gam(r)} \int_0^\infty  x \ee^{\ii p x - z^{1/r} (v + \eps)}\ \dd z, &
r
	&>	0 . \label{eq:ratio}
\end{align}
hence
\begin{align}
\Eb \frac{X_T \ee^{\ii p X_T} }{ ( [X]_T + \eps )^r }
	&=	\frac{1}{r\Gam(r)} \Eb  \int_0^\infty X_T \ee^{\ii p X_T - z^{1/r} ([X]_T + \eps)} \ \dd z&
			&\text{(by \eqref{eq:ratio})} \\
	&=	\frac{1}{r\Gam(r)} \int_0^\infty \Eb X_T \ee^{\ii p X_T - z^{1/r} ([X]_T + \eps)} \ \dd z&
			&\text{(by Fubini)} \\
	&=	\frac{1}{r\Gam(r)} \int_0^\infty (-\ii \d_p) \Eb \ee^{\ii p X_T - z^{1/r} [X]_T - z^{1/r} \eps} \ \dd z&
			&\text{(by Leibniz)} \\
	&=	\frac{1}{r\Gam(r)} \int_0^\infty (-\ii \d_p)
			\frac{ \ee^{T\psi(p,\ii z^{1/r})}}{\ee^{T \psi(u(p,\ii z^{1/r}),0)}} \Eb \ee^{\ii u(p,\ii z^{1/r}) X_T - z^{1/r} \eps} \ \dd z&
			&\text{(by \eqref{eq:char1})} \\
	&=	\frac{1}{r\Gam(r)} \Eb \int_0^\infty (-\ii \d_p)
			\frac{ \ee^{T\psi(p,\ii z^{1/r})}}{\ee^{T \psi(u(p,\ii z^{1/r}),0)}} \ee^{\ii u(p,\ii z^{1/r}) X_T - z^{1/r} \eps} \ \dd z&
			&\text{(by Fubini)} \\
	&=	\Eb g(X_T) . &
			&\text{(by \eqref{eq:g0.ratio})}
\end{align}
The use of the Leibniz has already been justified in the proof of Corollary \ref{cor:pow-exp}.  The first use of Fubini's Theorem is justified since $\Eb | X_T \ee^{\ii p X_T - z^{1/r} [X]_T} | \leq \Eb | X_T \ee^{\ii p X_T } |  < \infty$, for all $p \in \Cb$, and $z \geq 0$, which implies
\begin{align}
\int_0^\infty \, \Eb \Big| X_T \ee^{\ii p X_T - z^{1/r} [X]_T  } \Big| \ee^{- z^{1/r} \eps} \dd z 
	&< \infty .
\end{align}
The second application of Fubini is justified as follows.  Define
\begin{align}
Y(p,\co)
	&:=	T \Big( \psi(p,\ii \co) - \psi(u(p,\ii \co),0) \Big) + \ii u(p,\ii \co) X_T . \label{eq:Y.def}
\end{align}
Observe that
\begin{align}
(-\ii \d_p) \frac{ \ee^{T\psi(p,\ii z^{1/r})}}{\ee^{T \psi(u(p,\ii z^{1/r}),0)}} \ee^{\ii u(p,\ii z^{1/r}) X_T }
	&=	- \ii \ee^{Y(p,z^{1/r})} \d_p Y(p,z^{1/r}) , \\
\Big| (-\ii \d_p) \frac{ \ee^{T\psi(p,\ii z^{1/r})}}{\ee^{T \psi(u(p,\ii z^{1/r}),0)}} \ee^{\ii u(p,\ii z^{1/r}) X_T } \Big|
	&= \ee^{\Re Y(p,z^{1/r})} \big| \d_p Y(p,z^{1/r}) \big| . \label{eq:abs1}
\end{align}
From \eqref{eq:u} and \eqref{eq:psi} we have
\begin{align}
\ii u_\pm(p,\ii \co)
	&=	\tfrac{1}{2} \mp \sqrt{\tfrac{1}{4} -p^2 - \ii p - 2  \co} , \\
\psi(p,\ii \co) - \psi(u(p,\ii \co),0)
	&=	\int_\Rb \Big( \ee^{\ii p z - \co z^2} - \ee^{ \ii u(p,\ii \co) z} - ( \ii p - \ii u(p,\ii \co) \big) (\ee^z - 1) \Big) \nu(\dd z).
\end{align}
Noting that, for any $a,b \in \Rb$ we have
\begin{align}
\Big| \Re \sqrt{a + \ii b} \Big|
	&= \sqrt{ \frac{ (a^2+b^2)^{1/2} + a }{2} }, &
	&\text{which implies}&
\sup_{a \leq \overline{a} } \Big| \Re \sqrt{a + \ii b} \Big|
	&=	\sqrt{ \frac{ (\overline{a}^2+b^2)^{1/2} + \overline{a} }{2} } ,
\end{align}
it follows that there exists a constant $c_1$ such that
\begin{align}
\sup_{\co \in \Rb_+} \Big| \Re \ii u(p,\ii \co) \Big|
	&< c_1 , &
\sup_{\co \in \Rb_+} \Big| \Re \Big( \psi(p,\ii \co) - \psi(u(p,\ii \co),0) \Big) \Big|
	&< c_1 , \label{eq:bounds}
\end{align}
where the second inequality follows from \eqref{eq:assumption}, the uniform bound on $| \Re \ii u(p,\ii \co) |$ and
\begin{align}
\Re \Big( \psi(p,\ii \co) - \psi(u(p,\ii \co),0) \Big)
	&=	\int_\Rb \Big( \ee^{- p_i z - \co z^2} \cos ( p_r z )- \ee^{ \Re \ii u(p,\ii \co) z} \cos (\Im \ii u(p,\ii \co) z) \Big)\ \nu(\dd z) 
			\\ &\quad
			- ( - p_i - \Re \ii u(p,\ii \co) \big) \int_\Rb (\ee^z - 1)\ \nu(\dd z) .
\end{align}
Now, observe that
\begin{align}
\d_p Y(p,\co)
	&=	T \int_\Rb \Big( \ii z \ee^{\ii p z - \co z^2} - \ee^{ \ii u(p,\ii \co) z} \d_p  \ii u(p,\ii \co) z - \ii (\ee^z - 1) \Big)\ \nu(\dd z)  
			+ \ii \d_p u(p,\ii \co) X_T , \\
\d_p u(p,\ii \co)
	&=	\frac{1-2 \ii p}{\sqrt{-4 p^2-4 \ii p-8 \co+1}} , \label{eq:dpu}
\end{align}
from which 
\begin{align}
|\d_p Y(p,\co) |
	&\leq	T \int_\Rb\Big( |z| \ee^{- p_r z - \co z^2} + \ee^{ \Re \ii u(p,\ii \co) z} | \d_p  \ii u(p,\ii \co) z|  + |\ee^z - 1 | \Big)
\ \nu(\dd z) 
				+ | \d_p u(p,\ii \co) X_T | . \label{eq:dpY}
\end{align}
Combining \eqref{eq:Y.def}, \eqref{eq:abs1}, \eqref{eq:bounds}, \eqref{eq:dpu} and \eqref{eq:dpY},
\begin{align}
\Eb \Big| (-\ii \d_p) \frac{ \ee^{T\psi(p,\ii z^{1/r})}}{\ee^{T \psi(u(p,\ii z^{1/r}),0)}} \ee^{\ii u(p,\ii z^{1/r}) X_T } \Big|
	&=	\Eb \ee^{\Re Y(p,z^{1/r})} \big| \d_p Y(p,z^{1/r}) \big|
	 =	\Oc(1) , &
	&\text{as $z \to \infty$} . \label{eq:z2infty}
\end{align}
Next, for any $a \in \Cb$, $r \in (0,1)$ and $\eps > 0$, 
\begin{align}
\int_0^\infty \Big| \frac{\ee^{-\eps z^{1/r}}}{\sqrt{a-z^{1/r}}} \Big| \dd z 
	&< \infty . \label{eq:finite}
\end{align}
By \eqref{eq:z2infty} and \eqref{eq:finite},
\begin{align}
\int_0^\infty  
\Eb \Big| (-\ii \d_p) \frac{ \ee^{T\psi(p,\ii z^{1/r})}}{\ee^{T \psi(u(p,\ii z^{1/r}),0)}} \ee^{\ii u(p,\ii z^{1/r}) X_T } \Big|
\ee^{- \eps z^{1/r}} \dd z
	&<	\infty ,
\end{align}
justifying the use of Fubini.

\subsection{Proof of Proposition \ref{prp:negative}}
\label{sec:negative}
The proof is completely analogous to the proof of Proposition \ref{prp:ratio}.  The only significant change in the proof is that, since the operator $\d_p$ does not appear in \eqref{eq:g0.negative}, one no longer needs to be concerned about the singularity that appears in the expression \eqref{eq:dpu} of $\d_p u(p,\ii \co)$.  As a result, expression \eqref{eq:g0.negative} holds for all $r>0$.

\subsection{Proof of Proposition \ref{prp:LETF}}
\label{sec:LETF2}
First, we observe that $Y^j$, given by \eqref{eq:dYj}, is a L\'evy process with characteristic exponent $\chi$.  We have
\begin{align}
\Eb_t \ee^{\ii q (Y_T^j - Y_t^j )}
	&=	\ee^{(T-t)\chi(q)} . \label{eq:chi2}
\end{align}
Next, we compute
\begin{align}
\Eb_t \ee^{\ii q (Y_T - Y_t)}
	&=	\Eb_t \ee^{\ii q (Y_T^c - Y_t^c)} \Eb_t \ee^{\ii q (Y_T^j - Y_t^j)} &
	&\text{(as $Y^c \ind Y^j$)} \\
	&=	\Eb_t \ee^{\ii q \beta (X_T^c - X_t^c) + \ii q \tfrac{1}{2} \beta(1-\beta) ([X^c]_T - [X^c]_t)} \Eb_t \ee^{\ii q (Y_T^j - Y_t^j)} &
	&\text{(by \eqref{eq:dYc})} \\
	&=	\Eb_t \ee^{ \ii u( q \beta , q \tfrac{1}{2} \beta(1-\beta) )(X_T^c - X_t^c) } \Eb_t \ee^{\ii q (Y_T^j - Y_t^j)} &
	&\text{(by \eqref{eq:E1=E2})} \\
	&=	\Eb_t \ee^{ \ii u( q \beta , q \tfrac{1}{2} \beta(1-\beta) )(X_T - X_t) } 
			\frac{ \Eb_t \ee^{\ii q (Y_T^j - Y_t^j)} }{ \Eb_t \ee^{ \ii u( q \beta , q \tfrac{1}{2} \beta(1-\beta) )(X_T^j - X_t^j) } } &
	&\text{(as $X^c \ind X^j$)} \\
	&=	\frac{\ee^{(T-t)\chi(q)}}{\ee^{(T-t)\psi(u( q \beta , q \tfrac{1}{2} \beta(1-\beta) ),0)}}
			\Eb_t \ee^{ \ii u( q \beta , q \tfrac{1}{2} \beta(1-\beta) )(X_T - X_t) } . &
	&\text{(by \eqref{eq:char5} and \eqref{eq:chi2})} 
\end{align}
Thus, we have established \eqref{eq:char.XY}.

\subsection{Proof of Theorem \ref{thm:LETF}}
\label{sec:LETF3}
We compute
\begin{align}
\Eb_t \varphi(Y_T)
	&=	\Eb_t \int_\Rb \varphih(q) \ee^{\ii q Y_T}\dd q_r  &
	&		\text{(by \eqref{eq:iFT})} \\
	&=	\int_\Rb \varphih(q) \ee^{\ii q Y_t} \Eb_t \ee^{\ii q (Y_T-Y_t)}\dd q_r  &
	&		\text{(by Parseval)} \\
	&=	\int_\Rb \varphih(q) \frac{\ee^{\ii q Y_t + (T-t)\chi(q)}}{ \ee^{(T-t)\psi(u( q \beta , q \tfrac{1}{2} \beta(1-\beta) ),0)} } 
			\Eb_t \ee^{ \ii u( q \beta , q \tfrac{1}{2} \beta(1-\beta) )(X_T - X_t) }\dd q_r  &
	&		\text{(by \eqref{eq:char.XY})} \\
	&=	\Eb_t \int_\Rb \varphih(q) 
			\frac{\ee^{\ii q Y_t + (T-t)\chi(q)}}{ \ee^{(T-t)\psi(u( q \beta , q \tfrac{1}{2} \beta(1-\beta) ),0)} } 
			\ee^{ \ii u( q \beta , q \tfrac{1}{2} \beta(1-\beta) )(X_T - X_t) }\dd q_r  &
	&		\text{(by Fubini)} \\
	&=	\Eb_t g(X_T;X_t,Y_t) . &
	&		\text{(by \eqref{eq:g.LETF})} 
\end{align}
Parseval-style identity is allowed by \cite[Theorem 39]{titchmarsh1948introduction}.  The use of Fubini's Theorem is justified as follows.
Without loss of generality, we may assume $t=0$ and take $X_0 = Y_0 = 0$.  We must show
\begin{align}
&\int_\Rb 
			\Big| \varphih(q)  \ee^{T(\chi(q)-\psi(u( q \beta , q \tfrac{1}{2} \beta(1-\beta) ),0))}
			\ee^{ \ii u( q \beta , q \tfrac{1}{2} \beta(1-\beta) )X_T } \Big| \dd q_r\\
&= \int_\Rb 
			\Big| \varphih(q) \Big| \ee^{T(\Re \chi(q)- \Re \psi(u( q \beta , q \tfrac{1}{2} \beta(1-\beta) ),0))}
			\ee^{ \Re \ii u( q \beta , q \tfrac{1}{2} \beta(1-\beta) )X_T } \dd q_r
<  \infty . \label{eq:bound-this}
\end{align}
From \eqref{eq:u} we have
\begin{align}
\Re \ii u_\pm(q \beta, q \tfrac{1}{2} \beta(1-\beta) )
	&=	\tfrac{1}{2} \pm \sqrt{ a + \ii b } , &
a
	&=	\tfrac{1}{4}+\beta^2 (q_i^2+q_i-q_r^2) , &
b
	&=	-\beta^2 (2 q_i q_r+q_r) .
\end{align}
Noting that, for any $a,b \in \Rb$ we have 
$\Big| \Re \sqrt{a + \ii b} \Big| = \sqrt{ ((a^2+b^2)^{1/2} + a )/2}$,
it follows that there exists a constant $c_1<\infty$ such that
\begin{align}
\sup_{q_r \in \Rb} \Big| \Re \ii u_\pm(q \beta, q \tfrac{1}{2} \beta(1-\beta) ) \Big|
	&< c_1 . \label{eq:u.bound2}
\end{align}
Next, we note from \eqref{eq:psi} and \eqref{eq:chi} that
\begin{align}
\Re \psi(u,0)
	&=	\int_\Rb \Big( \ee^{\Re \ii u z} \cos (\Im \ii u z) - 1 - \Re \ii u (\ee^z - 1) \Big) \nu(\dd z), \label{eq:Re.psi} \\
\Re \chi(q)
	&=	\int_\Rb \Big( 
			\ee^{- q_i \log ( \beta(\ee^z -1) + 1 ) } \cos \big(q_r \log ( \beta(\ee^z -1) + 1 ) \big)  - 1 + q_i \beta(\ee^z-1) 
			\Big) \nu(\dd z) . \label{eq:Re.chi}
\end{align}
It follows from \eqref{eq:assumption}, \eqref{eq:u.bound2},  \eqref{eq:Re.psi} and \eqref{eq:Re.chi}, that there exists a constant $c_2 < \infty$ such that
\begin{align}
\sup_{q_r \in \Rb} \Big| \Re \psi(u( q \beta , q \tfrac{1}{2} \beta(1-\beta) ),0) \Big|
	&<	c_2 , &
\sup_{q_r \in \Rb} \Big| \Re \chi(q) \Big|
	&<	c_2	. \label{eq:chi.bound}
\end{align}
Finally, from \eqref{eq:phih.assumption}, \eqref{eq:u.bound2} and \eqref{eq:chi.bound} we conclude that inequality \eqref{eq:bound-this} holds, justifying the use of Fubini's Theorem.

%
%

\bibliographystyle{chicago}
\bibliography{Bibtex-Master-3.05rrvd}	

%
%


\begin{figure}
\centering
\begin{tabular}{c|c}
Effect of jump size & Effect of jump intensity \\
\includegraphics[width=0.45\textwidth]{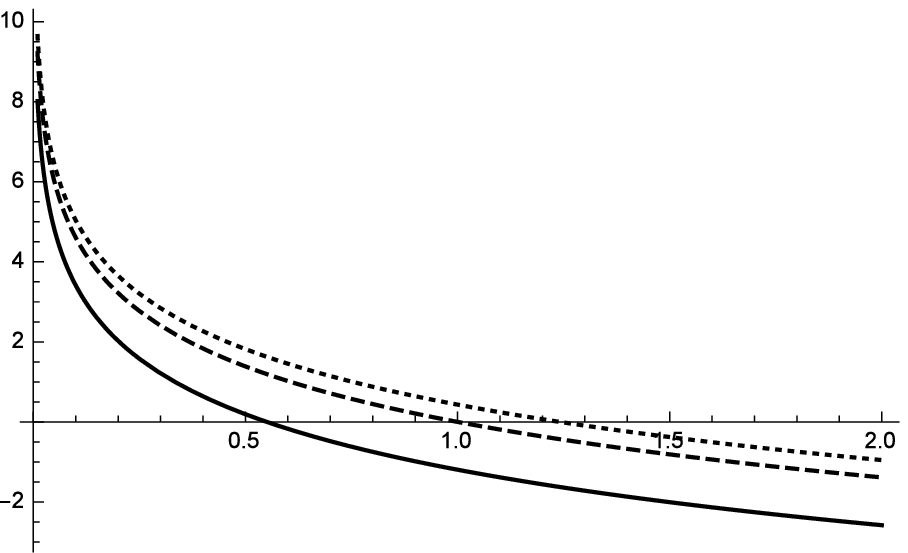} &
\includegraphics[width=0.45\textwidth]{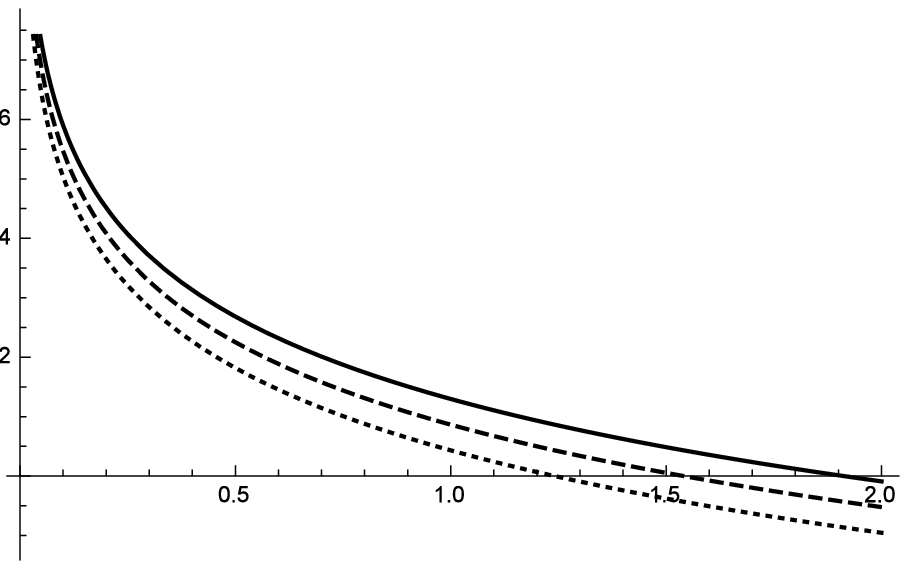} \\
$u=u_+$ & $u=u_+$ \\
\includegraphics[width=0.45\textwidth]{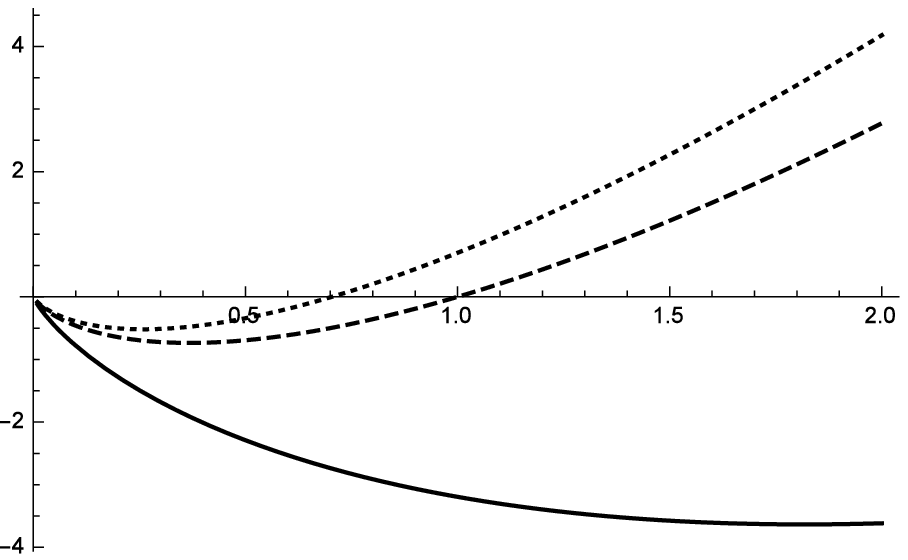} &
\includegraphics[width=0.45\textwidth]{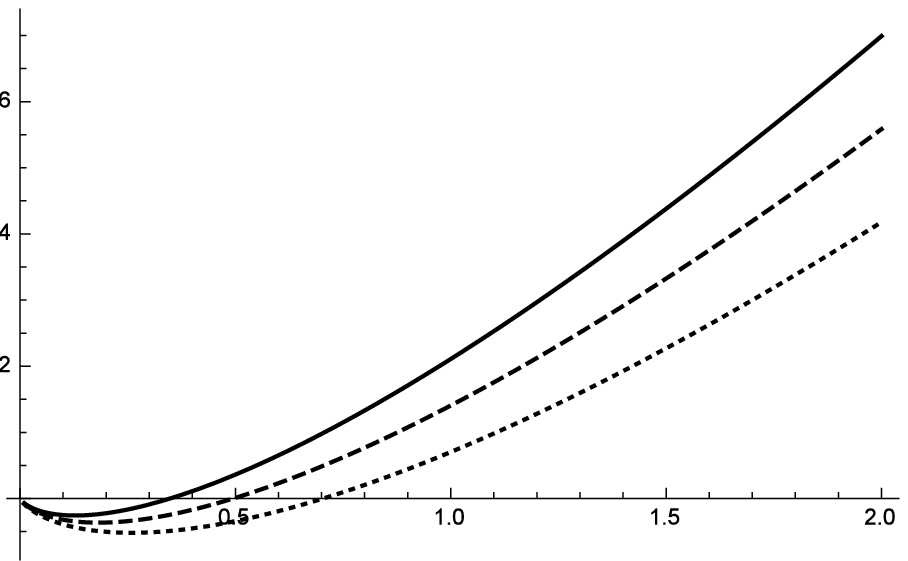} \\
$u=u_-$ & $u=u_-$ 
\end{tabular}
\caption{
We consider a Dirac L\'evy measure $\nu(\dd z) = \lam \del_m(z)\dd z$,
a variance swap payoff $[X]_T$ and plot the function $g(\log S_T;0,0)$ that prices the variance swap as a function of $S_T$.
\emph{Left}:
We examine the effect of the jump size $m$ when $g$ is computed using both $u=u_+$ and $u=u_-$.
The jump intensity is fixed at $\lam = 1.0$ and we vary $m = \{-2.0,0.0,2.0 \}$ corresponding to the dotted, dashed and solid lines, respectively.  Note that negative jumps (dotted line, $m=-2.0$) raises the value of $g$ at all points relative to no jumps (dashed line, $m=0.0$), whereas positive jumps (solid line, $m=2.0$) lowers the value of $g$ relative to no jumps. 
\emph{Right}:
We examine the effect of the jump intensity $\lam$ when $g$ is computed using both $u=u_+$ and $u=u_-$.
The jump size is fixed at $m = -2.0$ and we vary $\lam = \{1.0,2.0,3.0 \}$ corresponding to the dotted, dashed and solid lines, respectively.
As the jump intensity increases, so does the value of $g$ at all points.  Had jumps been upward, we would have seen $g$ decreasing as the jump intensity increased.
In all four plots the time to maturity is fixed at $T = 0.25$ years.
}
\label{fig:VarSwap}
\end{figure}

\begin{figure}
\centering
\begin{tabular}{c|c}
Effect of jump size & Effect of jump intensity \\
\includegraphics[width=0.45\textwidth]{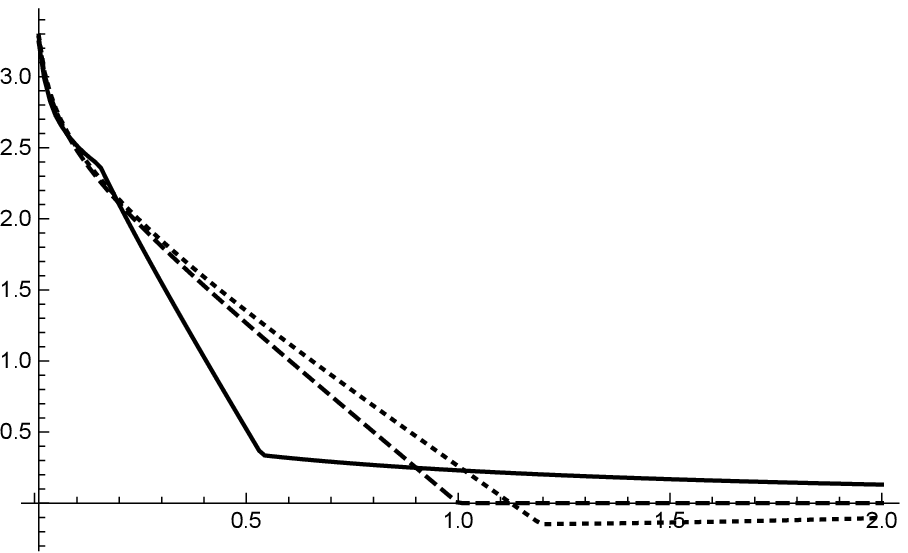} &
\includegraphics[width=0.45\textwidth]{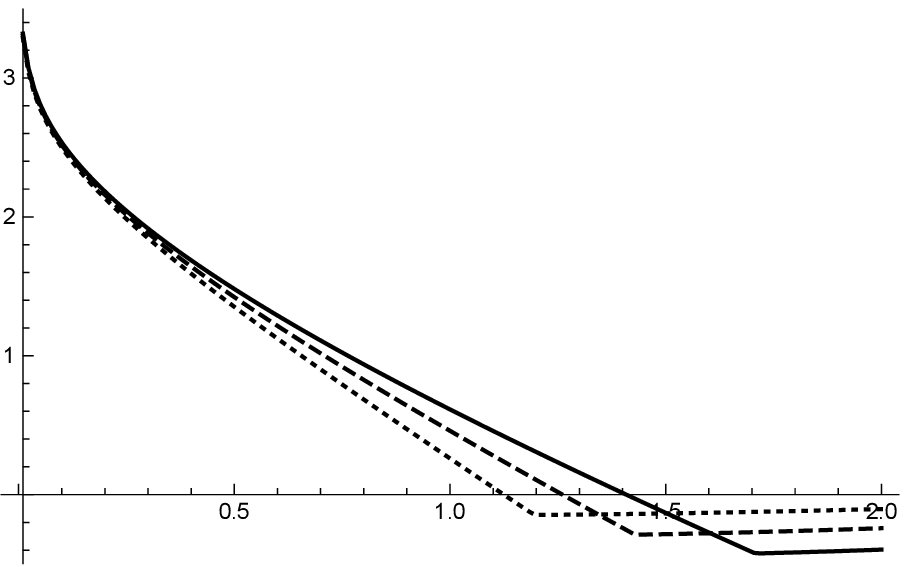} \\
$u=u_+$ & $u=u_+$ \\
\includegraphics[width=0.45\textwidth]{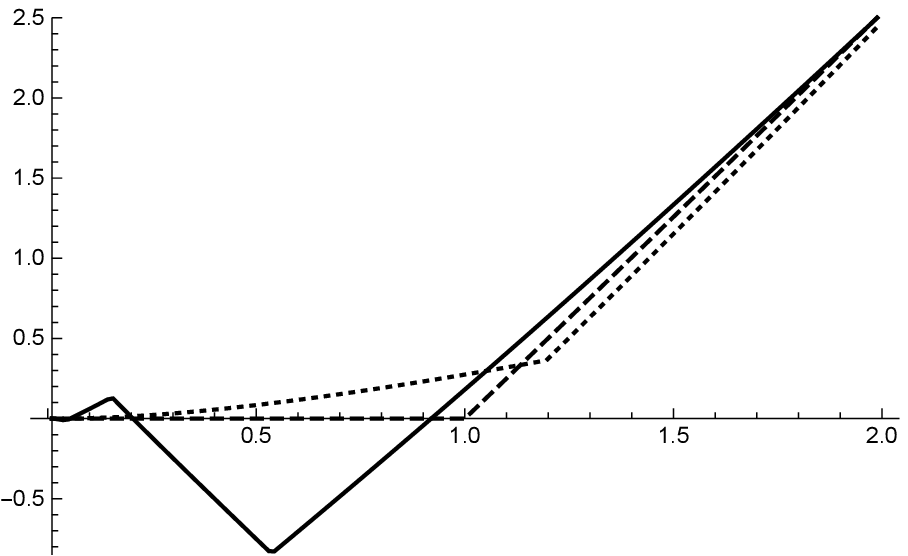} &
\includegraphics[width=0.45\textwidth]{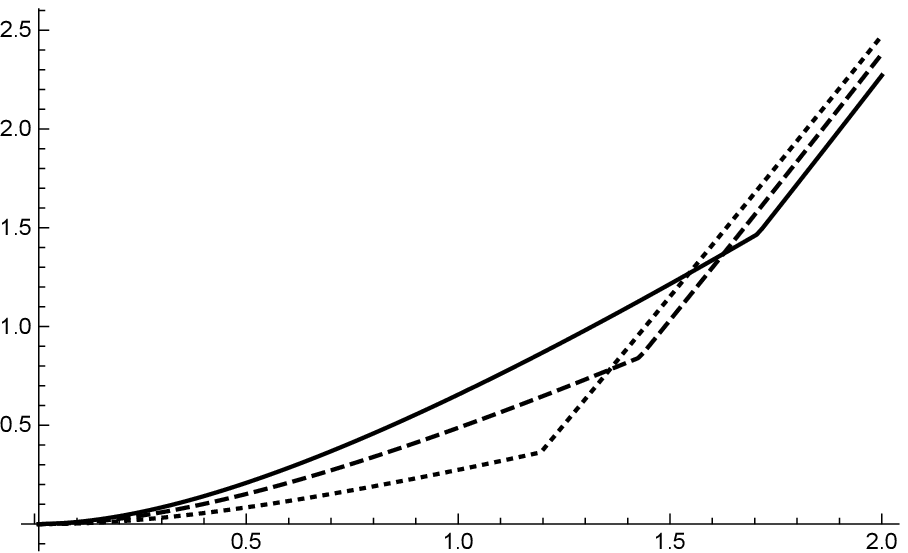} \\
$u=u_-$ & $u=u_-$ 
\end{tabular}
\caption{
We consider a Dirac L\'evy measure $\nu(\dd z) = \lam \del_m(z)\dd z$,
a volatility swap payoff $\sqrt{[X]_T}$ and plot the function $g(\log S_T)$ the prices the volatility swap as a function of $S_T$.
\emph{Left}:
We examine the effect of the jump size $m$ both for $u=u_+$ and for $u_-$.
The jump intensity is fixed at $\lam = 1.0$ and we vary $m = \{-1.25,0.00,1.25 \}$ corresponding to the dotted, dashed and solid lines, respectively.
\emph{Right}:
We examine the effect of the jump intensity $\lam$ both for $u=u_+$ and for $u=u_-$.
The jump size is fixed at $m = -1.25$ and vary $\lam = \{1.00,2.00,3.00 \}$ corresponding to the dotted, dashed and solid lines, respectively.
In all four plots the time to maturity is fixed at $T = 0.25$ years.
}
\label{fig:VolSwap}
\end{figure}

\begin{figure}
\centering
\begin{tabular}{cc}
\includegraphics[width=0.45\textwidth]{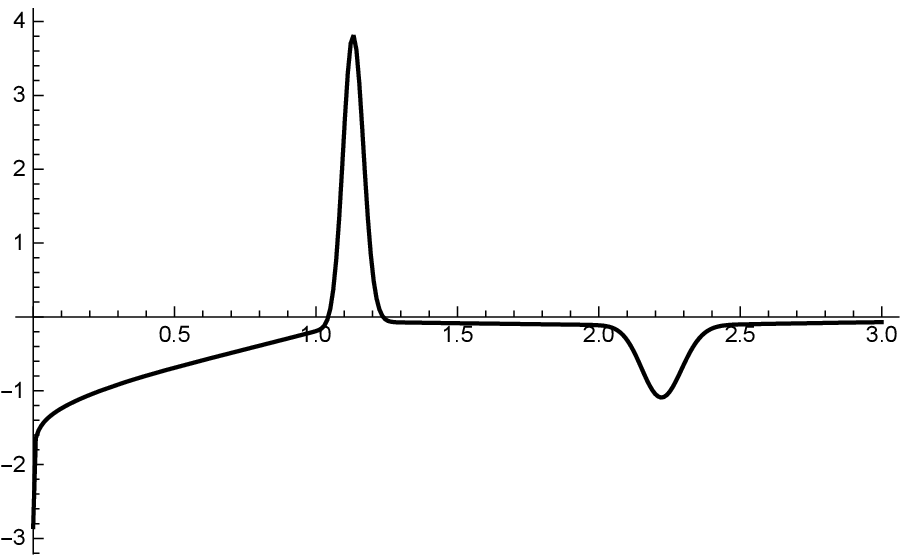} &
\includegraphics[width=0.45\textwidth]{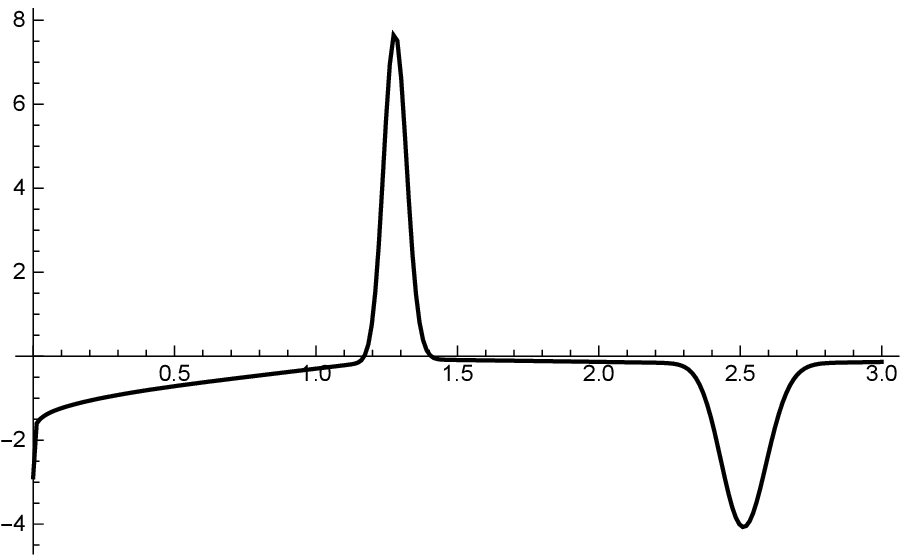} \\
$\lam = 1.0$, $m = -0.675$ & $\lam = 2.0$, $m = -0.675$ \\
\includegraphics[width=0.45\textwidth]{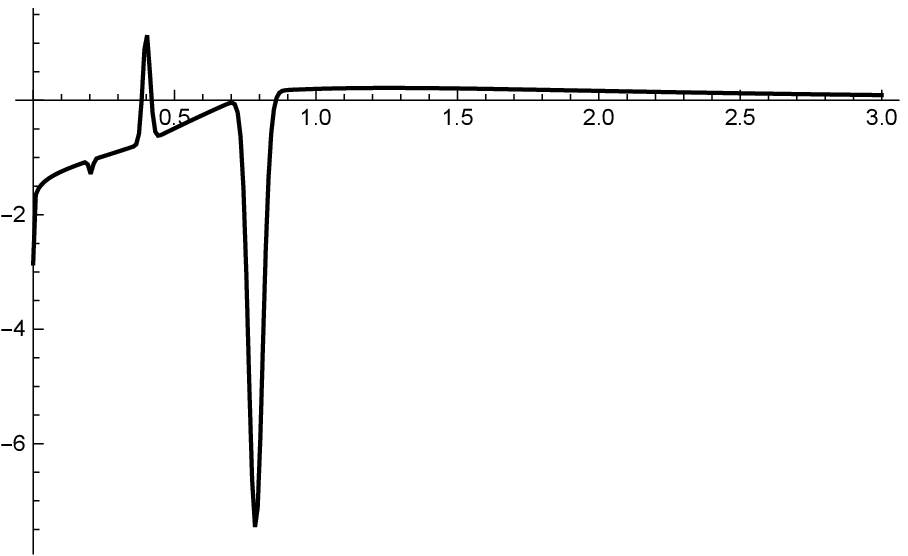} &
\includegraphics[width=0.45\textwidth]{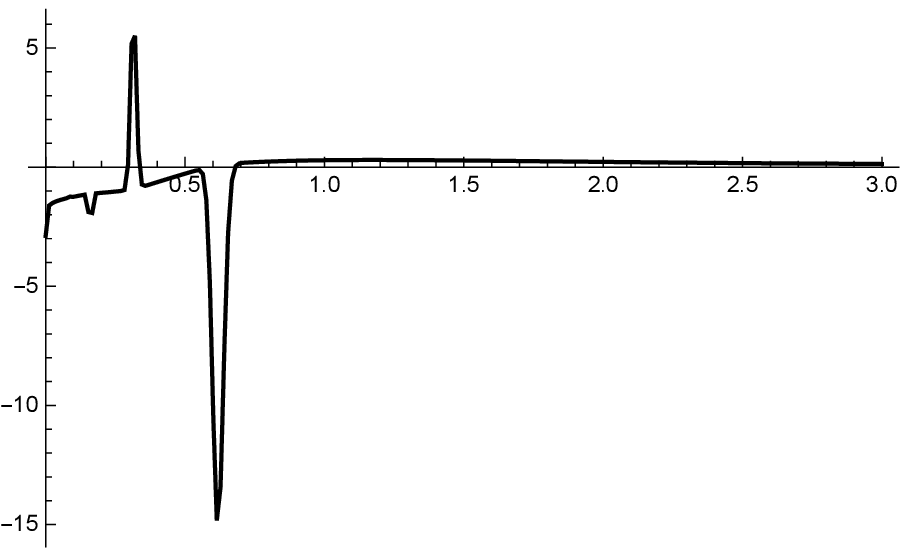} \\
$\lam = 1.0$, $m = 0.675$ & $\lam = 2.0$, $m = 0.675$ \\
\end{tabular}
\caption{We consider a Dirac L\'evy measure $\nu(\dd z) = \lam \del_m(z)\dd z$,
an approximate realized Sharpe ratio payoff $X_T/\sqrt{[X]_T+\eps}$ and plot the function $g(\log S_T)$  that prices this claim as a function of $S_T$.
In all four plots the time to maturity is fixed at $T=0.25$ years.  The parameter $\eps=0.001$ is fixed and we compute $g$ using $u=u_+$.
}
\label{fig:SharpeRatio}
\end{figure}

\begin{figure}
\centering
\begin{tabular}{cc}
\includegraphics[width=0.45\textwidth]{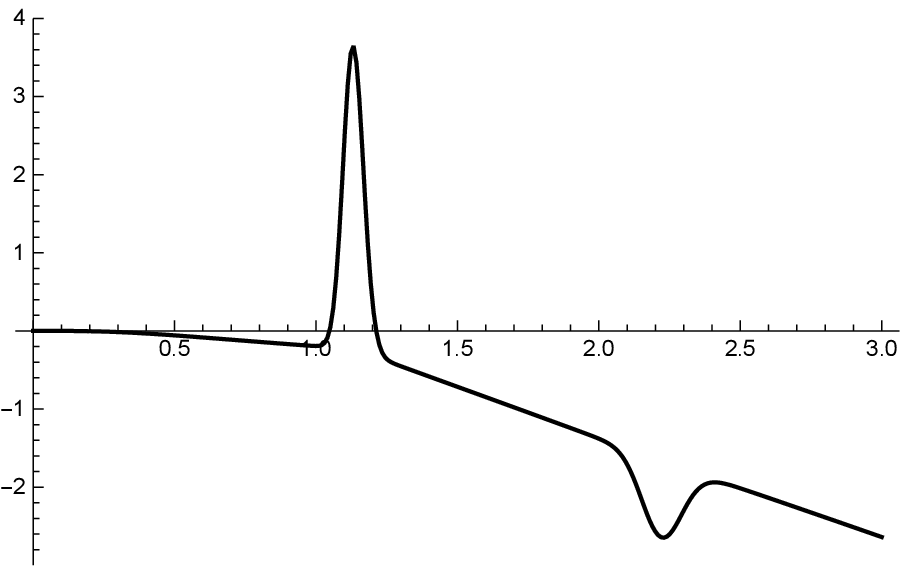} &
\includegraphics[width=0.45\textwidth]{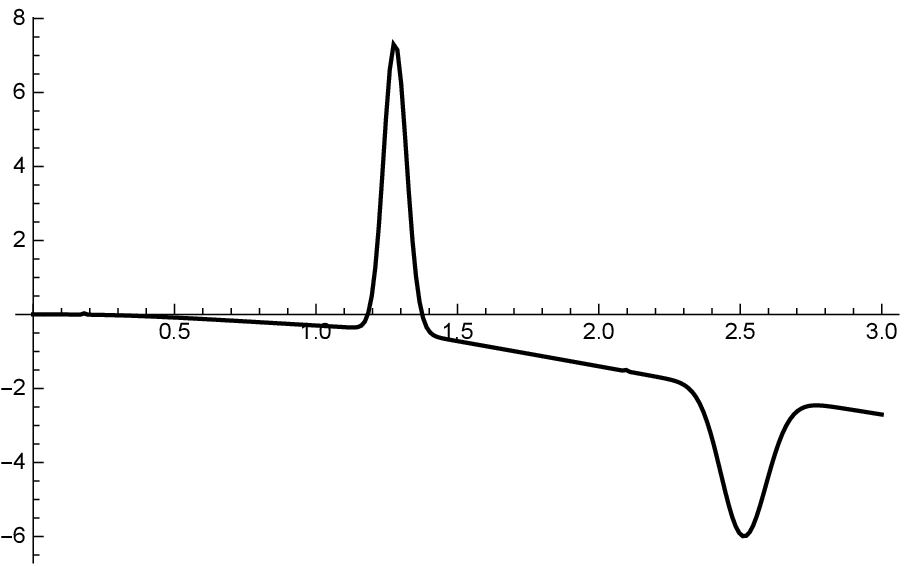} \\
$\lam = 1.0$, $m = -0.675$ & $\lam = 2.0$, $m = -0.675$ \\
\includegraphics[width=0.45\textwidth]{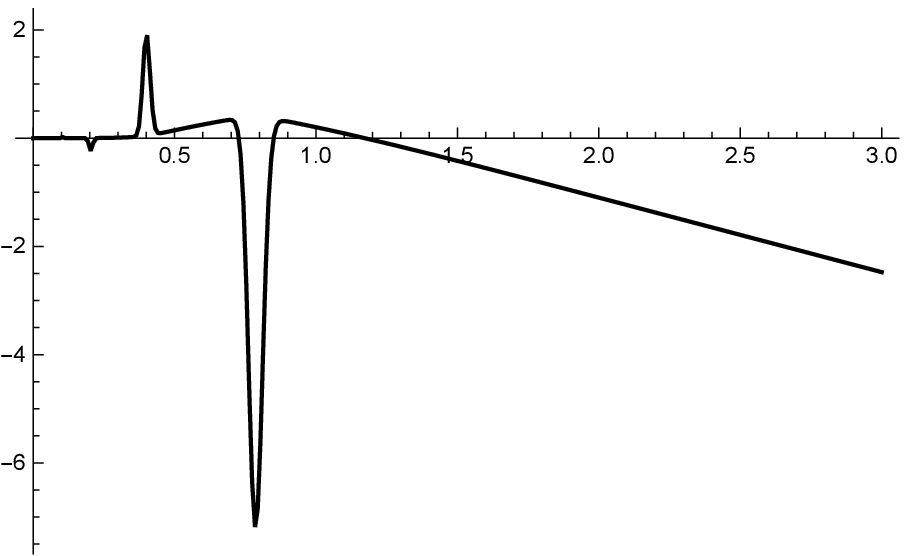} &
\includegraphics[width=0.45\textwidth]{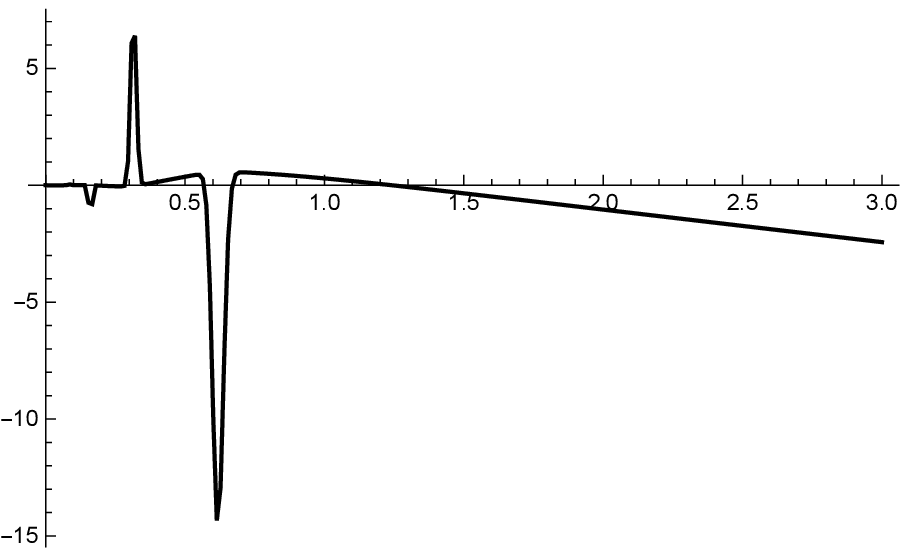} \\
$\lam = 1.0$, $m = 0.675$ & $\lam = 2.0$, $m = 0.675$ \\
\end{tabular}
\caption{We consider a Dirac L\'evy measure $\nu(\dd z) = \lam \del_m(z)\dd z$,
an approximate realized Sharpe ratio payoff $X_T/\sqrt{[X]_T+\eps}$ and plot the function $g(\log S_T)$  that prices this claim as a function of $S_T$.
In all four plots the time to maturity is fixed at $T=0.25$ years.  The parameter $\eps=0.001$ is fixed and we compute $g$ using $u=u_-$.
}
\label{fig:SharpeRatio2}
\end{figure}

\begin{figure}
\centering
\begin{tabular}{cc}
\includegraphics[width=0.45\textwidth]{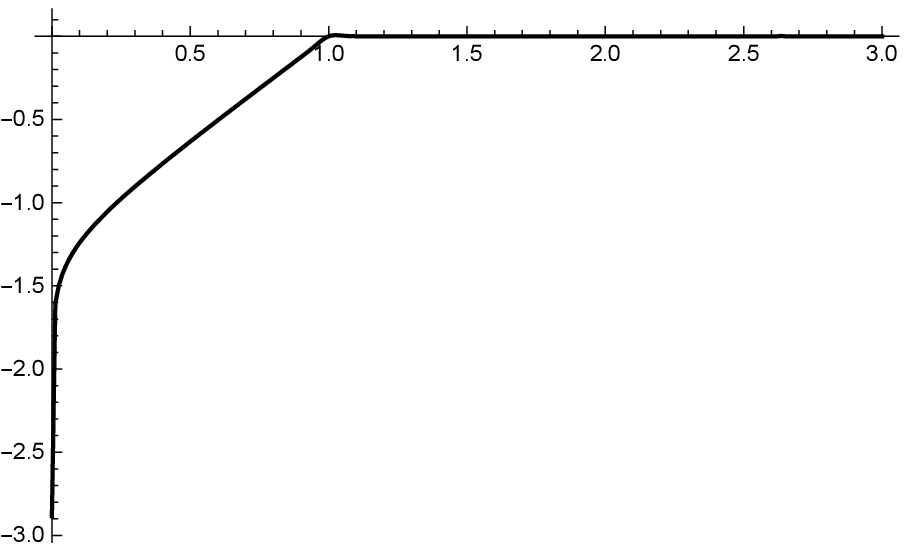} &
\includegraphics[width=0.45\textwidth]{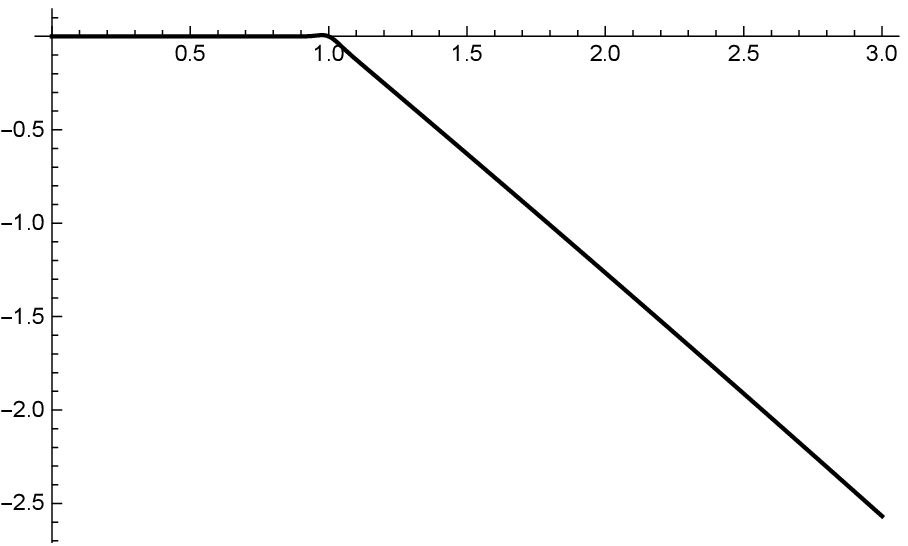} \\
$u_+$ & $u_-$ \\
\end{tabular}
\caption{We consider a L\'evy measure that is identically zero $\nu \equiv 0$,
an approximate realized Sharpe ratio payoff $X_T/\sqrt{[X]_T+\eps}$ and plot the function $g(\log S_T)$  that prices this claim as a function of $S_T$.
In both plots the time to maturity is fixed at $T=0.25$ years.  The parameter $\eps=0.001$ is fixed.
}
\label{fig:SharpeRatio3}
\end{figure}

\begin{figure}
\centering
\begin{tabular}{c|c}
$\beta > 0$ & $\beta<0$ \\
\includegraphics[width=0.45\textwidth]{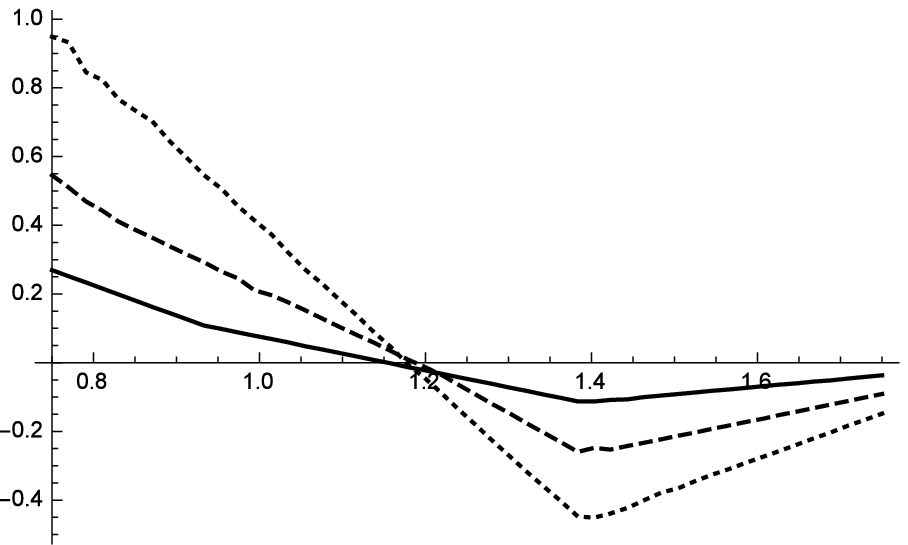} &
\includegraphics[width=0.45\textwidth]{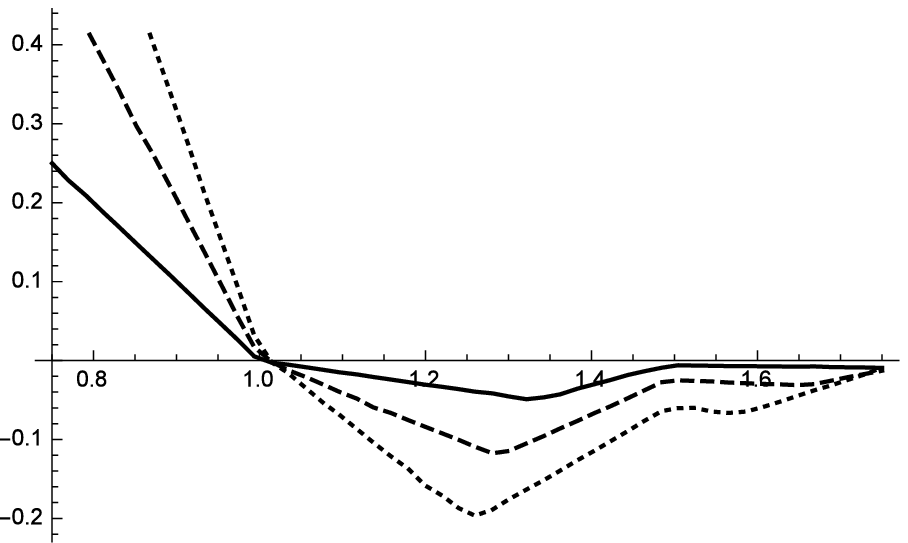} \\
$u=u_+$ & $u=u_+$ \\
\includegraphics[width=0.45\textwidth]{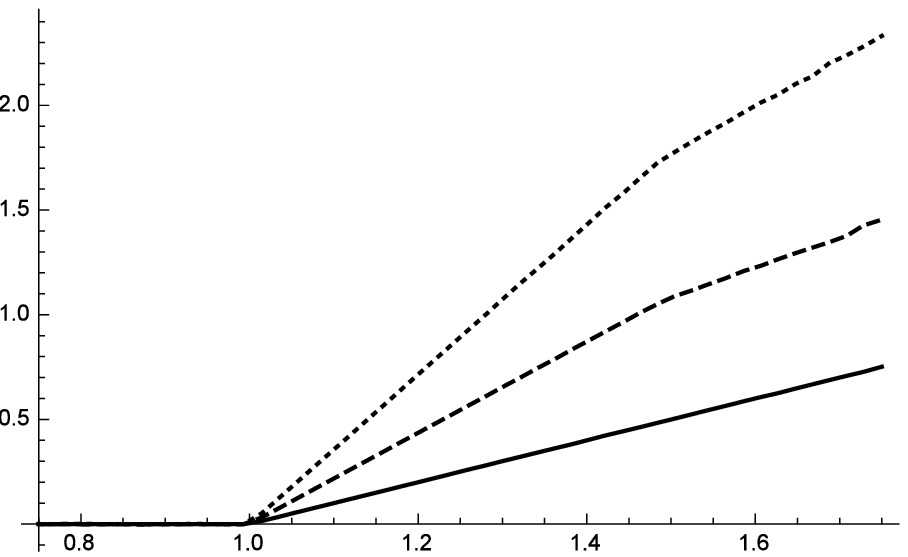} &
\includegraphics[width=0.45\textwidth]{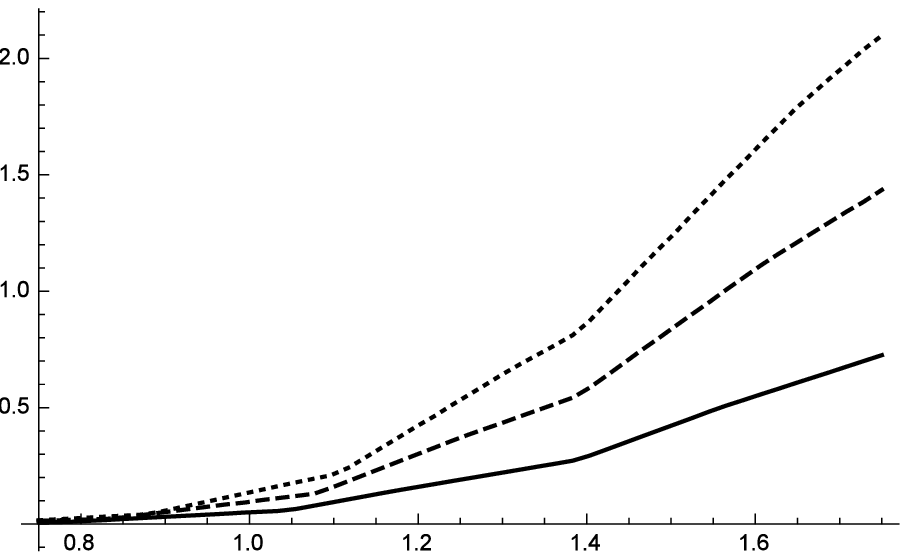} \\
$u=u_-$ & $u=u_-$ 
\end{tabular}
\caption{
Consider a call option written on an LETF $L=\ee^Y$.  In Theorem \ref{thm:LETF}, we provide an expression \eqref{eq:g.LETF} for a function $g$ that satisfies $\Eb_t (L_T - \ee^k)^+ = \Eb_t g \log S_T; X_t,Y_t )$, where $S=\ee^X$ is the underlying ETF.
In the plots above, we consider a Dirac L\'evy measure $\nu(\dd z) = \lam \del_m(z)\dd z$,
and plot $g_0(\log S_T;X_t,Y_t)$ as a function of $S_T$.
\emph{Left}:
For both $u=u_+$ and $u=u_-$, we consider positive leverage ratios $\beta = \{1,2,3\}$, corresponding to the solid, dashed, and dotted lines, respectively.
\emph{Right}:
For both $u=u_+$ and $u=u_-$, we consider negative leverage ratios $\beta = \{-1,-2,-3\}$, corresponding to the solid, dashed, and dotted lines, respectively.
In all four plots the following parameters are fixed $T = 0.25$ years, $X_0 = 0$, $Y_0=0$, $m=-0.4$, $\lam = 2.0$ and $k = 0$.
With $m$ as given, inequality \eqref{eq:assumption2} is satisfied for all six values of $\beta$.
Note that when $\beta = 1$, we have $L = S$.  Not surprisingly, when $u = u_-$, it appears that $g(\log S_T;X_t,Y_t) = (S_T - \ee^k)^+$ (solid line in the lower left plot).
}
\label{fig:LETF}
\end{figure}

\end{document}